%% file: main.tex
\documentclass[letterpaper,twocolumn,10pt]{article}
\input{headers}
\input{macros}

\begin{document}

\date{}

\title{\Large \bf Sequential Auditing for f-Differential Privacy}

\author{
     {\rm Tim Kutta}$^{1}$\thanks{Corresponding author: \texttt{tim.kutta@math.au.dk}}, 
     {\rm Martin Dunsche}$^{2}$, 
     {\rm Yu Wei}$^{3}$, 
     {\rm Vassilis Zikas}$^{3}$ \\
     \\
     $^1$Aarhus University \\
     $^2$Ruhr University Bochum\\
     $^3$Georgia Institute of Technology\\
}

\maketitle

\input{00-abstract}

\input{01-introduction}

\input{02-background}
\input{03-methodology}

\input{04-experiments}

\section*{Acknowledgments}

Tim Kutta's work has been supported by the AUFF Nova grant 47222. Yu Wei and Vassilis Zikas were supported in part by NSF Award No. 2448339 and No. 2531010, Halcyon Futures via the AI Security Institute, JPMorgan Chase, the Stellar Development Foundation, and Sunday Group, Inc. Martin Dunsche was partially Funded by the Deutsche Forschungsgemeinschaft (DFG, German Research Foundation) under Germany’s Excellence Strategy - EXC 2092 CASA - 390781972

\input{05-ethical-considerations}
\input{06-open-science}

\printbibliography

\appendix

\input{100-proofs-and-technical-details}

\input{101-additional-algorithms}
\input{102-theory-adaptive-classifiers}

\input{103-additional-simulations}
\input{104-full-version-supplementary-material}

\end{document}

%% file: headers.tex
\usepackage{usenix}

\usepackage{tikz}
\usepackage{amssymb}
\usepackage{amsmath}
\usepackage{float}

\usepackage{booktabs}
\usepackage{multirow}

\usepackage{algorithm}
\usepackage{algpseudocode} 
\usepackage{amsthm}
\usepackage{caption}
\usepackage{subcaption}
\usepackage{xspace}
\usepackage[available]{usenixbadges}

%% file: macros.tex
\newcommand{\printbibliography}{%
  \bibliography{reference}
}

\theoremstyle{definition}
\newtheorem{theo}{Theorem}[section]
\newtheorem{lemma}[theo]{Lemma}

\newtheorem{definition}[theo]{Definition}
\newtheorem{example}[theo]{Example}
\newtheorem{rem}[theo]{Remark}
\newtheorem{ass}[theo]{Assumption}

\DeclareMathOperator*{\argmax}{arg\,max}
\DeclareMathOperator*{\argmin}{arg\,min}

\newcommand{\vlineref}[1]{\ensuremath{\langle #1\rangle}\xspace}

\newcommand{\ve}{\varepsilon}

\newcommand{\Teins}{\hat T_\alpha}
\newcommand{\Tzwei}{\hat T_\beta}
\newcommand{\claim}{\ensuremath{ }}

%% file: 00-abstract.tex
\begin{abstract}
We present new auditors to assess Differential Privacy (DP) of an algorithm based on output samples. Such empirical auditors are common to check for algorithmic correctness and implementation bugs. Most existing auditors are batch-based or targeted toward the traditional notion of $(\varepsilon,\delta)$-DP; typically both. In this work, we shift the focus to the highly expressive privacy concept of $f$-DP, in which the entire privacy behavior is captured by a single tradeoff curve. Our auditors detect violations across the full privacy spectrum with statistical significance guarantees, which are supported by theory and simulations. Most importantly, and in contrast to prior work, our auditors do not require a user-specified sample size as an input.
Rather, they adaptively determine a near-optimal number of samples needed to reach a decision, thereby avoiding the excessively large sample sizes common in many auditing studies. This reduction in sampling cost becomes especially beneficial for expensive training procedures such as DP-SGD. Our method supports both whitebox and blackbox settings and can also be executed in one-run frameworks.
\end{abstract}

%% file: 01-introduction.tex
\section{Introduction}
As data is becoming more valuable than gold, privacy is an increasingly pressing concern: how can we optimally utilize potentially sensitive data without compromising individuals’ privacy? Differential Privacy (DP)~\cite{dwork2014algorithmic} has emerged as a broadly accepted notion that hits a sweet spot between privacy and statistical utility.
The idea behind DP is simple: The answer to a (privatized) query should not severely compromise the privacy of any individual record in the original dataset. A bit more formally, a mechanism satisfies DP if changing any single record has only a ``small'' effect on the outcome of the (privatized) query. Measuring the impact on outcomes naturally requires a notion of distance, which in classical DP is captured by two parameters $\varepsilon$ and $\delta$. 

Following the initial study of $(\varepsilon,\delta)$-DP---and fueled by the adoption of DP as an acceptable privacy notion by academia, industry and even governments~\cite{Dwork2006_approx,erlingsson2014rappor,Abowd2018,abowd2022topdown,Bolin2017,apple2017learningprivacy}---a number of DP variants have emerged as alternatives to tackle limitations of the $(\varepsilon,\delta)$-DP notion. Examples include divergence-based relaxations of DP (e.g., Rényi-DP~\cite{mironov:2017}, zero-concentrated DP~\cite{BunS16}), $f$-DP~\cite{dong:roth:su:2022}, and instance-based privacy concepts (e.g., DDP~\cite{BassilyGKS13}, PAC Advantage Privacy~\cite{XiaoD23}). 

Many initial results in DP considered feasibility, optimal utility and privacy tradeoffs, composability, and comparison between different privacy notions~\cite{dwork2014algorithmic,mcsherry2007mechanism,hardt2010geometry,dwork2010boosting,Vadhan17,kamath2020primer,cummings2024advancing}. These results were typically either theoretical in nature---designing mechanisms and proving their privacy/utility---or they provided proof-of-concept implementations of such ideas and designs. The adoption of DP, however, for privacy-sensitive applications has put the notion under the scrutiny of the security research community, raising an additional question: How can we verify that (a given implementation of) a mechanism does not have a bug that makes it leak more than its specification? 

\paragraph{DP Auditing}
The above question ignited a rich literature in security that aimed at testing/auditing whether implementations of common DP mechanisms satisfy their theoretical privacy guarantees. This led to a growing interest in empirical privacy auditing with results that range from using classical bug detection approaches---e.g., inspired by code obfuscation or worst-case instance discovery techniques---to deploying statistical and machine learning tools to compare neighboring distributions with respect to privacy-relevant events~\cite{DP-Finder,DP-Sniper,steinke2023privacy,mahloujifar2024auditing}.

The auditing toolbox is quite diverse, yet DP auditors can be broadly separated into two classes: {\em whitebox} auditors assume that the analyst knows specifics of the audited mechanism (e.g., parameters of the noising distribution), or even that full access to the concrete implementation is granted~\cite{Liu2019,DP-Finder,CheckDP,nasr2023tight}. In contrast, {blackbox} auditors only require blackbox (input/output) access to the DP mechanism.\footnote{In some DP auditing work, blackbox still assumes the output distribution is known (e.g., audited mechanism is DP-SGD, hence corresponds to (subsampled) Gaussian tradeoff; likewise, LiRA-style attacks often model the distribution of losses as Gaussian~\cite{nasr2023tight,carlini2022membership}. Our auditing methods rely on much weaker assumptions.}

Naturally, the latter class poses more challenges, as tools from bug-finding methodologies cannot be used there and one needs to rely almost exclusively on statistical and machine learning-based methods as discussed below. 
\paragraph{f-DP Auditing}
Despite the increasing literature on DP auditing, it is fair to say that most of this literature focuses on testing (implementations of) mechanisms for the classical $(\varepsilon,\delta)$-DP notion. As discussed below, this puts several limitations, especially in the realm of blackbox and/or optimally efficient auditing. Notwithstanding, a number of recent works~\cite{dong:roth:su:2022,WangSuYeShokri2023,mahloujifar2024auditing,AwanRamasethu2024} have demonstrated that \emph{$f$-Differential Privacy}, in short, $f$-DP, can be more expressive than $(\varepsilon,\delta)$-DP, while being more friendly to statistical methods for DP auditing. As such, $f$-DP has emerged as an excellent candidate for efficiently certifiable implementations in both whitebox and blackbox settings~\cite{nasr2023tight}. In fact, as discussed in \cite{dong:roth:su:2022} $f$-DP is a particularly suitable privacy notion for methods like {\em differentially private stochastic gradient descent} (in short, DP-SGD) that are increasingly popular in machine learning. 

In a nutshell, $f$-DP connects the closeness of a (DP) mechanism's output distributions on neighboring datasets to the accuracy of (optimal) classifiers.
If classifiers can reliably distinguish the output distributions, changes to the databases have a strong impact and
privacy is low
 (we refer to Section~\ref{sec:2.1} for a detailed definition). In classical statistics, the theory of optimal binary classifiers is well studied and well understood, providing a diverse toolbox for quantifying distinguishability between distributions. Thus it becomes evident that $f$-DP is far more amenable to tools from statistics and classical machine learning than $(\varepsilon,\delta)$-DP. 

Yet, from an auditing perspective the most crucial advantage of $f$-DP is the following:
The notion of  $(\varepsilon, \delta)$-DP offers a very ``localized'' view of the algorithm's privacy as it does not make any statements about the potential trade-offs between $\varepsilon$ and $\delta$. Instead, $f$-DP provides a privacy notion that, intuitively, charts the entire $(\varepsilon, \delta)$ spectrum. This allows detection of privacy violations at a range of different scales simultaneously, and thereby yields much more powerful auditing procedures.
\paragraph{Sequential f-DP Auditing}
Motivated by the above, our work aims to push the boundaries of $f$-DP auditing both in the blackbox and the whitebox realm, by adapting powerful tools from statistics and machine learning. More concretely, we move from fixed-sample (fixed-batch) auditing procedures to a sequential auditor that adaptively determines how many samples are needed, with no major computational overhead. This approach promises substantially higher sample-efficiency and is supposed to complement existing auditing approaches by providing a generic sequential wrapper for any $f$-DP auditor. In the following, we outline the main ideas of our work and how they advance the state of the art.

Standard privacy auditors/tests take a fixed, predetermined number of outputs (from the mechanism), say $n$, and given these outputs run some statistical hypothesis test to assess privacy, e.g., \cite{Eureka, lowerbounds,askingeneral}. Mathematically speaking, such methods can be proved to be optimal ---by using optimal testing theory--- but in practice they suffer from the "fixed sample size"-issue, which is a common bottleneck in testing: the ideal sample size (just large enough for detection, but not larger) of any statistical/hypothesis test is strongly case dependent (we refer to Section \ref{sec:2:2} for a detailed discussion). If the user chooses $n$ too small, they risk inconclusive or false auditing results. If they choose $n$ too large, they waste computational resources. Naturally, experiments and benchmarks err on the side of caution (to eliminate inconclusive or false results) and use extremely large sample sizes, often two orders of magnitude larger than necessary. 
Such oversampling may be harmless in foundational studies, but it naturally deters adoption, especially for auditing expensive algorithms such as DP-SGD.

Countering this trend toward ever more  costly auditors, a recent strand of research aims to make DP auditing less expensive, among them  \cite{steinke2023privacy,nasr2023tight,mahloujifar2024auditing}, which focus on complex machine learning techniques. While their techniques already cut the computational cost dramatically, potentially down to training the algorithm only once, mostly they rely on evaluating canaries (see, e.g., \cite{steinke2023privacy}). 
As they note, while one would ideally score all $n$ examples (i.e., set $m=n$ as number of canaries), doing so may be prohibitive and one often settles for $m<n$:
\begin{quote}
\textit{To get the strongest auditing results we would set $m = n$, but we usually want to set $m < n$. For example, computing the score of all $n$ examples may be computationally prohibitive, so we only compute the scores of $m$ examples.}
\end{quote}
This gap suggests that sequential, anytime-valid procedures are useful even in the one-run setting: they can adaptively decide \emph{how many} canaries to score without losing rigorous error control.

Finally, the very recent, parallel, and independent work \cite{gonzalez2025sequentially} also made use of sequential tests for DP auditing. We only became aware of that work during the later stages of our paper. Notwithstanding, in the following we discuss the differences in scope and methodology:  Their work from \cite{gonzalez2025sequentially} focuses on the $(\varepsilon,\delta)$-DP notion, which is distinct from our auditing of $f$-DP claims. This provides both qualitatively and quantitatively different results and we provide a detailed comparison in Section~\ref{sec:4}. In a nutshell, on the additive-noise benchmarks used by~\cite{gonzalez2025sequentially}, our auditor rejects buggy mechanisms with substantially fewer samples in the small-$\varepsilon$ regimes they report, using roughly $10\%$--$25\%$ of the paper-reported samples. Under a finite sample, it also continues to detect the buggy variants over a wider range of larger $\varepsilon$. This comparison, however, should not be read as a universal dominance guarantee: the two methods audit different privacy objects, rely on different statistical witnesses, and their sample efficiency also depends on the concrete construction instantiating the approaches.

We summarize the main contributions of this paper:
\begin{itemize}
\item [1)] We provide the first sequential $f$-DP auditor.
   \item[2)] We provide mathematical proofs for the performance of our auditor: In particular,
   \begin{itemize}
       \item[a)] we show that our auditor holds a user-determined significance level (false rejection rate) of $\gamma \in (0,1)$ (Theorem \ref{theo:main})~,
       \item[b)] and detects violations with minimal sample size, up to a logarithmic-factor (Theorem \ref{theo:main} and Remark \ref{rem:theo}). 
   \end{itemize}
   \item[3)] Subroutines of our procedure involve a new tuning approach for optimal classifiers for DP auditing.
   Our new strategy even improves performance when used for fixed-batch $f$-DP auditing. 
   \item[4)] We demonstrate the effectiveness of our approach on synthetic and real-world experiments, including single run auditors.
   \item[5)] We provide an open-source implementation.\footnote{\url{https://github.com/martindunsche/sequential_fdp_auditing}}
   \end{itemize}

\paragraph{Paper structure.}
We briefly outline the structure of this work. In Section \ref{sec:2}, we recall the key concepts underlying our methodology. Section \ref{sec:3} presents our main theoretical results, with the central contribution stated in Theorem \ref{theo:main}. In Section \ref{sec:4}, we experimentally investigate the performance of our approach: first using standard ground-truth settings, then comparing against the method of \cite{gonzalez2025sequentially} using their favorable setup, and finally examining a real-world example in the spirit of \cite{nasr2023tight} and \cite{gonzalez2025sequentially}. Improvements, relative to the existing $f$-DP auditors in \cite{askingeneral} are demonstrated in the Appendix. There, we also provide mathematical details and proofs for our new procedures.

%% file: 02-background.tex
\section{Background}\label{sec:2}

\subsection{$f$-Differential Privacy} \label{sec:2.1}
Differential Privacy (DP) is a concept to describe information leakage from a data processing mechanism $\mathcal{M}$. Qualitatively, privacy is high when for any pair of adjacent databases $D,D'$ the output distributions $P \sim \mathcal M(D)$ and $ Q \sim \mathcal M(D')$ are nearly indistinguishable. This ``near indistinguishability'' has been formalized in different ways (see e.g. \cite{dwork2014algorithmic, mironov:2017, dong:roth:su:2022} among others) and here we adopt the popular description of $f$-DP introduced in \cite{dong:roth:su:2022}. Intuitively, $f$-DP states that $P,Q$ are hard to distinguish when there is no reliable way to classify whether data were generated by $P$ or $Q$. 
Formally, suppose that upon making an observation $X$ a user wants to decide which distribution was used to generate $X$, i.e. 
\begin{align} \label{e:classtask}
     X \sim P \qquad \textnormal{versus} \qquad 
     X \sim Q.
\end{align}
The decision is encoded by a binary classifier $\phi$ where $\phi(X)=0$ means deciding that $X \sim P$ and $\phi(X)=1$ means deciding that $X \sim Q$. 
No matter how good the classifier is, there is always room for error. Namely, one could wrongly classify $X \sim P$ as being generated by $Q$, and conversely $X \sim Q$ as being generated by $P$. We can express the misclassification probabilities as
\begin{align} \label{e:class:errors}
    \alpha_\phi := \mathbb{E}_{X \sim P} [\phi(X)] \quad \textnormal{and} \quad 
    \beta_\phi := 1-\mathbb{E}_{X \sim Q} [\phi(X)].
\end{align}
We can now assess the discrepancy between the distributions $P$ and $Q$ by virtue of the classification errors. Roughly speaking, if there exists a classifier $\phi$ that can distinguish data drawn from $P$ versus $Q$ with small error probabilities, then the distributions $P$ and $Q$ must differ substantially. Ideally, we would like to use the best possible classifier $\phi$ for this task. However, it turns out that there is no single best classifier that minimizes $\alpha_\phi$ and $ \beta_\phi$ simultaneously, because decreasing one error generally increases the other one. Rather, for any value $\alpha \in [0,1]$ there exists a best (possibly randomized) classifier $\phi^*$ with smallest attainable error $\beta_{\phi^*}$ under the side-constraint $\alpha_{\phi^*} \le \alpha$. This motivates the use of an entire \textit{tradeoff function} that is a collection of all those pairs $(\alpha_{\phi^*}, \beta_{\phi^*})$. 

\begin{definition}[Tradeoff function] \label{def:trad}
    For any two distributions $P,Q$ defined on the same space, the tradeoff function is defined as 
    \begin{equation*}
        T(\alpha):= \inf\{\beta_{\phi}: \phi \text{ is a classifier with } \alpha_\phi \leq \alpha\}, \, \alpha \in [0,1]~.
    \end{equation*}
\end{definition}
$T$ is a measure for the distance of output distributions $P,Q$. Accordingly, 
it can be used for privacy quantification. The corresponding notion of privacy is \textit{$f$-DP}, where $f:[0,1]\to [0,1]$ is a convex and non-increasing function, with $f(0)=1$ and $f(1)=0$.

\begin{definition}[$f$-DP]\label{def:fdp}
    A mechanism $\mathcal M$ satisfies $f$-DP if for every pair of neighboring datasets $D,D'$, the corresponding tradeoff function $T$ satisfies $T(\alpha)\geq f(\alpha)$ for all $\alpha\in [0,1]$. 
\end{definition}

\subsection{Hypothesis testing and power analysis} \label{sec:2:2}

Next, we consider the concept of statistical testing. Testing is related to classification in that a binary decision has to be reached based on observed data. Yet, there are two important differences: First, a decision is not based on a single observation $X$ but rather on an entire sample $X_1,...,X_n$. Second, the decision process involves a default belief, making the problem inherently asymmetric: one hypothesis is assumed to hold unless the data provides sufficient evidence to reject it. 
Abstractly, we can write down the two competing, mutually exclusive hypotheses as follows:
\begin{align}\label{eq:hypotheses_setup}
H_0&: \textnormal{the baseline model}, 
\\ 
H_1&: \textnormal{a collection of alternative models}.\nonumber
\end{align}

The baseline \(H_0\) represents the default belief and is rejected in favor of \(H_1\) only if observed data provide sufficiently strong evidence against it. For illustration, consider in the context of DP-auditing a postulated tradeoff curve $f$. 
Then, the null hypothesis \(H_0\) states that a mechanism \(\mathcal{M}\)  satisfies $f$-DP, while \(H_1\) states that it does not. Accordingly, if $H_0$ is rejected, we have high confidence that $\mathcal{M}$ does not satisfy $f$-DP. 

As mentioned above, the statistical test reaches its decision between $H_0$ and $H_1$ based on a data sample \(X_1, \ldots, X_n\). The ability to tell apart $H_1$ from $H_0$ depends on two main ingredients: First, the sample size $n$, where larger samples include more information (leading to a more informed decision) but are also more costly to collect. Second, the effect size, which is the deviation between the true data generating process and the model $H_0$. Recall that $H_1$ (in contrast to $H_0$) is not a single model - it is a collection of many possible models. Some of these models are similar to the baseline model $H_0$ and some are very different. Large discrepancies (large effect sizes) are naturally easier to identify than subtle ones (small effect sizes). 
In contrast to the sample size, the analyst has no influence on the effect size. Indeed, the effect size is unknown to them, since the true data generating process is unknown. This turns out to be a major problem in hypothesis testing.

The performance of a statistical test is captured by two fundamental probabilities: \\[-6ex]
\begin{align}\label{e:sign}
{} \\
\text{significance level (false rejection rate):} &\,\,\, \mathbb{P}(\text{reject } H_0 \mid H_0), \nonumber\\
\text{power (correct rejection rate):}  &\,\,\, \mathbb{P}(\text{reject } H_0 \mid H_1).\nonumber
\end{align}
Across empirical sciences, it is common to fix the significance level at some low value, typically $5\%$. Under this constraint,
the main aim becomes attaining a certain high level of power (say $80\%$). Reaching a desired power is non-trivial because
power is interconnected with both the sample size and the effect size; see Figure \ref{fig:power_triangle}. 
More precisely, larger samples and larger effect sizes both lead to more power.
In an idealized setting where the effect size were known to the analyst, they could analytically determine the minimal required sample size, denoted by \(n_{\min}\), to attain a desired power. However, in reality the effect size is unknown. To hedge against this uncertainty, analysts often adopt a conservative strategy and employ substantially larger samples than theoretically necessary, such that \(n \gg n_{\min}\).

In DP auditing, it is not uncommon to encounter sample sizes that exceed the theoretical minimum by two to three orders of magnitude, placing a big computational burden on the analyst. The impossibility of selecting the optimal sample size before testing motivates an adaptive choice that is made during testing.  This strategy leads to close-to-optimal results and is the subject of sequential testing.
 
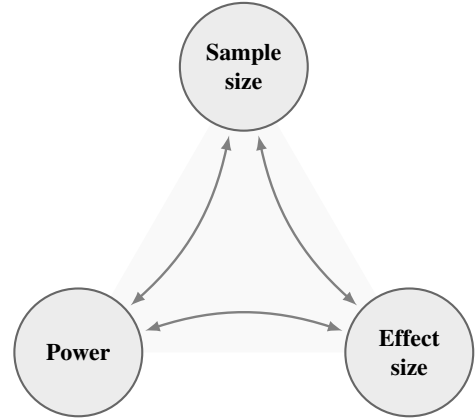
\begin{figure}[ht]
  \centering
  \begin{tikzpicture}[
    scale=1.6,
    thick,
    >=latex,
    font=\sffamily,
    every node/.style={font=\small},
    concept/.style={
      circle,
      draw=black!60,
      fill=gray!15,
      font=\bfseries\small,
      align=center,
      inner sep=0pt,
      minimum size=17mm
    },
    arrowstyle/.style={
      <->,
      color=black!50,
      line width=0.9pt,
      shorten >=2pt,
      shorten <=2pt,
      bend left=18
    }
  ]

    \coordinate (A) at (0,0);
    \coordinate (B) at (2.2,0);
    \coordinate (C) at (1.1,1.9);

    \fill[gray!5] (A)--(B)--(C)--cycle;

    \node[concept] (Power) at (A) {Power};
    \node[concept] (Effect) at (B) {Effect\\size};
    \node[concept] (Sample) at (C) {Sample\\size};

    \draw[arrowstyle] (Power) to (Effect);
    \draw[arrowstyle] (Effect) to (Sample);
    \draw[arrowstyle] (Sample) to (Power);

  \end{tikzpicture}
  \caption{  \label{fig:power_triangle}  Power, sample size, and effect size form a triad in which any two quantities determine the third. Thus, an optimal sample size can only be computed once both the desired power and effect size are specified.}
  \vspace{-0.8cm}
\end{figure}

\subsection{Sequential testing} \label{sec:2:3}

Consider the setup from the previous section. To avoid the use of excessively large samples for hypothesis testing, sequential procedures have been developed that adaptively minimize the sample size during the testing process. The core idea is simple: In practice, an analyst rarely collects all data at once; rather, observations arrive sequentially. First, $X_1$ is observed, then, $(X_1,X_2)$, and so on. By monitoring the accumulating data in real time, the analyst may already obtain sufficient evidence to reject $H_0$ in favor of $H_1$ after observing only a small prefix $(X_1,\dots,X_k)$. Here, the time point $k$ at which $H_0$ is first rejected is a random variable, but it typically concentrates near the theoretical minimum $n_{\min}$.

Designing valid sequential procedures is non-trivial. An ad-hoc approach might be to repeatedly apply a standard fixed-sample test, every time a new observation is made. Unfortunately, this naive strategy cannot work, because the false rejection rate grows uncontrollably. We illustrate this point shortly with the next example.
\begin{example}
    Consider data from a normal distribution $X_i \sim \mathcal{N}(\mu,1)$, with mean $\mu \in \mathbb{R}$. The hypotheses under investigation are $H_0: \mu=0$ vs. $H_1:\mu\neq 0$. To sequentially check $H_0$ vs. $H_1$, the analyst runs a $t$-test repeatedly whenever a new observation is made and stops after the first rejection. Each $t$-test individually has a significance level of $5\%$.
    Now, let us assume that the true data are indeed generated by the model $\mathcal{N}(0,1)$ and thus, any rejection of $H_0$ would be erroneous. We have run $1,000$ simulations to obtain empirical false rejection rates and plotted them in Figure \ref{fig:ttest}. As we can see, rather than holding the desired significance level (red horizontal line), the false rejection rate rises rapidly and is already $ \approx 24\%$ after the first $200$ datapoints. We do not display it here, but it can be mathematically proven that the false rejection rate increases ever further, converging to $100\%$ if testing goes on indefinitely. This means a false rejection is eventually guaranteed with the naive procedure.

 \begin{figure}
     \centering
     \includegraphics[width=.7\linewidth]{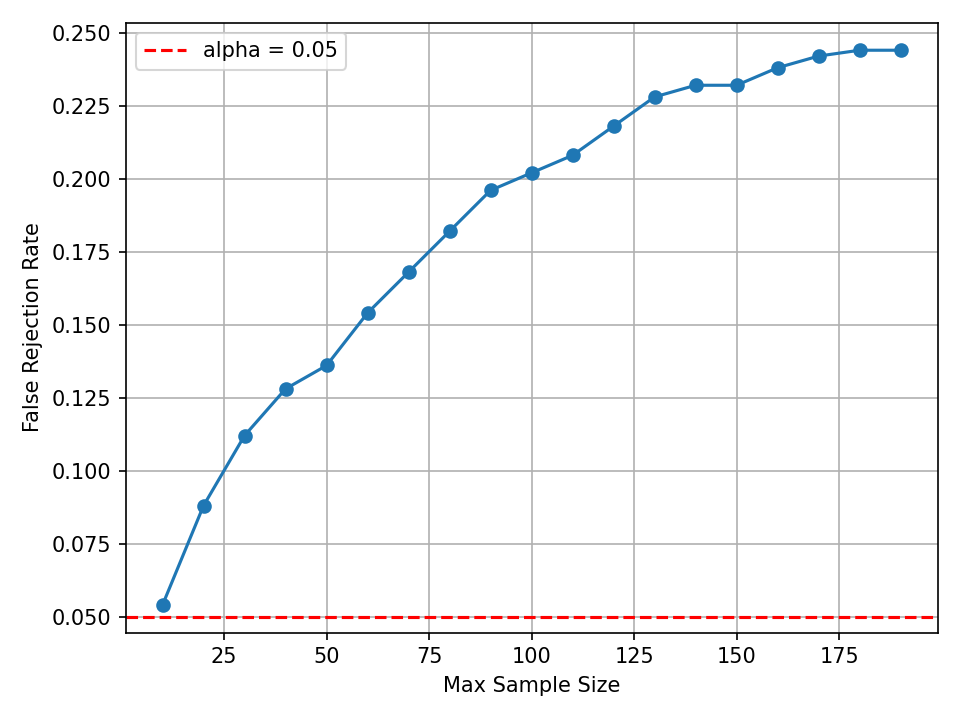}
     \caption{False rejection rates in sequential testing without proper adjustment.}
     \label{fig:ttest} 
    \vspace{-0.6cm}
\end{figure}

\end{example}

The above example illustrates the need for designated sequential tests to ensure that the significance level holds, even when testing goes on for a very long time. Rather than consisting of a  fixed-sample test that is applied over and over again, sequential tests compute an evolving statistic that can be approximated by a stochastic process. The probability of that process crossing a boundary, which is analytically tractable, then gives a closed-form for the false rejection rate of the sequential test. The development of such procedures is mathematically challenging and relies on modern insights from open-ended Gaussian approximations such as developed by \cite{berkes:liu:wu:2014}. 

%% file: 03-methodology.tex
\section{Methodology}\label{sec:3}

In this section, we present a sequential $f$-DP auditing approach that adapts the sample size (number of algorithmic runs) to the unknown level of privacy violation. Strong violations can be detected with a few samples, while subtler effects automatically trigger more sampling.

\subsection{Sequential privacy auditing}\label{sec:3.1}

We fix a pair of adjacent databases $D,D'$ and denote the corresponding output distributions of the mechanism $\mathcal M$ by $P \sim \mathcal M(D)$ and $Q \sim \mathcal M(D')$. Recall the central Definitions \ref{def:trad} and \ref{def:fdp}. In Definition \ref{def:trad}, we noticed that the discrepancy between $P$ and $Q$ is captured by the tradeoff function $T$. In Definition \ref{def:fdp}, we said that $f$-DP holds if the function $T$ lies geometrically above the function $f$. The task of an $f$-DP auditor is then to check for a candidate function $f$, whether $f$-DP holds.
This task is naturally framed as a hypothesis test (see Section~\ref{sec:2:2}), where we decide between the competing hypotheses of $f$-DP ($H_0$) and no $f$-DP ($H_1$)
\begin{align} \label{e:hyp:aud}
   & H_0: T(\alpha) \geq f_{\claim}(\alpha) \quad \forall \alpha \in [0,1], \\
   & H_1: \exists \,\, \alpha^* \in [0,1] \ \textnormal{such that}\  T(\alpha^*) < f_{\claim}(\alpha^*). \nonumber
\end{align}
To decide between $H_0$ and $H_1$, the analyst may include prior knowledge in their decision process. In an ideal whitebox scenario, the analyst already knows a powerful classifier $\phi$ to distinguish $P$ and $Q$ (see Section \ref{sec:2.1}), while in a blackbox setting only minimal knowledge is available, e.g. that $P$ and $Q$ have probability densities. Our methodology is compatible with both settings and details are discussed below. 
In either case, the analyst bases their final verdict on samples of algorithmic outputs. More precisely, they draw independently $X_1,X_2,... \sim P$ and $Y_1, Y_2,.... \sim Q$ and employ a statistical test for the hypotheses pair \eqref{e:hyp:aud}. This test should have the two properties described after Eq. \eqref{e:sign}: First, it should bound the false rejection rate by a small, user-determined number $\gamma \in (0,1)$, the significance level. Second, when $f$-DP is violated, the test should reject $H_0$ using as few samples of algorithmic outputs as possible. We therefore use a sequential testing procedure (background in Section \ref{sec:2:3}) that we describe next.

\begin{algorithm}[h]
\small
\begin{algorithmic}[1]
\Require \; \parbox[t]{\dimexpr\linewidth-\algorithmicindent-3em}{Distribution $P,Q$
Burn-in size $M$, bandwidth $h$, candidate vector  $\eta$,
refresh period $M_1$, evaluation period $M_2$, claimed curve $f_{\claim}$, critical $1-\gamma/2$ value $q_{1-\gamma/2}$, max iterations $k_{max}$.
} 
\Ensure \, Online monitoring for violations of a claimed privacy curve.

\Function{Sequential-Audit}{$M, h, \eta$, $M_1$, $M_2$, $f_{\claim}$, $q_{1-\gamma/2}$, $k_{max}$}
    \State Draw $M$ samples from $P$ and $Q$~.
    \State Define classifier $$\phi(\cdot) \gets \textproc{BuildClassifier}( \{X_i\}_{i=1}^{M}, \{Y_i\}_{i=1}^{M}, h, \eta, f_{\claim} )~.$$
    \State Compute initial statistics:
        \[
        \hat \alpha_\phi := \frac{1}{M} \sum_{i=1}^{M} \phi(X_i), \quad 
        \hat \beta_\phi := 1 - \frac{1}{M} \sum_{i=1}^{M} \phi(Y_i).
        \]
    \State Compute confidence-adjusted estimators:
        \[
        \Teins, \Tzwei \gets \textproc{ConfAdj}(\hat \alpha_\phi, \hat \beta_\phi, q_{1-\gamma/2}, M). \qquad
        \text{\Comment{Algorithm \ref{alg:conf-adj}}}\]
    \For{$k = M+1, M+2, \dots, k_{max}$}
        \State Draw $X_k \sim P$, $Y_k \sim Q$.
        \If{$k \bmod M_1 = 0$}
            \State Update $\phi(\cdot)$~.
        \EndIf
        \If{$k \bmod M_2 = 0$}
            \State Update $\hat \alpha_\phi, \hat \beta_\phi$ using $\{X_i, Y_i\}_{i \le k}$.
            \State Update $\Teins, \Tzwei \gets \textproc{ConfAdj}(\hat \alpha_\phi,\hat \beta_\phi, q_{1-\gamma/2}, M)$.
            \If{$\Tzwei < f_{\claim}(\Teins)$}
                \State \Return \texttt{"DP violation"} at time $k$.
            \Else
                \State Continue monitoring.
            \EndIf
        \EndIf
    \EndFor
\EndFunction
\end{algorithmic}
\caption{APT: Advanced Privacy Testing}
\label{alg:seq-audit}
\end{algorithm}

\subsection{Sequential auditing scheme}\label{sec:3:2}
In the following, we verbally describe our new sequential auditing scheme, termed {\em Advanced Privacy Testing,} in short APT (see Algorithm \ref{alg:seq-audit} for a blueprint of APT).  
For convenience, we use the line numbering in Algorithm \ref{alg:seq-audit} to refer to specific steps, following the notation  \vlineref{APT.L.i} (Algorithm \ref{alg:seq-audit}, line number $i$).

Prior to the testing phase, APT generates a small burn-in sample $(X_1, Y_1), \ldots, (X_M, Y_M)$, which is used for parameter calibration \vlineref{APT.L.$2-3$}. This sample is later reused during the testing phase \vlineref{APT.L.12}. Next, APT constructs a classifier $\phi$, using the burn-in sample and (if available) prior knowledge of the analyst \vlineref{APT.L.3}. 
Note that there are very different ways to instantiate the classifier in the APT blueprint and the concrete scenario dictates what a good choice is. 
For instance, in a whitebox scenario, the analyst might know an optimal classifier $\phi$ and may input it to APT without reference to the data. On the opposite end of the spectrum, 
in a blackbox scenario, an initial classifier may need to be constructed from scratch 
based on 
the burn-in sample, and may be refined as additional data become available. 
We refer to the following sections for the concrete classifier choices that instantiate the APT blueprint in our results.

After $\phi$ has been selected, the main task becomes approximation of its classification errors $( \alpha_\phi,  \beta_\phi)$ \vlineref{APT.L.4,12} defined in \eqref{e:class:errors}. Recall that, by definition, the pair $(\alpha_\phi,  \beta_\phi)$ lies on or above the 
tradeoff 
curve $T$. 
Thus if APT's measurements indicate that $( \alpha_\phi,  \beta_\phi)$ is below $f$, we interpret this  
as a signal that $T$ is 
also likely below $f$ and thus $f$-DP is violated.
Note, however, that due to randomness in our samples, the quantities $(\hat \alpha_\phi, \hat \beta_\phi) $ are noisy estimators 
of $( \alpha_\phi,  \beta_\phi)$; therefore, to control false rejections,  we 
 reject $f$-DP only when  $(\hat \alpha_\phi, \hat \beta_\phi) $ lies below  $f$ by a sufficiently large margin \vlineref{APT.L.14-15}---the precise amount is determined by Algorithm \ref{alg:conf-adj} \vlineref{APT.L.5, 13}.
If no $f$-DP violation is detected, the algorithm keeps running \vlineref{APT.L.17} until a maximum number of iterations $k_{\max}$ is reached. In our theory, we may set $k_{\max}=\infty$ (arbitrarily long DP-auditing) and still hold statistical significance guarantees.

The mathematical crux of our approach lies in the confidence adjustment by Algorithm \ref{alg:conf-adj} that allows the control of the false rejection rate. To obtain confidence regions for the empirical classification error, it is standard to use a Hoeffding bound, as e.g. done in \cite{Eureka,askingeneral}. However, these confidence bounds are of fixed-batch type and therefore inappropriate for the more stringent sequential testing setup. Here sequential decision boundaries are needed, which have recently become popular in the literature under the label of "anytime valid inference" (see \cite{howard:ramdas:mcauliffe:sekhon:2021} for an exposition). Our  Algorithm \ref{alg:conf-adj} compares empirical classification errors, as more and more data arrive, to the fluctuations of a Brownian motion. This comparison is based on state-of-the-art tools from Gaussian approximation theory, developed by \cite{berkes:liu:wu:2014}.
Some background on the Brownian motion can be found in the Appendix. For the Brownian motion, one can derive boundary functions that it only crosses with a probability of $\gamma$ - exactly corresponding to our significance level. Using the same boundary functions in Algorithm \ref{alg:conf-adj}  to describe the fluctuations of our empirical classification errors yields a valid sequential inference procedure.

 \begin{figure}[h]
     \centering
     \includegraphics[width=.65\linewidth]{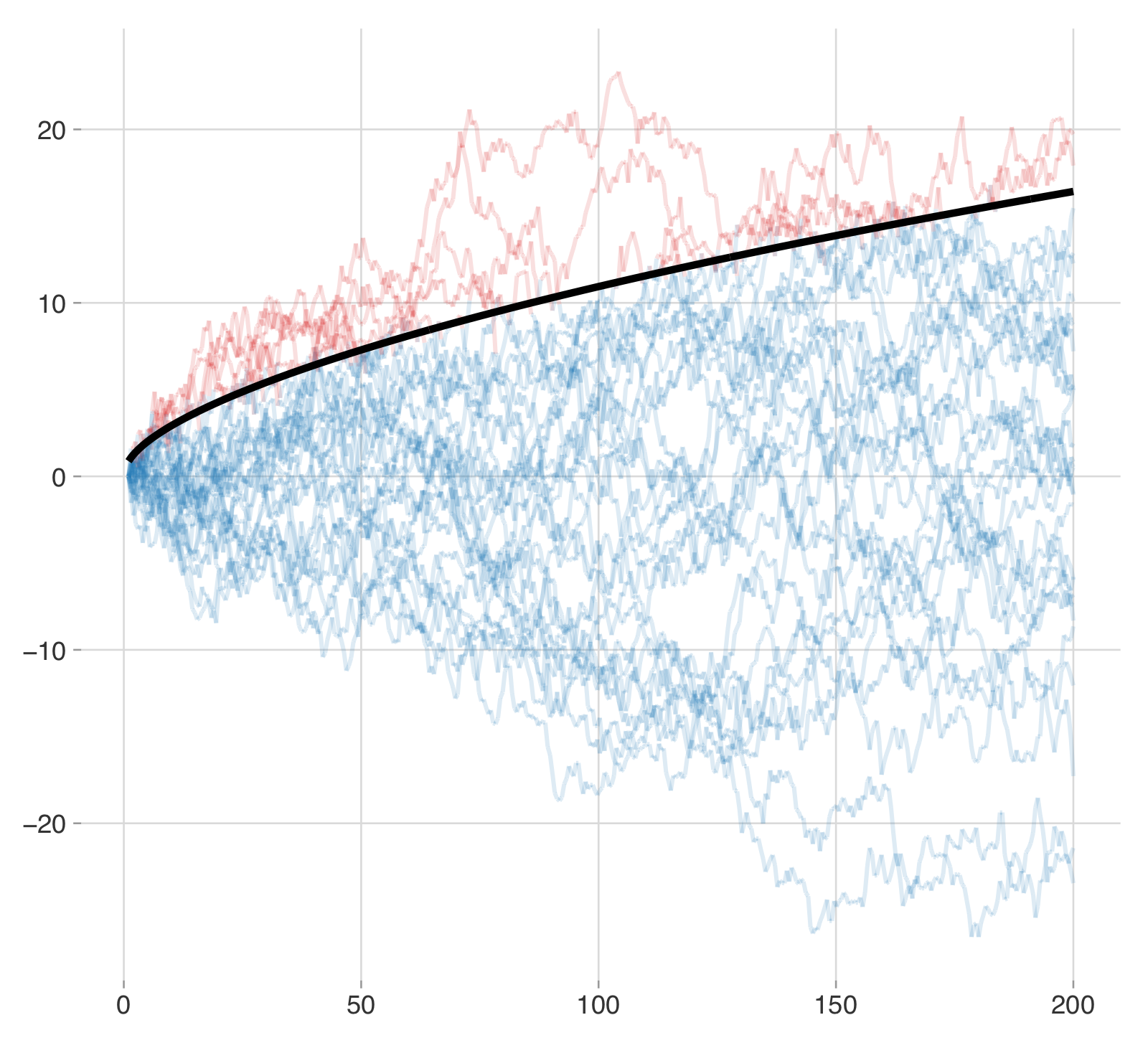}
     \caption{Realizations of the standard Brownian motion on the interval $[1,200]$ compared to a boundary function $g(x) = \sqrt{(x/4)\log(20+x)}$ shown in solid black. Blue paths stay below the boundary function and red paths cross the boundary at some point.}
     \label{fig:bound}
 \end{figure}

\noindent Below, in Theorem \ref{theo:main}, we provide a formal performance guarantee for our approach. The theoretical result is formulated here for one fixed classifier $\phi$, i.e. without updating, a situation most typical in a white-box setting. An advantage of this presentation is its relative simplicity compared to scenarios with updating rules. The proof, presented in the Appendix, Section \ref{sec:proof:main}, is based on classical open-ended Gaussian approximations that are available for i.i.d. random variables. The case of updating rules is more complicated, because it introduces serial dependence. For instance, suppose that at time $k$ the classifier $\hat\phi$ draws on previous data $X_1,\ldots,X_{k-1}$; then the evaluations $\hat\phi(X_i),\hat\phi(X_j)$ for two indices $i \neq j$ are no longer independent, and the distribution of outputs may evolve over time with $\hat\phi$. In the Appendix, Section \ref{sec:adaptive_monitoring}, we present theory for classifiers that evolve over time. The main assumption is that the sequence of classification outcomes remains sufficiently regular for Gaussian approximations to apply (now for weakly dependent data). Practically, this points toward update rules that stabilize over time, so that historical samples have only limited influence on the classifier at later stages. Some examples of such classifiers are discussed in the next section.
The assumption of weak dependence is formalized using classical concepts from time series analysis such as strong mixing. The justification for adaptive classifiers, also given in Section \ref{sec:adaptive_monitoring}, is then much more involved, using recent advances in Prokhorov-metric approximation for partial sum processes of dependent, non-stationary data. We also discuss technical assumptions there in more detail. \\
For now, let us focus on the exemplary case of one fixed $\phi$, which is the subject of Theorem \ref{theo:main} below. The theorem has three statements. The first two, (a)-(b), involve $o(1)$ terms, which are negligible for larger $M$. In practice, moderate values such as $M=50$ are sufficient for a tight approximation. The parameter $\gamma \in (0,1)$ is the user-determined significance level and practically fixed at some small value, such as $5\%$ (see Section \ref{sec:2:2}). According to parts (a) and (b), $\gamma$ bounds the risk of a false rejection if $f_{\claim}$-DP actually holds. Smaller values of $\gamma$ ensure fewer false rejections, but come at the cost of slower detection of actual privacy violations. 
Part (c) holds for any $M$ and states that a privacy violation will be detected eventually (for large enough $k$) with certainty.

\begin{theo} \label{theo:main}
    Let $\phi(\cdot)$ be a classifier and $\gamma \in (0,1)$ a significance level.  
    Then the following statements hold:
    \begin{itemize}
        \item[a)] If $(\alpha_\phi, \beta_\phi) \in (0,1)^2$ satisfies $f_{\claim}(\alpha_\phi)<\beta_\phi$ (DP claim satisfied by a positive margin), it follows that 
        \[
       \mathbb{P} \big(\textnormal{APT outputs for some $k$ \texttt{"DP violation"}}\big) =o(1).
        \]
        \item[b)] If $(\alpha_\phi, \beta_\phi) \in (0,1)^2$ satisfies $f_{\claim}(\alpha_\phi)=\beta_\phi$ (DP claim exactly satisfied), it follows that 
        \[
        \mathbb{P} \big(\textnormal{APT outputs for some $k$ \texttt{"DP violation"}}\big) = \gamma+o(1).
        \]
        \item[c)] If $(\alpha_\phi, \beta_\phi) \in (0,1)^2$ satisfies $f_{\claim}(\alpha_\phi)>\beta_\phi$ (DP claim violated), it follows that 
        \[
       \mathbb{P} \big(\textnormal{APT outputs for some $k$ \texttt{"DP violation"}}\big) = 1.
        \]
    \end{itemize}
\end{theo}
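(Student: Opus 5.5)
We do not have Algorithm \ref{alg:conf-adj} explicitly, so the plan assumes its natural form. At time $k$ (with the running estimates based on $k$ samples), $\Teins = \hat\alpha_\phi - g_M(k)/k$ and $\Tzwei = \hat\beta_\phi - g_M(k)/k$, possibly with separate variance-scaled versions for the two coordinates. Here $g_M$ is a boundary of the type in Figure \ref{fig:bound}, calibrated at scale $M$. Its defining property is that a standard Brownian motion $W$ satisfies
\[
\mathbb{P}\big(\exists t\ge 1:\ W(t)> g(t)\big)=\gamma/2
\]
after rescaling time by $M$. The boundary also satisfies $g_M(k)/k\to 0$ and $g_M(k)/\sqrt{k}\to\infty$ as $k\to\infty$ (e.g.\ $\sqrt{k\log k}$ growth).

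The proof reduces to the partial sums $S_k^X=\sum_{i\le k}(\phi(X_i)-\alpha_\phi)$ and $S_k^Y=\sum_{i\le k}(\phi(Y_i)-(1-\beta_\phi))$. These are sums of bounded i.i.d.\ variables with variances $\alpha_\phi(1-\alpha_\phi)$ and $\beta_\phi(1-\beta_\phi)$, which are strictly positive because $(\alpha_\phi,\beta_\phi)\in(0,1)^2$. By the strong invariance principle of \cite{berkes:liu:wu:2014} (in the i.i.d.\ case, KMT already suffices), there exist on a common probability space independent Brownian motions $W_1,W_2$ with
\[
S^X_k=\sigma_X W_1(k)+O(\log k)\quad\text{a.s.},
\]
and similarly for $S^Y_k$. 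Because monitoring starts at $k\ge M$ and the boundary is of order at least $\sqrt{k}$, the approximation error is uniformly negligible relative to $g_M(k)$ over $k\ge M$ as $M\to\infty$. Brownian scaling, $W(Mt)/\sqrt{M}\overset{d}{=}W(t)$, then turns crossing events for the discrete-time monitored process (only multiples of $M_2$ are checked, which matters only at lower order) into crossing events for $W$ on $[1,\infty)$. This produces the $o(1)$ terms.

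\emph{Part (b).} Let $c=f(\alpha_\phi)=\beta_\phi$. Since $f$ is convex and non-increasing, it is Lipschitz near the interior point $\alpha_\phi$, and a one-sided Taylor step gives
\[
f(\Teins)\le f(\alpha_\phi)+f'_-(\alpha_\phi)(\Teins-\alpha_\phi)+o(|\Teins-\alpha_\phi|).
\]
The rejection event $\Tzwei<f(\Teins)$ is thus a linear combination of $S^X_k/k$ and $S^Y_k/k$ crossing a boundary of order $g_M(k)/k$, up to negligible terms. Two choices of adjustment are possible. With the coordinatewise $\gamma/2$ choice, the union bound yields only $\le\gamma+o(1)$, not equality. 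To obtain exactly $\gamma$, I would assume that \textsc{ConfAdj} uses the combined variance $\sigma_Y^2+f'(\alpha_\phi)^2\sigma_X^2$, estimated plug-in, so that the linear combination is itself a single Brownian motion whose crossing probability is exactly $\gamma$.

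\emph{Part (a).} Here there is a fixed margin $\delta=\beta_\phi-f(\alpha_\phi)>0$. Rejection requires the centred sums to exceed $\delta k+g_M(k)$ for some $k\ge M$. This has probability at most
\[
\mathbb{P}\big(\sup_{k\ge M}|S_k|/k>\delta/2\big)\to 0
\]
by the strong law of large numbers (or a maximal Hoeffding inequality).

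\emph{Part (c).} Now $\beta_\phi<f(\alpha_\phi)$ by a margin $\delta>0$. Almost surely $S_k/k\to0$ and $g_M(k)/k\to0$, so $(\Teins,\Tzwei)\to(\alpha_\phi,\beta_\phi)$, and continuity of $f$ gives $\Tzwei<f(\Teins)$ for all large $k$ along multiples of $M_2$. Hence rejection occurs almost surely when $k_{\max}=\infty$.

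The main obstacle is part (b): proving the equality $\gamma+o(1)$ rather than only an upper bound. This requires a uniform-in-time linearization of $f$ combined with variance estimation, and control of the strong-approximation error relative to $g_M$ on the infinite horizon $[M,\infty)$. I would handle the linearization first, using a law of the iterated logarithm to bound $\sup_{k\ge M}|\Teins-\alpha_\phi|\to0$ in probability.
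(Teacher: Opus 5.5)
Your Gaussian-approximation skeleton is the paper's: KMT for the bounded i.i.d.\ summands, the $O(\log k)$ error uniformly negligible against $\sqrt{k\log(20+k/M)}$ on $k\ge M$, Brownian scaling onto $[1,\infty)$, and the law of the iterated logarithm for finiteness of the limit. Your parts (a) and (c) are sound, and your strong-law argument for (a) is more explicit than the paper's remark that the confidence bounds converge.

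There is, however, a sign error in the form you assume for \textsc{ConfAdj}. Rejection means $(\hat T_\alpha,\hat T_\beta)$ lies below a non-increasing curve, so a conservative adjustment must move the point up and to the right. The paper sets $\hat T_\alpha=\hat\alpha_\phi+q_{1-\gamma/2}s_1b_k$ and $\hat T_\beta=\hat\beta_\phi+q_{1-\gamma/2}s_2b_k$ with $b_k=\sqrt{\log(20+k/M)/k}$, i.e.\ simultaneous upper confidence bounds. With your subtracted form, on the event $\{\hat\alpha_\phi\le\alpha_\phi,\ \hat\beta_\phi\le\beta_\phi\}$ monotonicity gives $\hat T_\beta<\beta_\phi=f(\alpha_\phi)\le f(\hat T_\alpha)$. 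So in case (b) the test rejects whenever this event occurs, which already has probability about $1/4$ at a single monitoring time. Your computation in (a) (centred sums exceeding $\delta k+g_M(k)$) uses the additive direction, so this is presumably a slip, but (b) depends on it.

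The genuine problem is part (b). Replacing the coordinatewise adjustment by a combined-variance boundary built from $\sigma_Y^2+f'(\alpha_\phi)^2\sigma_X^2$ changes the test, so it proves nothing about APT. It also needs $f$ differentiable at $\alpha_\phi$, which fails at the kinks of the $T_{\varepsilon,\delta}$ curves in Section~\ref{sec:comparison-prior}. Moreover, your one-sided expansion with $f'_-$ is not an upper bound when $\hat T_\alpha>\alpha_\phi$, which is the typical case once the sign is fixed.

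The paper does not linearize at all. On the event that $\alpha_\phi\le\hat T_\alpha$ and $\beta_\phi\le\hat T_\beta$ for all monitored $k\ge M$, monotonicity alone gives $f(\hat T_\alpha)\le f(\alpha_\phi)\le\beta_\phi\le\hat T_\beta$, so no rejection occurs. The KMT/scaling argument per coordinate plus Bonferroni gives this event probability at least $1-\gamma-o(1)$. This is simpler than your route and needs no smoothness, but it yields only $\le\gamma+o(1)$, which is exactly your union-bound observation. The paper's ``the guarantees follow directly'' does not deliver equality either.

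In fact equality cannot hold for APT. The two coverage events are functions of the independent $X$- and $Y$-streams, so the false rejection probability is at most $1-(1-\gamma/2)^2+o(1)=\gamma-\gamma^2/4+o(1)$. If $f$ is differentiable at $\alpha_\phi$ with $a=f'(\alpha_\phi)$ (necessarily $a<0$ here), your linearization applied to the actual rule gives the limit $\mathbb{P}(L_0>q_{1-\gamma/2}\,r)$. Here $L_0=\sup_{x\ge1}W(x)/\sqrt{x\log(20+x)}$ for a standard Brownian motion $W$, and $r=(\sigma_Y+|a|\sigma_X)/\sqrt{\sigma_Y^2+a^2\sigma_X^2}>1$, so this limit is strictly below $\gamma/2$. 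The provable version of (b) for APT is therefore the upper bound $\le\gamma+o(1)$, the form the paper itself states in Theorem~\ref{theo:adaptive_validity}. The rectangle argument is the right way to prove it; modifying the algorithm is not.
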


\begin{rem}\label{rem:theo}  Theorem \ref{theo:main} implies that a violation of $f-$DP is detected by the APT algorithm eventually (i.e. if samples are large enough). In the following we discuss how large the samples required by APT are, compared to the theoretical optimum -- a fixed-batch test with optimal sample size. Recall from our above discussion that using such an optimal test is practically impossible, because the effect size is unknown. In summary, we see that the samples used by APT are within a logarithmic factor of the theoretical optimum.

Let us now suppose that $f-$DP is violated and that we have a candidate classifier $\phi$, with $(\alpha_\phi, \beta_\phi)$ below the claimed curve $f_{\claim}$. The effect size (Section \ref{sec:2:2}) is the distance between $(\alpha_\phi, \beta_\phi)$ and  $f_{\claim}$, and we denote it here by $\delta$. 
    An ideal test would now be based on a confidence region around $(\alpha_\phi, \beta_\phi)$, e.g. using a Hoeffding bound (or $t$-test) with sample size dependent on $\delta$. Some simple calculations show that the minimal required sample size to detect the violation would have to be of size $\mathcal{O}(\delta^{-2})$ as $\delta$ becomes small. 
    Our sequential approach yields similar values. Indeed, depending on the relation of $\delta$ and $M$, it follows that the sample size in the sequential testing is at best $\mathcal{O}_P(\delta^{-2})$ (equal to minimum up to a constant) and at worst
     $\mathcal{O}_P(\delta^{-2}\log(\delta^{-1}))$ (within a log-factor of the minimum). We thus see that there is only a minimal sampling overhead of sequential testing compared to the (infeasible) mathematically optimal procedure.
\end{rem}



\subsection{Our classifiers suite}

We next describe how 
we instantiate the APT blueprint with 
(close to) optimal classifiers in different relevant scenarios. In a nutshell, our classifiers adapt the Neyman-Pearson Lemma from classical statistics ({\em cf.} \cite[p.60]{lehmann:romano:2005}) to the different scenarios. More concretely: suppose that $P$ and $Q$ have probability densities, denoted by $p$ and  $q$, respectively. The Neyman-Pearson Lemma defines the Likelihood Ratio approach: For constants $\lambda \in [0,1]$ and $\eta>0$, and a random bit $B\sim \text{Ber}(\lambda)$, it takes the form
\begin{equation}\label{eq:classifier}
\phi_{\eta,\lambda}(x) := \begin{cases}
    0, \quad q(x)/p(x) < \eta\\
    0, \quad q(x)/p(x) =\eta \,\, \& \,\, B=0\\
    1, \quad \textnormal{else}. 
\end{cases}
\end{equation}

Adapting the lemma to the $f$-DP formulation and terminology it states that any $\phi$ of the above form is optimal, in the sense that the error pair $(\alpha_\phi, \beta_\phi)$ lies exactly on the tradeoff curve $T$. For $f$-DP auditing, it would therefore suffice to consider classifiers from the set $\{\phi_{\eta,\lambda}\}_{\eta,\lambda}$. Obtaining such optimal classifiers is different in blackbox and whitebox settings, and we discuss both cases below: 

\paragraph{Blackbox Approach} 
In a blackbox setting, employing optimal classifiers faces two main obstacles. 
First, 
the densities $p$ and $q$ are unknown, which means that the corresponding optimal classifiers $\phi_{\eta,\lambda}$ in \eqref{eq:classifier} are unknown as well. We obtain feasible substitutes for $p$ and $q$ by estimation - in our experiments, we employ kernel density estimation although other approaches (for instance, neural network based estimators) are equally suitable. 
Density estimates are updated sequentially as more data become available.

Second, even if the entire family $\{\phi_{\eta,\lambda}\}_{\eta,\lambda}$ were somehow available, one would still need to decide which specific pair $(\eta,\lambda)$ to use. 
For most models, $\eta$ is the decisive parameter. For example, in the Gaussian case discussed below, the choice of $\lambda$ is entirely irrelevant. Therefore, we focus here on $\eta$ exclusively, setting $\lambda=0$ and suppressing dependence on $\lambda$ in our notation with $\phi_\eta := \phi_{\eta,0}$. Adaptations of our method that include $\lambda \neq 0$ are probably possible, but they seem to be less relevant for DP-auditing.
Each value of $\eta$ can be associated with a unique operating point $(\alpha_{\phi_\eta}, \beta_{\phi_\eta})$, and in Section \ref{sub:crit_eta} below, we discuss a simple geometric strategy to choose the optimal $\eta$.

The corresponding methods are summarized below in Algorithm \ref{alg:PLRT} (estimation of $T$),
 Algorithm \ref{alg:eta_star_45} (selection of optimal $\eta$) and can then be integrated as subroutines into the APT Algorithm \ref{alg:seq-audit}. 

\paragraph{Whitebox Approach}
If the analyst has additional prior knowledge about the distributions $P$ and $Q$, we may depart from the non-parametric blackbox methods above and instead adopt a parametric model that offers higher sample efficiency. A classical assumption is that the outputs under $P$ and $Q$ are (approximately) Gaussian with a common variance $\sigma^{2}$ and different means $\mu_{P}$ and $\mu_{Q}$ (see e.g.\ \cite{carlini2022membership}). Under this model, the Likelihood Ratio classifier is equivalent to a simple threshold classifier on the scalar output $s$: $\phi_{\eta}(s) = \mathbf{1}\{s \ge \eta\},$
namely, we decide $P$ if $s \ge \eta$ and $Q$ otherwise. The appeal of this approach is that, under the Gaussian equal-variance model, the optimal Likelihood Ratio classifier depends only on the two means (after standardization by the common variance). Moreover, even when the true output distributions deviate moderately from normality, the Gaussian plug-in Likelihood Ratio classifier often remains a powerful tool for auditing purposes.
As before, we can use Algorithm \ref{alg:eta_star_45} to detect a critical $\eta$.

\subsection{Likelihood Ratio tuning}\label{sub:crit_eta}

In the previous section, we have focused on approximating density ratios $p(x)/q(x)$ (either non-parametric or parametric), used for the optimal classifier from the Neyman-Pearson Lemma. We will now focus on the choice of the optimal threshold $\eta^*$ in Eq.~\eqref{eq:classifier}.
 Recall that any $\eta$ implies a Likelihood Ratio classifier $\phi_\eta$, and to emphasize our focus on $\eta$, define
\[
\alpha_{\phi_\eta} =: \alpha(\eta), \qquad \beta_{\phi_\eta}=:\beta(\eta).
\]
Our aim is to select a classifier that maximizes our power against privacy violations. A first intuition for this is to select a pair $(\alpha(\eta), \beta(\eta))$ that lies as far as possible under the claimed privacy curve $f_{\claim}$. This intuition was used by \cite{askingeneral}, where the selection rule
\begin{equation}\label{eq:old_eta_max}
    \eta^* := \argmax_{\eta} 
    f_{\claim}(\hat \alpha(\eta)) - \hat{\beta}(\hat \alpha(\eta))
    ~,
\end{equation}
was suggested, where $\hat{\beta}(\cdot)$ is an estimator of the true unknown tradeoff curve. In \eqref{eq:old_eta_max}, $\eta^*$ is selected so that $(\alpha(\eta), \beta(\eta))$ lies as far as possible under $f_{\claim}$ in vertical direction. In the following, we use a modification of this rule that significantly improves the detection of DP violations. 

\begin{figure}[htp]
\centering
\includegraphics[width=0.6\linewidth]{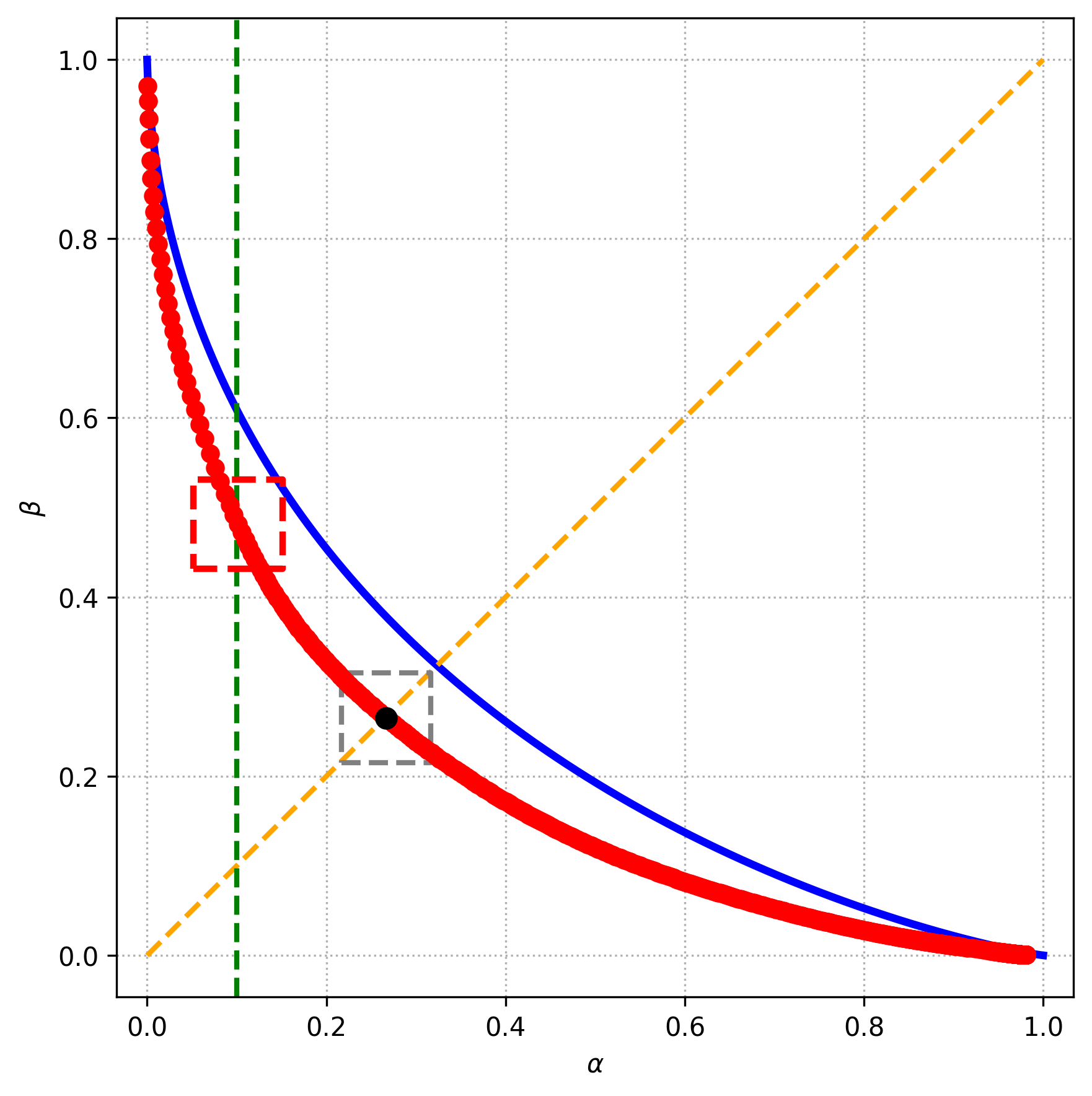}
    \caption{The blue line corresponds to $f_{\claim}$, while the red  line represents the estimated tradeoff curve $\hat T$. The vertical green line and diagonal yellow line correspond to the rules of selecting $\eta$ in \eqref{eq:old_eta_max} and \eqref{eq:new_eta_max}, respectively.\label{fig:eta_crit} The rule \eqref{eq:new_eta_max} maximizes the size of the square that can be fit below $f_{\claim}$ -- the central quantity for detecting DP-violations.} 
\end{figure}

For our modification to make sense, notice that detecting DP violations does not directly depend on the one-dimensional, vertical distance between  $(\alpha(\eta), \beta(\eta))$ and  $f_{\claim}$. Rather, it depends on a two-dimensional quantity, and a simple proxy is the largest axis-aligned square around the center $(\alpha(\eta), \beta(\eta))$, that still fits below  $f_{\claim}$. The upper-right corner of this square lies on the $45^\circ$ direction, from $(\alpha(\eta),\beta(\eta))$. This motivates selecting a point that is well separated from $f_{\claim}$ along this diagonal direction, rather than only in the vertical direction as in~\eqref{eq:old_eta_max}.
To make this geometric idea mathematically precise, consider for any estimate $(\hat{\alpha}(\eta), \hat{\beta}(\eta))$ the line
\[
\beta(\alpha) = \hat{\beta}(\eta) + \big(\alpha - \hat{\alpha}(\eta)\big)~,
\]
which passes through $(\hat{\alpha}(\eta), \hat{\beta}(\eta))$ with slope $1$. We can find the intersecting point of this line and $f$ by solving the equation 
\[
f_{\claim}(\alpha) = \hat{\beta}(\eta) + \big(\alpha - \hat{\alpha}(\eta)\big)
\]
for $\alpha$. Proceeding this way, we can for each $\eta$ compute the Euclidean distance between $(\hat{\alpha}(\eta), \hat{\beta}(\eta))$ and its intersection point $(\alpha_{\text{int}}(\eta), f_{\claim}(\alpha_{\text{int}}(\eta)))$.
The critical $\eta^*$ is now the following: 
\begin{equation}\label{eq:new_eta_max}
    \eta^* := \argmax_{\eta} \|(\hat{\alpha}(\eta), \hat{\beta}(\eta))-(\alpha_{\text{int}}(\eta), f_{\claim}(\alpha_{\text{int}}(\eta)))\|_2 
    ~.
\end{equation} 

In Figure \ref{fig:eta_crit} we illustrate our general idea. Our new approach selects a point that maximizes the size of the square under the tradeoff curve (which is directly linked with more power to detect DP violations). In contrast, the method by \cite{askingeneral} selects a less appropriate point, and we see that the same square now intersects with the curve $f_{\claim}$ (it does not fit under it anymore).

%% file: 04-experiments.tex
\section{Experiments}\label{sec:4}


In this section, we evaluate the proposed sequential auditor. We first consider controlled ground-truth benchmarks in both blackbox and whitebox settings. In the blackbox setting, the analyst observes only mechanism outputs, whereas in the whitebox setting, the analyst additionally has prior knowledge about the mechanism, such as the class of noising distributions. We then compare our method with the sequential $(\varepsilon,\delta)$-DP auditor proposed by~\cite{gonzalez2025sequentially}. Finally, we demonstrate in a realistic one-run DP-SGD canary audit, how our sequential pipeline can reduce the number of canary scores needed for auditing. We now formulate the common setup for the ground-truth experiments below.

\paragraph{Setup for controlled ground-truth experiments.}
Our experimental framework follows the well-established and accepted methodology for estimating and auditing various notions of differential privacy (see e.g. \cite{lowerbounds, askingeneral}): 
Throughout, we consider neighboring databases $D, D' \in [0,1]^r$ with $r = 10$ records. A standard choice in the literature, which maximizes the $\ell_1$-distance while preserving adjacency, is
\begin{equation}\label{eq_databases}
    D = (0,\ldots,0), 
    \qquad 
    D' = (1,0,\ldots,0),
\end{equation}
though similar behavior is observed for other adjacent pairs. For the choice of databases, we also refer to \cite{lowerbounds}.  
We denote by $S(x) = \sum_{i=1}^{10} x_i$ the summary statistic used in our additive-noise experiments.

\paragraph{Mechanisms}\label{sec:algorithms}
We evaluate three mechanisms commonly studied in privacy research: the Gaussian mechanism, the Laplace mechanism, and DP-SGD. These mechanisms produce quite different output distributions and therefore provide a meaningful benchmark. Below we briefly describe each mechanism together with the parameter settings used in our experiments. We also provide, for each mechanism, a mathematically closed form for the tradeoff curve, that can serve as ground truth in our below experiments.

\paragraph{Additive Noise Mechanisms}
For both the Gaussian and Laplace mechanisms, the released value takes the form
\[
    \mathcal{M}(x) = S(x) + Y,
\]
where $Y \sim \mathrm{Lap}(0,b)$ or $Y \sim \mathcal{N}(0,\sigma^2)$.  
When the constants $b, \sigma$ are appropriately calibrated, both mechanisms satisfy $f$-DP.  

For the Gaussian mechanism, we fix $\sigma = 1$.  
Following \cite{dong:roth:su:2022}, the corresponding tradeoff curve is then given by the closed form
\[
    T_{\text{Gauss}}(\alpha)
    = \Phi\!\bigl(\Phi^{-1}(1-\alpha) - \mu \bigr),
\]
where $\Phi$ is the cumulative distribution function of the standard normal. We notice that $T_{\text{Gauss}}(\alpha)$
 corresponds to the tradeoff curve between $\mathcal{N}(0,1)$ and $\mathcal{N}(\mu,1)$.

For the Laplace mechanism, we fix $b = 1$ and use another closed form derived by \cite{dong:roth:su:2022}, namely
\[
    T_{\text{Lap}}(\alpha)
    =
    \begin{cases}
        1 - e^{\mu}\alpha, & \alpha < e^{-\mu}/2, \\[4pt]
        e^{-\mu} / (4\alpha), & e^{-\mu}/2 \le \alpha \le 1/2, \\[4pt]
        e^{-\mu}(1-\alpha), & \alpha > 1/2.
    \end{cases}
\]

\paragraph{DP-SGD}
Deriving a closed-form expression for the tradeoff curve of DP-SGD is in general challenging and often infeasible.  
However, in a special case, \cite{askingeneral} have derived a closed form that we can use as ground truth for our validation.

First, recall that DP-SGD aims to privately approximate the solution to the empirical minimization problem
\[
    \theta^* = \argmin_{\theta \in \Theta} \mathcal{L}_x(\theta),
    \qquad
    \mathcal{L}_x(\theta)
    =
    \frac{1}{r} \sum_{i=1}^{r} \ell(\theta, x_i).
\]
In our experiments, we choose $\ell(\theta, x_i) = \tfrac{1}{2}(\theta - x_i)^2$, initialize $\theta_0 = 0$, and set $\Theta = \mathbb{R}$.  
The remaining hyperparameters are fixed to $\sigma = 0.2$, $\rho = 0.2$, $\tau = 10$, and $m = 5$. Then, according to \cite{askingeneral}, the tradeoff curve is given by
\[
    T_{\text{SGD}}(\alpha)
    =
    \sum_{I \subset \{1,\ldots,\tau\}}
    \frac{1}{2^\tau}\,
    \Phi\!\left(
        \Phi^{-1}(1-\alpha)
        -
        \frac{\mu_I}{\bar\sigma}
    \right),
\]
where
\begin{align}\label{mu_I}
    \mu_I := \sum_{t \in I} \frac{\rho (1 - \rho)^{\tau - t}}{m}, \qquad
    \bar{\sigma}^2 = \rho^2 \sigma^2
    \frac{1 - (1 - \rho)^{2 \tau}}{1 - (1 - \rho)^{2}}.
\end{align}

\paragraph{Experimental Parameters}
We use the same global parameters across all mechanisms:  
the refresh period $M_1$ is chosen adaptively (see Algorithm \ref{alg:audit_should_refit} with threshold $0.1$), evaluation period $M_2 = 10$ is static, maximum number of iterations $k_{\max} = 10,000$, and significance level $\gamma = 0.05$. All empirical rejection probabilities are derived using $1,000$ repetitions.

\smallskip
\noindent\textbf{Gaussian}
We introduce privacy violations by miscalibrating the privacy parameter $\mu$.  
Data are generated under $\mu = 1$ and we audit for the claimed curves when $\mu = 0.5,\ 0.8,\ 1$.

\smallskip
\noindent\textbf{Laplace}
We use the same strategy as for the Gaussian mechanism. The true calibration uses $\mu=1$, while we audit under the claims $\mu = 0.5,\ 0.8,\ 1$.

\smallskip
\noindent\textbf{DP-SGD}
For DP-SGD, instead of $\mu$, we vary the number of iterations $\tau$.  
The true mechanism runs with $\tau=10$, while we audit under the claims $\tau = 5,\ 7,\ 10$.

\medskip
Summing up, for each mechanism we thus consider three scenarios:
\emph{strong violation}, \emph{subtle violation}, and \emph{no violation}.

\subsection{Blackbox}\label{sec:4.1}

We now evaluate the three mechanisms under the blackbox setting.  
Before proceeding, we briefly recall the construction of the classifier.  
Algorithm~\ref{alg:PLRT} describes the perturbed Likelihood Ratio estimator introduced in \cite{askingeneral}, which approximates the optimal Likelihood Ratio classifier.  
In addition, we derive the critical decision rule $g$ required by the sequential auditing procedure in Algorithm~\ref{alg:seq-audit}.  
The complete blackbox auditing pipeline is summarized in Algorithm~\ref{alg:seq-audit} and by setting (cf. Algorithm \ref{alg:build_classifier_general})
$$\mathcal L(\{X_i\}_{i=1}^{m}, \{Y_i\}_{i=1}^{m})=\log(\hat p)-\log(\hat q)~,$$
where $\hat q$ and $\hat p$ are KDEs for the unknown densities $p,q$.
\paragraph{Blackbox Parameters}
For the KDE-based blackbox procedure, we introduce a few additional parameters.  
We set the maximum threshold value to $\eta_{\max} = 15$ (i.e. $\log(\eta_{\max})\approx 2.7$) and the perturbation level in Algorithm~\ref{alg:PLRT} to $h = 0.1$.  
Bandwidths for the KDEs are selected in a data-adaptive manner using the package \cite{kernsmooth}.
\subsection{Whitebox} \label{sec:4.2}
In the whitebox setting we focus on the Gaussian mechanism. For simplicity, we fix $\sigma=1$. 
We instantiate \textproc{BuildClassifier} with a parametric learning rule $\mathcal L_{\mathrm{Gauss}}$ that fits the means
\[
\hat\mu_P := \frac{1}{m}\sum_{i=1}^m X_i, 
\qquad 
\hat\mu_Q := \frac{1}{m}\sum_{i=1}^m Y_i,
\]
and uses the analytic Gaussian tradeoff. We consider an equidistant threshold grid $\mathcal T \subset[\min(X,Y),\max(X,Y)]~.$ 
For each $\eta\in\mathcal T$, the induced classifier is the threshold rule
\[
\phi_\eta(x) := \mathbf{1}\{x\ge \eta\}.
\]
Moreover, the corresponding estimated tradeoff point is available in closed form
\[
\hat\alpha(\eta) = 1-\Phi(\eta-\hat\mu_P),
\qquad
\hat\beta(\eta) = \Phi(\eta-\hat\mu_Q).
\]
Finally, \textproc{BuildClassifier} selects $\eta^*$ by applying \eqref{eq:new_eta_max} to the curve
$\{(\hat\alpha(\eta),\hat\beta(\eta)):\eta\in\mathcal T\}$ and returns $\phi(\cdot)=\phi_{\eta^*}(\cdot)$.

\begin{figure*}[!t]
    \centering
    \caption*{\textbf{Empirical Rejection Rates}} 

    \setlength{\tabcolsep}{2pt} 
    \renewcommand{\arraystretch}{1.1} 

    \begin{tabular}{|c|c|c|c|}
    \hline
     & \textbf{Large Violation} & \textbf{Small Violation} & \textbf{No Violation} \\
    \hline
    \rotatebox{90}{\hspace{16mm} \textbf{Gaussian}} &
    \begin{subfigure}{0.315\textwidth}
        \centering
        \includegraphics[width=\linewidth]{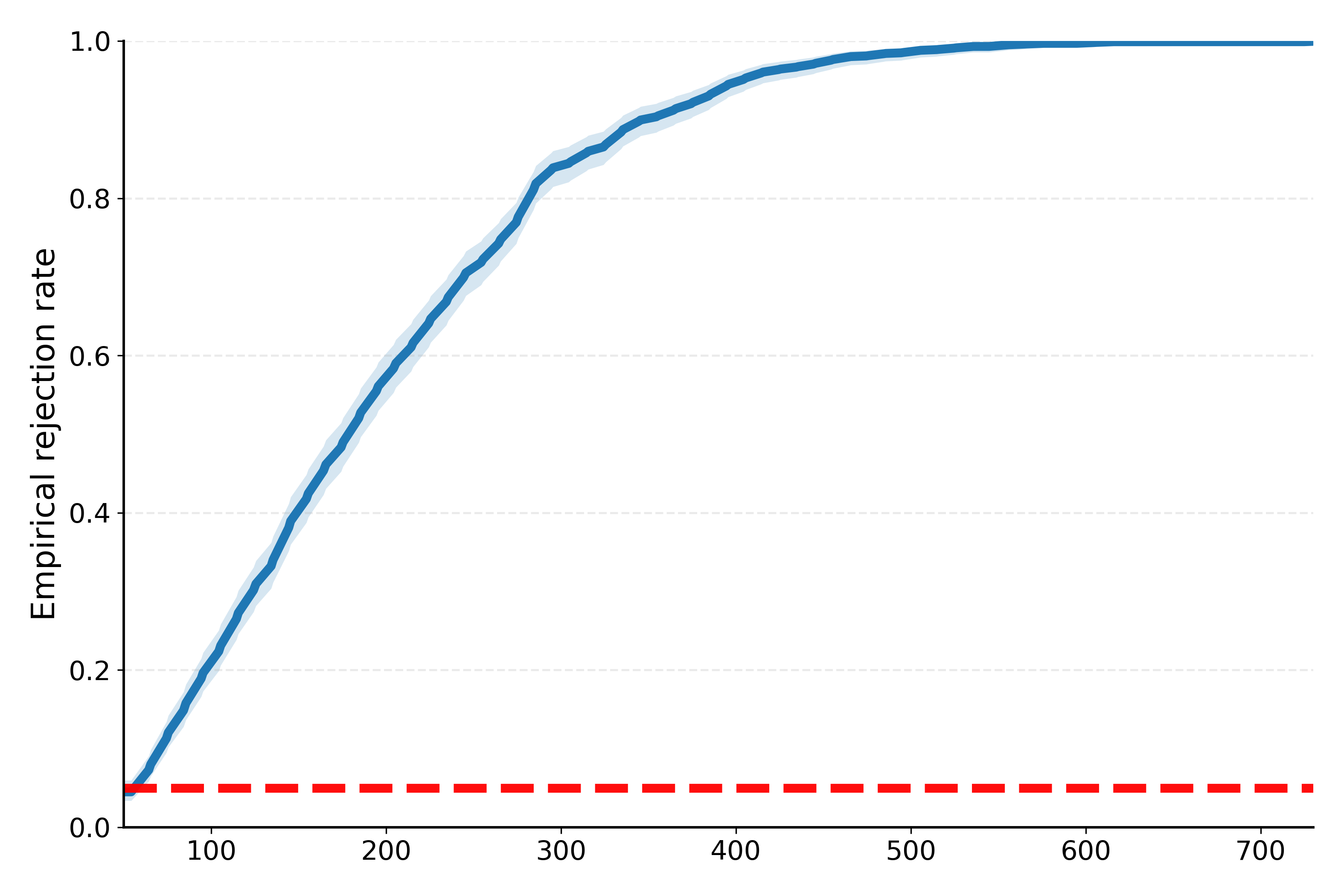}
        \subcaption*{
            $\mu = 0.5$ \\
            Whitebox
        }
    \end{subfigure} &
    \begin{subfigure}{0.315\textwidth}
        \centering
        \includegraphics[width=\linewidth]{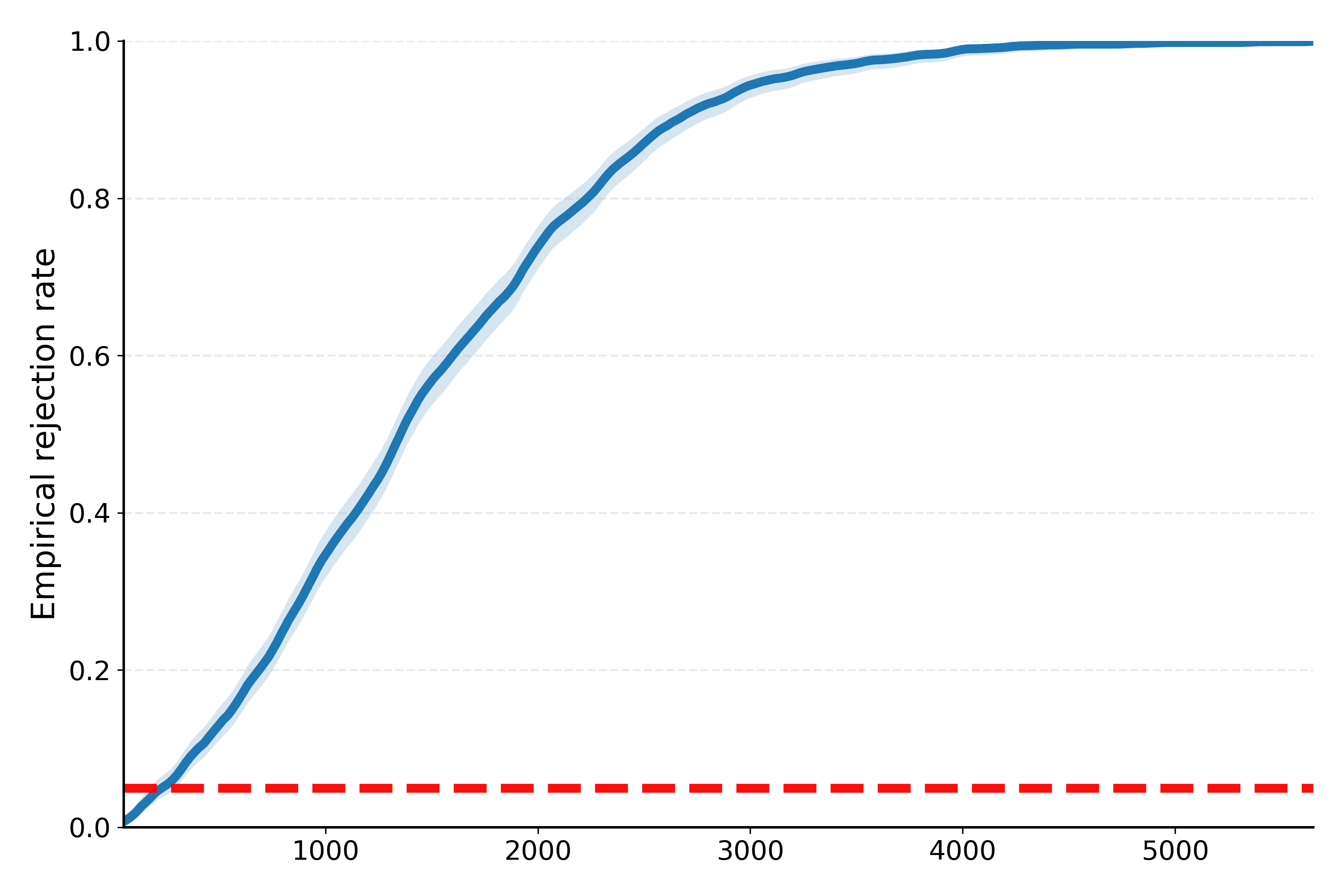}
        \subcaption*{
            $\mu = 0.8$ \\
            Whitebox
        }
    \end{subfigure} &
    \begin{subfigure}{0.315\textwidth}
        \centering
        \includegraphics[width=\linewidth]{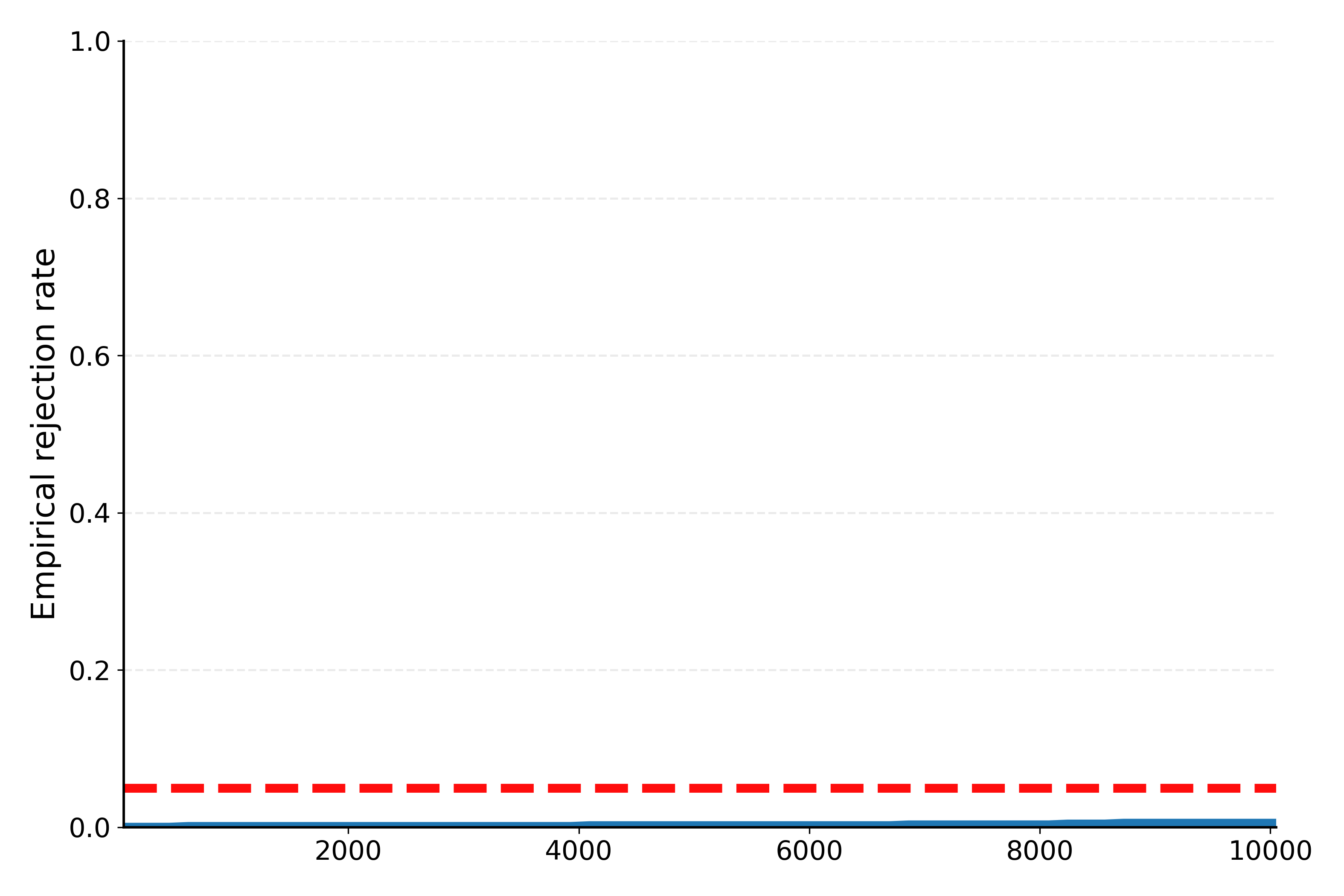}
        \subcaption*{
            $\mu = 1$ \\
            Whitebox
        }
    \end{subfigure} \\
    \hline
    \rotatebox{90}{\hspace{16mm}\textbf{Gaussian}} &
    \begin{subfigure}{0.315\textwidth}
        \centering
        \includegraphics[width=\linewidth]{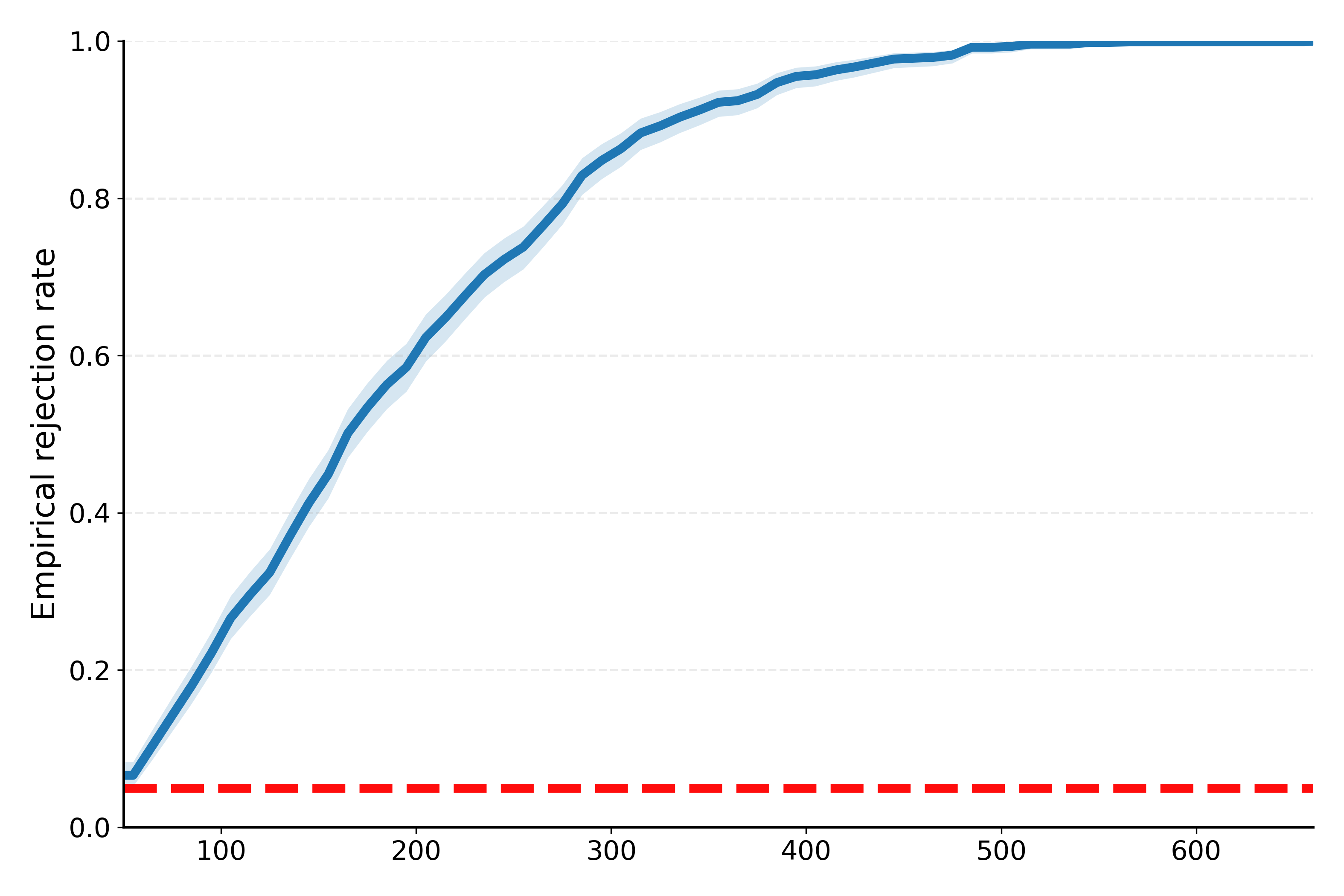}
        \subcaption*{
            $\mu = 0.5$ \\
            Blackbox
        }
    \end{subfigure} &
    \begin{subfigure}{0.315\textwidth}
        \centering
        \includegraphics[width=\linewidth]{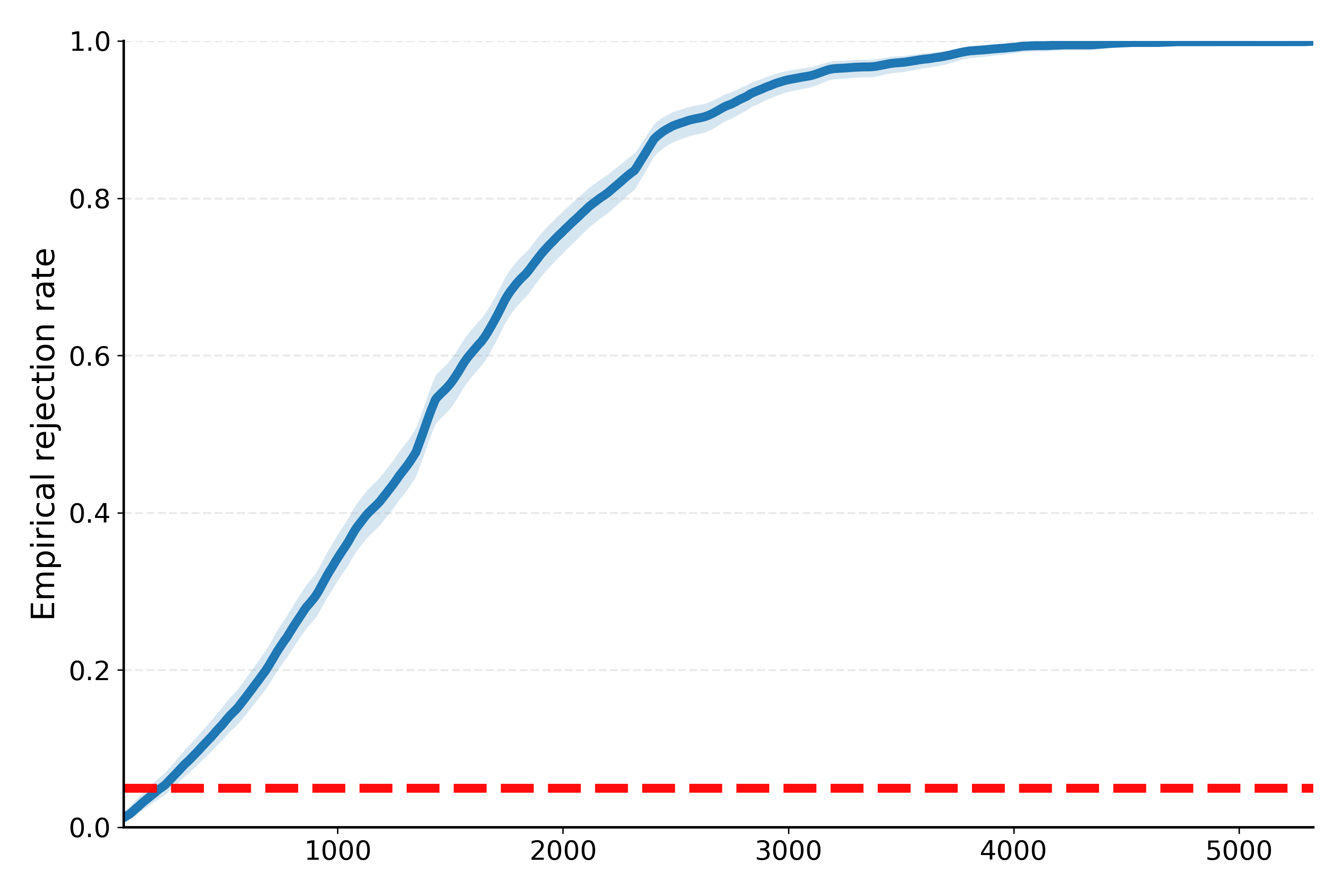}
        \subcaption*{
            $\mu = 0.8$ \\
            Blackbox
        }
    \end{subfigure} &
    \begin{subfigure}{0.315\textwidth}
        \centering
        \includegraphics[width=\linewidth]{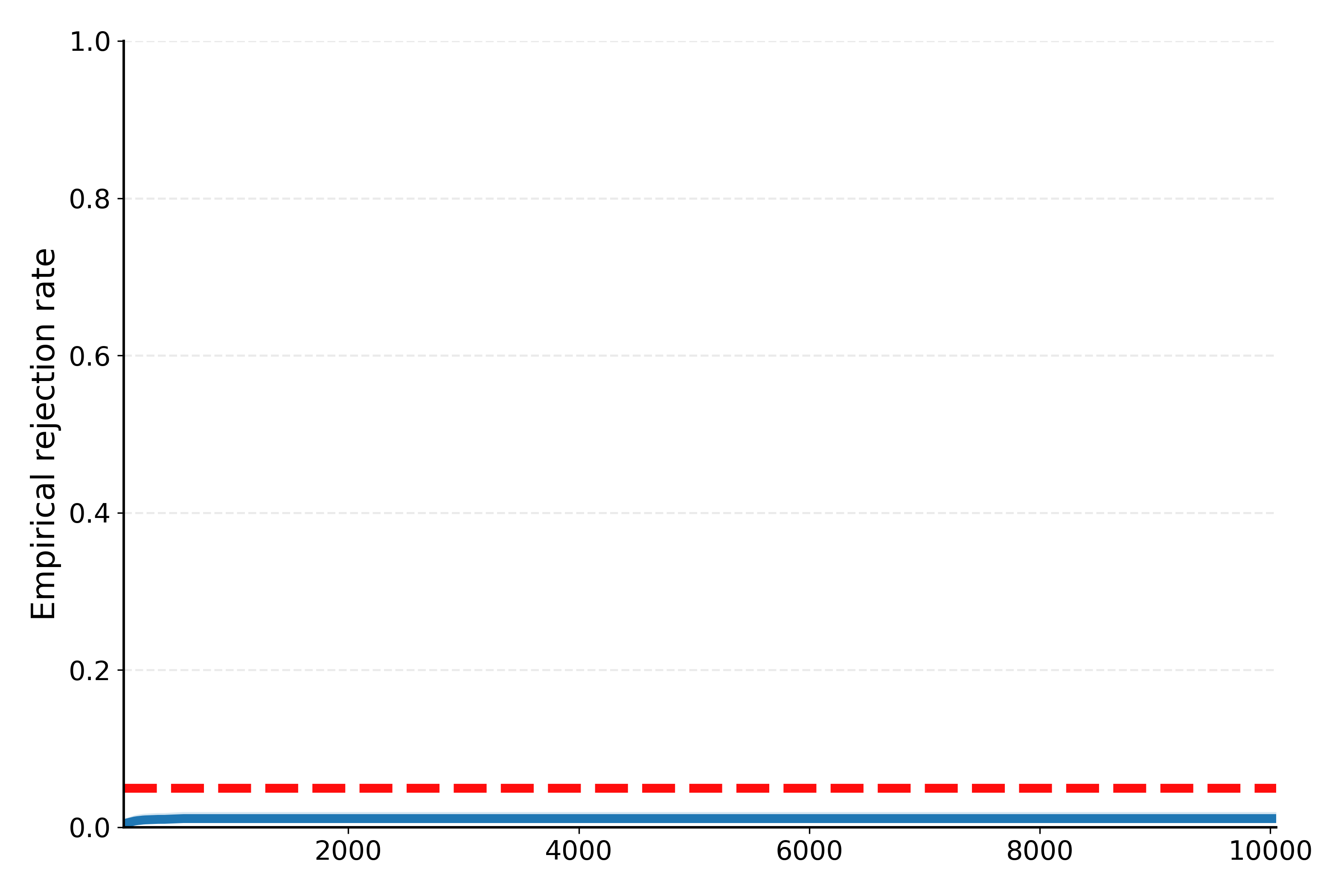}
        \subcaption*{
            $\mu = 1$ \\
            Blackbox
        }
    \end{subfigure} \\
    \hline
    \rotatebox{90}{\hspace{16mm}\textbf{Laplace}} &
    \begin{subfigure}{0.315\textwidth}
        \centering
        \includegraphics[width=\linewidth]{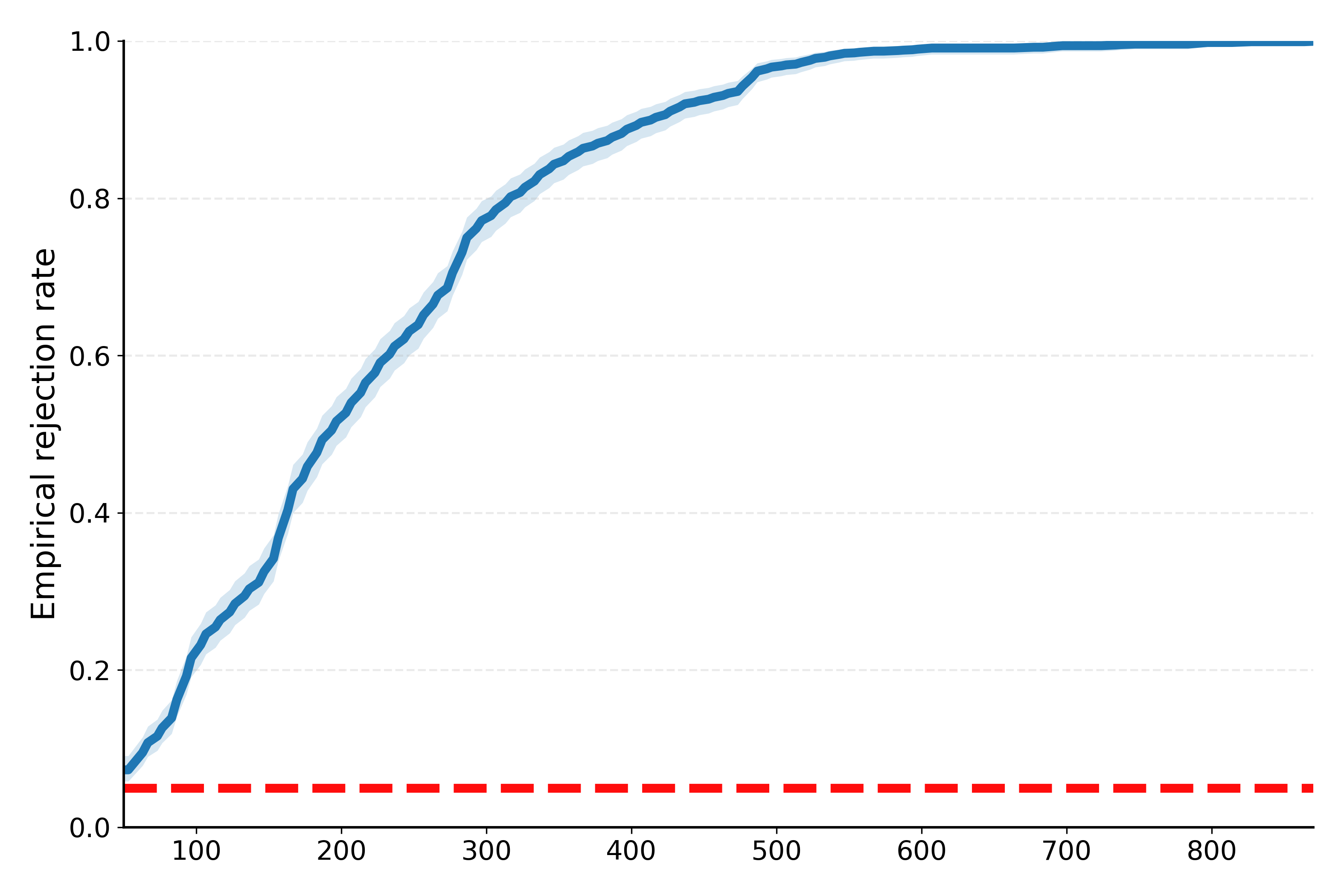}
        \subcaption*{
            $\mu = 0.5$ \\
            Blackbox
        }
    \end{subfigure} &
    \begin{subfigure}{0.315\textwidth}
        \centering
        \includegraphics[width=\linewidth]{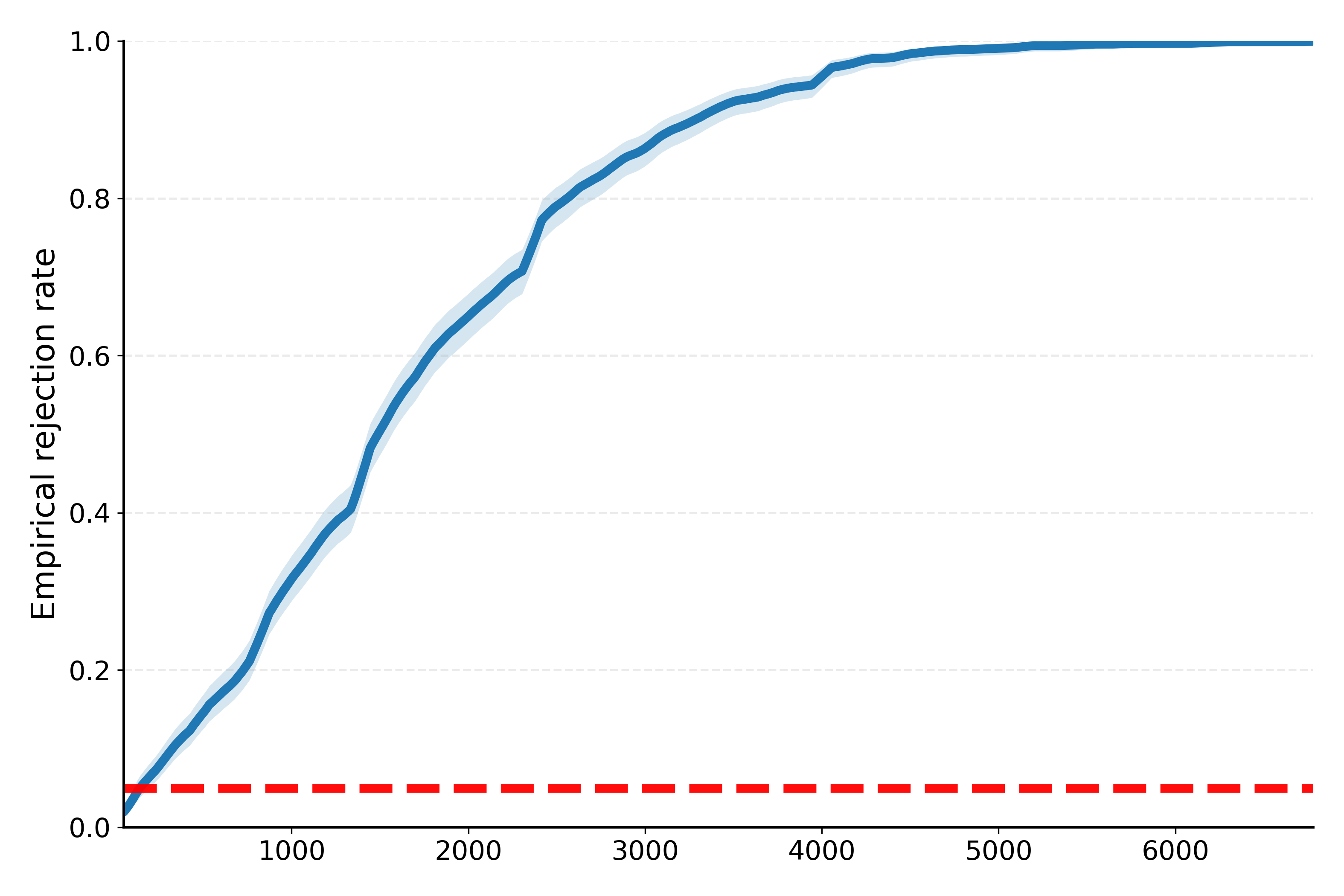}
        \subcaption*{
            $\mu = 0.8$ \\
            Blackbox
        }
    \end{subfigure} &
    \begin{subfigure}{0.315\textwidth}
        \centering
        \includegraphics[width=\linewidth]{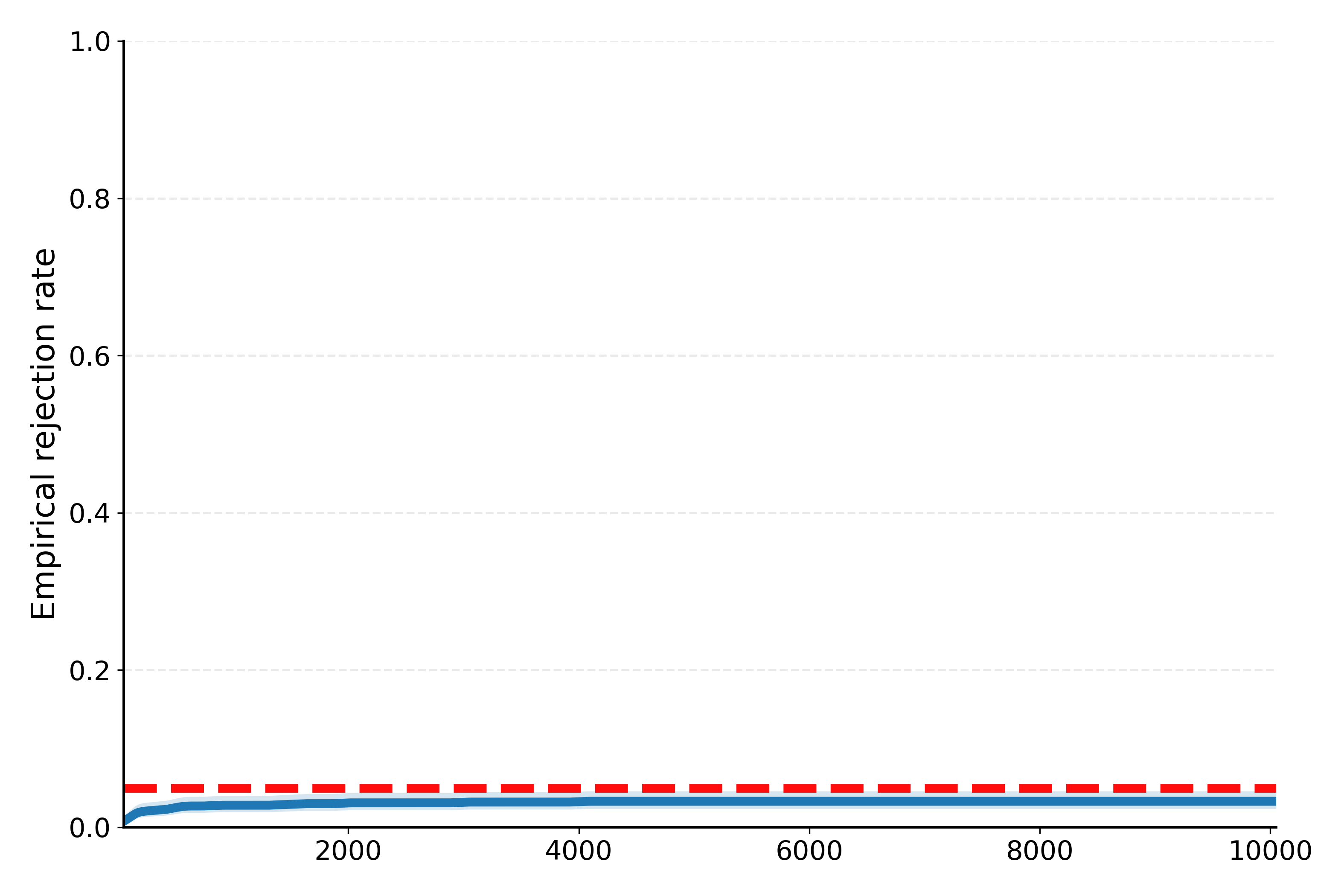}
        \subcaption*{
            $\mu = 1$ \\
            Blackbox
        }
    \end{subfigure} \\
    \hline
    \rotatebox{90}{\hspace{16mm}\textbf{DP-SGD}} &
    \begin{subfigure}{0.315\textwidth}
        \centering
        \includegraphics[width=\linewidth]{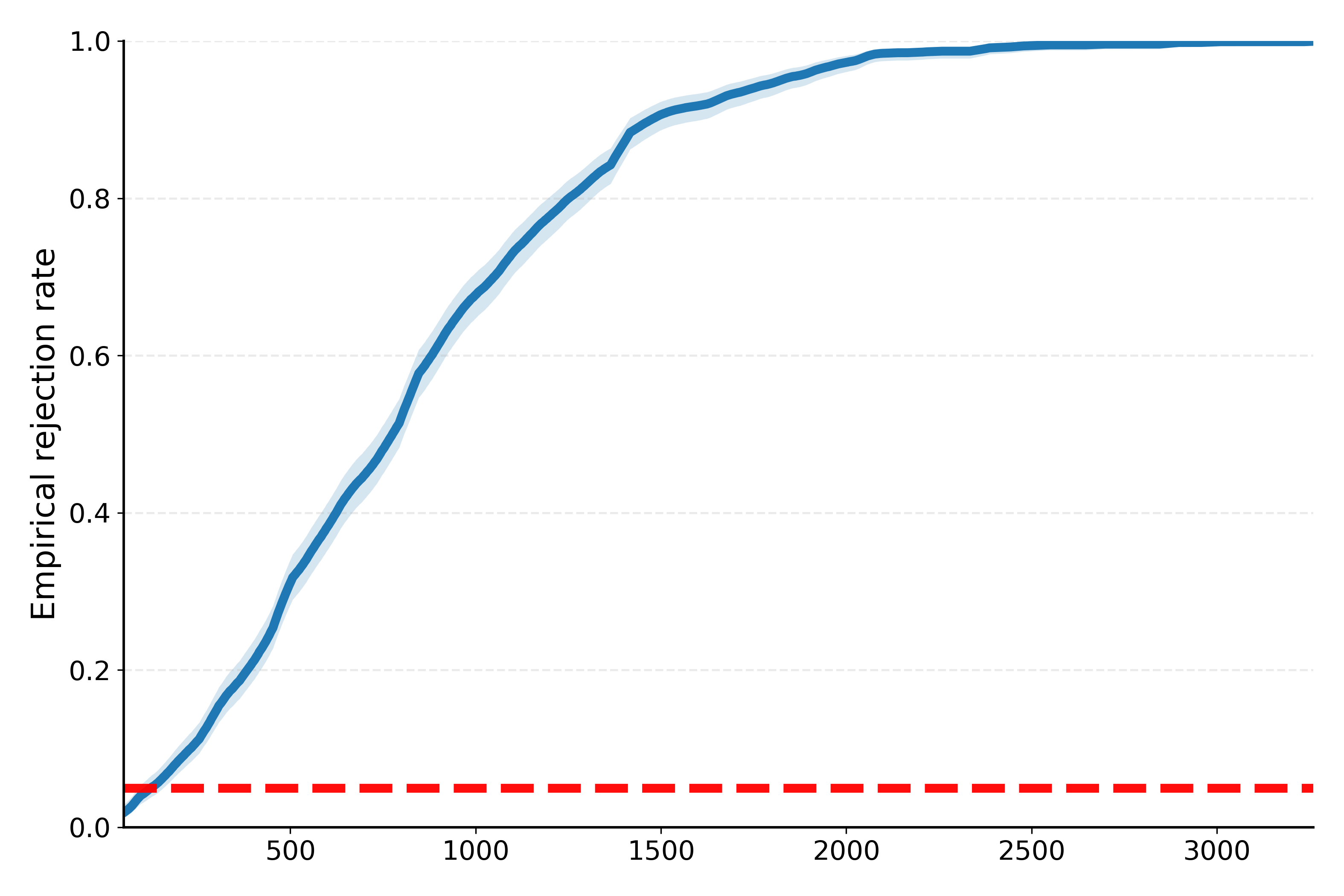}
        \subcaption*{
            $\tau = 5$ \\
            Blackbox
        }
    \end{subfigure} &
    \begin{subfigure}{0.315\textwidth}
        \centering
        \includegraphics[width=\linewidth]{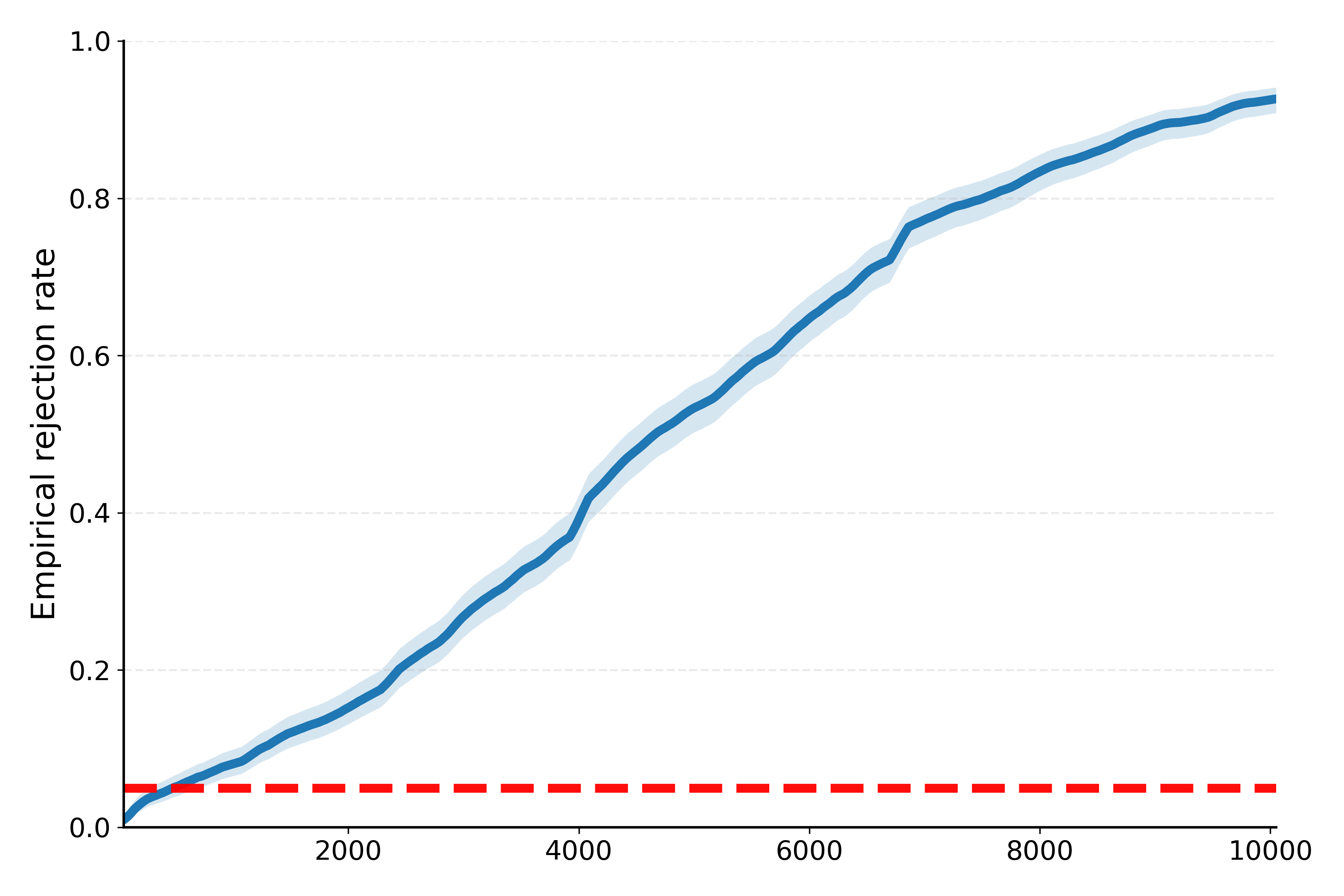}
        \subcaption*{
            $\tau = 7$ \\
            Blackbox
        }
    \end{subfigure} &
    \begin{subfigure}{0.315\textwidth}
        \centering
        \includegraphics[width=\linewidth]{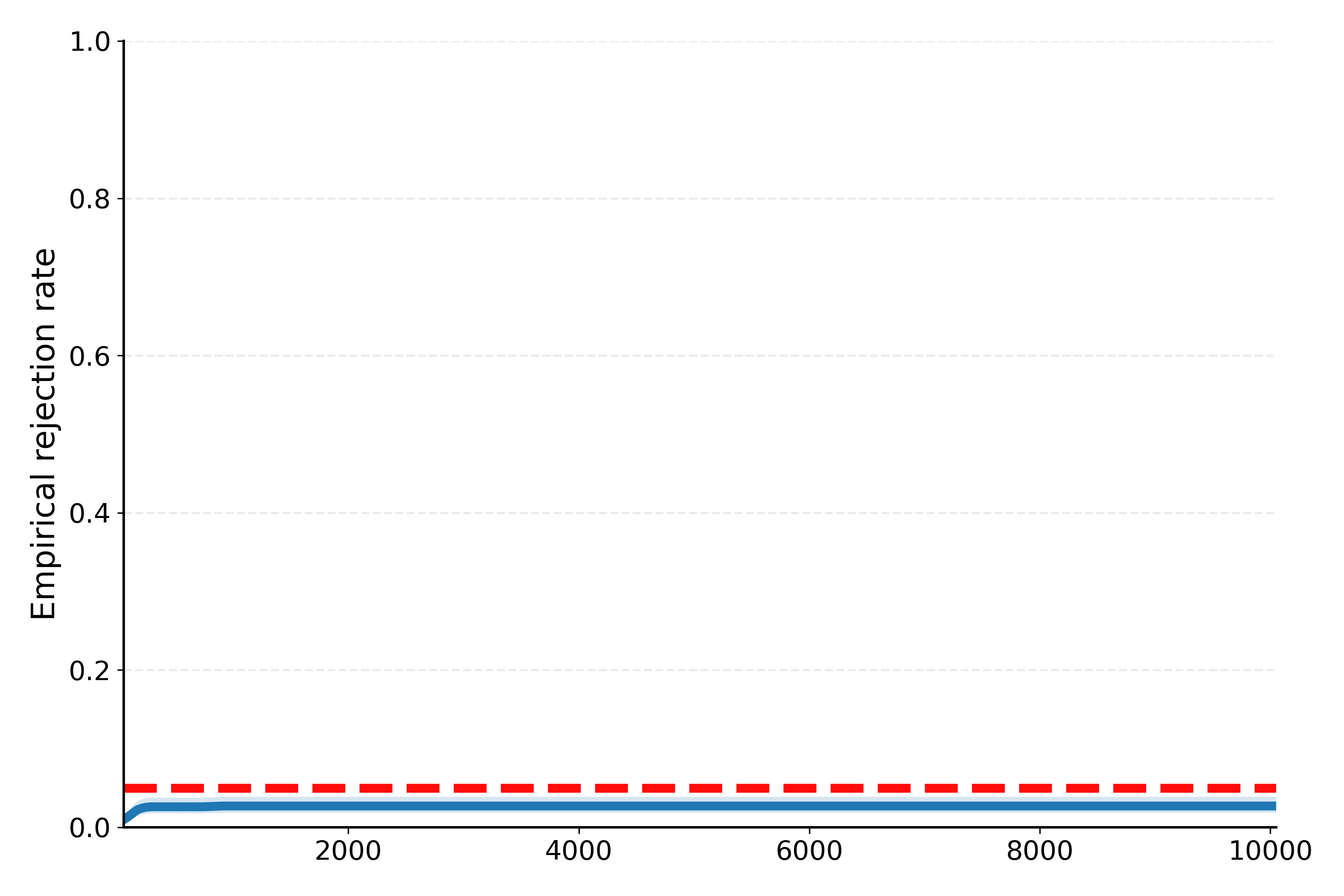}
        \subcaption*{
            $\tau = 10$ \\
            Blackbox
        }
    \end{subfigure} \\
    \hline
    \end{tabular}

    \caption{\centering Empirical rejection rates relative to sample sizes across $1,000$ runs for a fixed significance level $\gamma=0.05$.  
Rows correspond to different mechanisms and columns to the strength of privacy violation. Shaded ribbons indicate pointwise $95\%$ confidence intervals.}

\label{fig:power_curves}
   
\end{figure*}

To illustrate the impact of sequential auditing, we also compare our \emph{sequential} APT procedure to a \emph{fixed-batch} (offline) baseline and a previous auditor which also operates in \emph{fixed-batch} (see Appendix~\ref{app:add_sim} for details).
\paragraph{Interpretation of the results}

Each curve summarizes $1,000$ independent repetitions of the auditing experiment. For a given stopping time $n$ on the horizontal axis, we record whether the procedure has rejected by time $n$, and plot the resulting empirical rejection rate on the vertical axis. Thus, the blue curve can be interpreted as a power-over-time curve. Essentially it captures how quickly the auditor accumulates evidence and starts rejecting as more samples are observed.
For each fixed value of $n$, the empirical rejection rate is itself random.
If the true rejection probability is $p_n$, then the number of rejections
over $R$ repetitions follows a binomial distribution 
$\operatorname{Bin}(R,p_n)$, and the empirical rejection rate is this count
divided by $R$. To visualize the resulting uncertainty, we have included shaded
ribbons showing pointwise $95\%$ confidence intervals for the rejection
probabilities. These intervals are very narrow in our experiments.
The red dashed horizontal line indicates the significance level $\gamma=0.05$. Under the null hypothesis (third column), the blue curve should stay near this line, while under alternatives (first and second column) it should rise above it.

Across all simulation settings, the sequential auditor behaves as intended under $H_0$: the empirical rejection probability remains at (or below) the prescribed level $\gamma$. Here, we do not expect to fully use the prescribed false positive rate $\gamma$, as the procedure is supposed to hold even when our sample size becomes arbitrarily large, i.e. $n\to \infty$. Under different alternatives, the procedure detects the signal quickly, often after only a few hundred samples. Consequently, only minor effort is needed in many cases. Stronger 
privacy violations (larger effect size) impact the detection rate positively, which is a classical and expected phenomenon.

\subsection{Comparison to Prior Work}
\label{sec:comparison-prior}

An empirical comparison with existing work is only meaningful for other sequential methods. The only existing work in this direction seems to be the (independent and parallel) method by \cite{gonzalez2025sequentially} for $(\ve, \delta) $-DP. A direct apples-to-apples comparison is inherently difficult, because the two works audit different privacy objects.  We test a claimed $f$-DP tradeoff curve: the null is that the hypothesized tradeoff function $T$ for neighboring output distributions $P$ and $Q$ satisfies that 
\begin{align*}
    H_0: T(\alpha) \geq f_{\claim}(\alpha), \quad \forall \alpha \in [0,1], \tag{c.f.\eqref{e:hyp:aud}}
\end{align*}
whereas rejection means that a classifier-induced pair $(\alpha_\phi,  \beta_\phi)$ provides evidence that the claimed tradeoff curve is violated. 

By contrast, \cite{gonzalez2025sequentially} audit approximate DP.  Since the hockey-stick divergence defining \((\varepsilon,\delta)\)-DP is difficult to test directly in a black-box setting, their method tests an MMD surrogate:
\begin{align}
    \label{e:hyp:other work aud}
    H^{\mathrm{MMD}}_{0}:\quad
    \mathrm{MMD}(P,Q) \leq \tau_{\mathrm{MMD}}(\varepsilon,\delta),    
\end{align}
where the constant $\tau_{\mathrm{MMD}}(\varepsilon,\delta)$ is defined as $\sqrt{2}\left(1-\frac{2(1-\delta)}{1+e^{\varepsilon}}\right)$.
Thus $\mathrm{MMD}(P,Q)>\tau_{\mathrm{MMD}}(\varepsilon,\delta)$ indicates a violation of the claimed $(\varepsilon,\delta)$-DP guarantee. We note that, however, the converse does not  hold: a mechanism may violate approximate DP while still having small MMD.

Because the two methods differ in their testing objectives, privacy notions, and underlying statistical objects, we do not aim to establish a universal sample-efficiency dominance claim. Indeed, the sample efficiency of \cite{gonzalez2025sequentially} is governed by the MMD margin
\begin{align*}
    \Delta_{\mathrm{MMD}} := \mathrm{MMD}(P,Q)-\tau_{\mathrm{MMD}}(\varepsilon,\delta),
\end{align*}
which depends on the choice of kernel and bandwidth for the MMD test-function class, the neighboring output distributions $P,Q$, and the claimed privacy level. In contrast, the sample efficiency of our auditor is governed by the separation between the classifier-induced error pair and the claimed tradeoff curve, which depends on the construction of the classifier and the neighboring output distributions $P,Q$.

Thus, favorable regimes in terms of sample-efficiency may differ. We do, however, want to point out that for \cite{gonzalez2025sequentially}, the MMD margin typically becomes small in larger-$\varepsilon$ regimes, as the MMD threshold approaches $\tau_{\mathrm{MMD}}(\varepsilon,\delta)$. This "saturation effect" means that the  effect size becomes small, which in turn reduces the test's power. Indeed, \cite{gonzalez2025sequentially} report poor behavior beyond roughly $\varepsilon>0.6$ in their DP-SGD experiments (so fairly high privacy regimes). Our auditor does not rely on this MMD-to-DP surrogate and therefore remains effective for more general privacy parameters. Conversely, MMD-based methods may be advantageous for certain high-dimensional settings with an appropriate kernel choice, whereas a KDE-based APT instantiation is more suited to low-dimensional settings. With these comments in mind, the below experiments should be interpreted as illustrating how the two sequential auditors behave under one specific  test constructions, that was proposed  by~\cite{gonzalez2025sequentially} to evaluate their method.

\paragraph{Additive-noise benchmark.}
For a direct quantitative comparison in the small-$\varepsilon$ regimes where \cite{gonzalez2025sequentially} report strong performance, we follow their additive-noise mean-estimation benchmarks. To align the sample-efficiency comparison with the privacy object audited by~\cite{gonzalez2025sequentially}, we do not audit a tight tradeoff curve here as the preceding experiments . Instead, for the same claimed $(\varepsilon,\delta)$-DP guarantee, our auditor audit the corresponding tradeoff lower bound
\begin{align*}
    T_{\varepsilon,\delta}(\alpha) = 
    \max\left\{
        0,\,
        1-\delta-e^{\varepsilon}\alpha,\,
        e^{-\varepsilon}(1-\delta-\alpha)
    \right\}.
\end{align*}
This choice is less tailored and may worsen our sample efficiency, but it makes the comparison with the $(\varepsilon,\delta)$-DP audit of~\cite{gonzalez2025sequentially} more direct.\\
We now present the following randomized algorithms from \cite{gonzalez2025sequentially}
\begin{align*}
\mathrm{DPLaplace}(X) &:= \frac{\sum_{i=1}^n X_i}{\tilde{n}} + \rho_1, \\
\mathrm{NonDPLaplace1}(X) &:= \frac{\sum_{i=1}^n X_i}{n} + \rho_2.
\end{align*}
Here, \(\tilde{n} = \max\{10^{-12},\, n + \tau\}\) with \(\tau \sim \mathrm{Laplace}(0, 2/\varepsilon)\),
\(\rho_1 \sim \mathrm{Laplace}(0, 2/\lceil \tilde{n} \varepsilon \rceil)\), and
\(\rho_2 \sim \mathrm{Laplace}(0, 2/\lceil n \varepsilon \rceil)\). Analogously, \cite{gonzalez2025sequentially} define DPGaussian and NonDPGaussian, which should not be confused with the standard equal-variance Gaussian shift mechanism associated with $\mu$-GDP. In particular, we use the same claimed $T_{\varepsilon,\delta}$ described above but not the Gaussian/GDP curve. We notice in passing that the adjacency notion in \cite{gonzalez2025sequentially} deviates slightly from ours (exchanging records from a database vs. removing one), but this has little impact on the auditing itself. Below, we consider the remove database notion, in the choice of databases $D=(0), D'=(0,1)$ by \cite{gonzalez2025sequentially}. We summarize the results of our simulation across $1,000$ runs by reporting the average rejection time $\bar n$ and the empirical standard deviation of these runs. Results are reported in Table \ref{tab:seq_audit_two_col_grouped}. For ease of comparison, we also cite the corresponding numbers by \cite{gonzalez2025sequentially} in the same table.

\begin{table*}[htp]
\centering
\caption{Sample efficiency comparison for the additive-noise benchmark. The left block reports Algorithm~\ref{alg:seq-audit} over $1{,}000$ repetitions, using the tradeoff lower bound induced by $(\varepsilon,10^{-5})$-DP for Gaussian and by pure $\varepsilon$-DP for Laplace. The right block contains the paper-reported values from Table~1 of \cite{gonzalez2025sequentially}.}
\label{tab:seq_audit_two_col_grouped}
\small
\setlength{\tabcolsep}{5pt}
\renewcommand{\arraystretch}{1.15}
\begin{tabular*}{\textwidth}{@{\extracolsep{\fill}} l cc cc cc cc @{}}
\toprule
& \multicolumn{4}{c}{Algorithm~\ref{alg:seq-audit} \textnormal{(ours)}}
& \multicolumn{4}{c}{Paper-reported values from \cite{gonzalez2025sequentially}} \\
\cmidrule(lr){2-5}\cmidrule(lr){6-9}
Mechanism
& \multicolumn{2}{c}{\(\varepsilon = 0.01\)}
& \multicolumn{2}{c}{\(\varepsilon = 0.1\)}
& \multicolumn{2}{c}{\(\varepsilon = 0.01\)}
& \multicolumn{2}{c}{\(\varepsilon = 0.1\)} \\
\cmidrule(lr){2-3}\cmidrule(lr){4-5}\cmidrule(lr){6-7}\cmidrule(lr){8-9}
& Rej. rate & \(\bar n\) to rej.
& Rej. rate & \(\bar n\) to rej.
& Rej. rate & \(\bar n\) to rej.
& Rej. rate & \(\bar n\) to rej. \\
\midrule
DPGaussian
& 0 & \textemdash
& 0 & \textemdash
& 0 & \textemdash
& 0 & \textemdash \\
NonDPGaussian1
& 1 & \(60.86 \pm 19.33\)
& 1 & \(71.60 \pm 29.79\)
& 1 & \(264 \pm 9.3\)
& 1 & \(562 \pm 29.2\) \\
DPLaplace
& 0 & \textemdash
& 0.021 & \textemdash
& 0 & \textemdash
& 0 & \textemdash \\
NonDPLaplace1
& 1 & \(81.38 \pm 39.43\)
& 1 & \(115.34 \pm 70.92\)
& 1 & \(331 \pm 14.5\)
& 1 & \(920 \pm 61.6\) \\
\bottomrule
\end{tabular*}
\end{table*}

\paragraph{Interpretation of Table~\ref{tab:seq_audit_two_col_grouped}.}
In these small-\(\varepsilon\) regimes, both procedures detect the non-private Gaussian and Laplace benchmarks with rejection rate one. The stopping times differ substantially:  for \(\varepsilon=0.01\), our average stopping times are approximately $23\%$ and $25\%$ of the paper-reported values in \cite{gonzalez2025sequentially} for \(\mathrm{NonDPGaussian1}\) and \(\mathrm{NonDPLaplace1}\), respectively.  For $\varepsilon=0.1$, the corresponding ratios are approximately $13\%$ and $13\%$. This supports the empirical claim that, on these additive-noise benchmarks and in the regimes reported in \cite{gonzalez2025sequentially}, our $f$-DP auditor can reject with fewer samples.\\
The DP-compliant controls also behave as expected. The (theoretical) significance level for both procedures is $\gamma= 0.05$. In finite samples, the observed rejection rates fluctuate due to randomness across simulations and for our $1,000$ runs, the binomial sampling variability suggests that values below $0.064$ should be considered acceptable. Rejection rates by both methods are always well below $0.064$. For~\cite{gonzalez2025sequentially}, the reported rejection rates are exactly zero under $H_0$ in all scenarios. This is consistent with a conservative test, which can be desirable for false-positive control but is also associated with lower power or slower detection under alternatives.


\begin{figure}[htp]
\centering
\setlength{\tabcolsep}{2pt}
\begin{tabular}{cc}
\includegraphics[width=0.51\columnwidth]{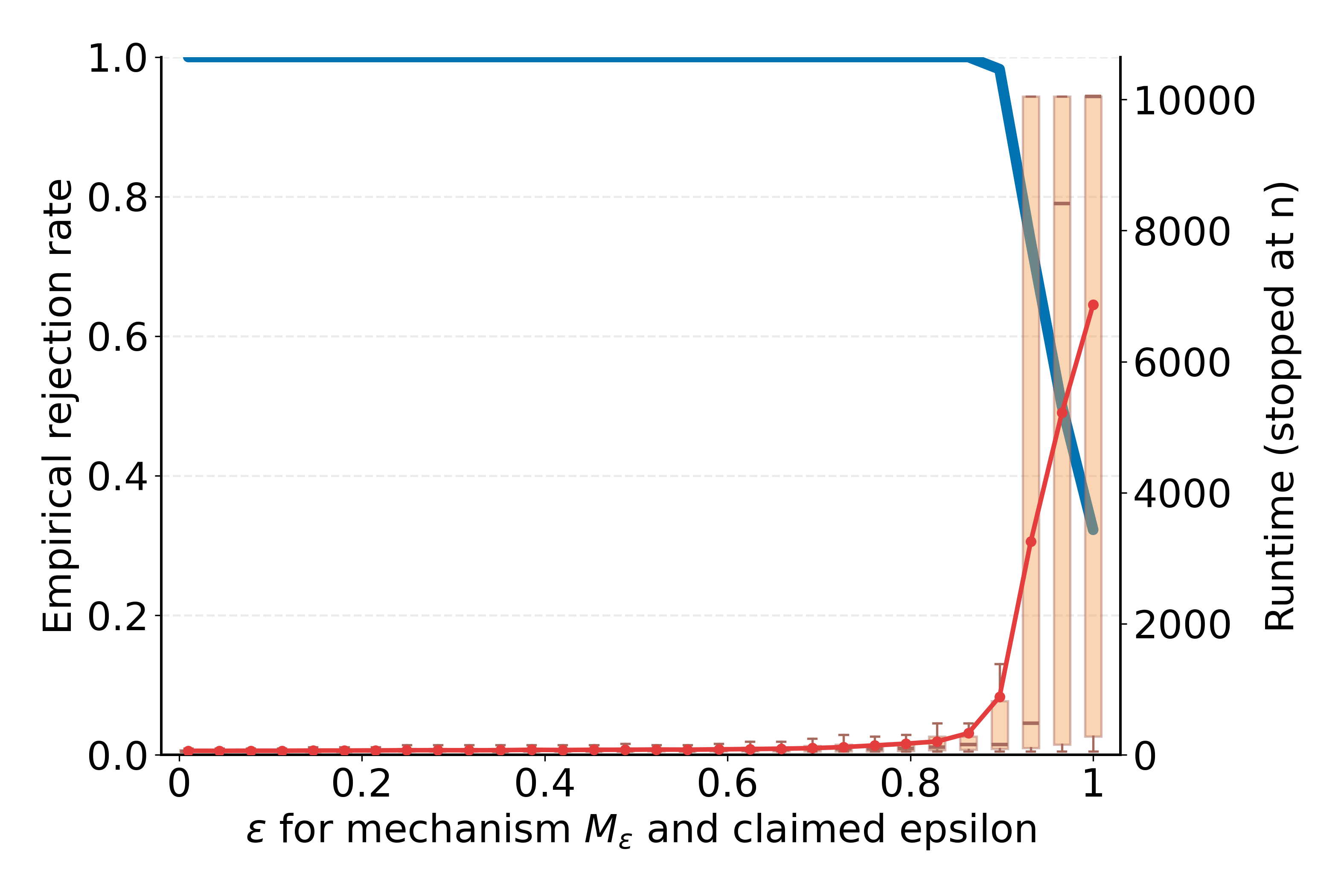} &
\includegraphics[width=0.51\columnwidth]{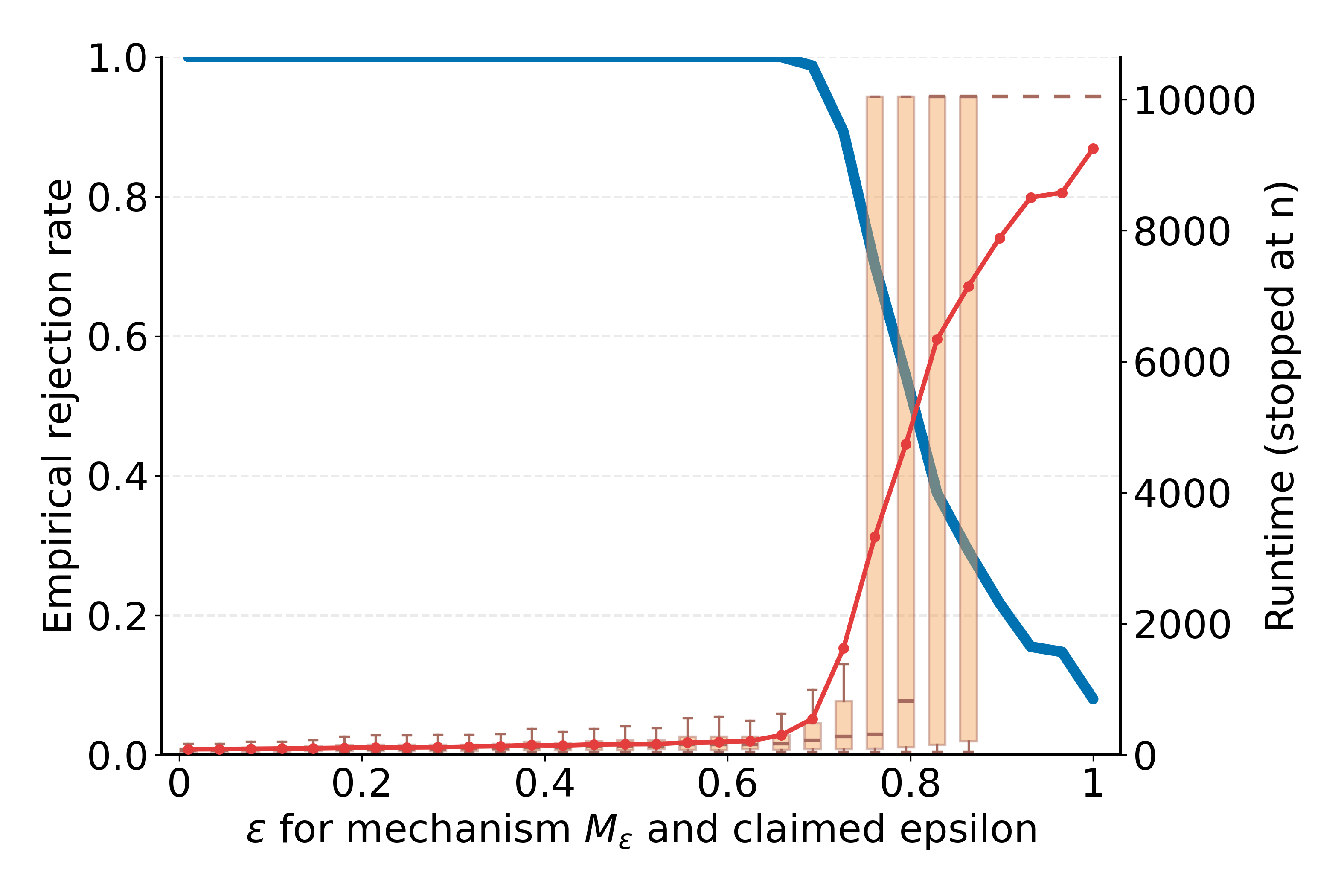} \\
\end{tabular}
\caption{Larger-\(\varepsilon\) behavior of Algorithm~\ref{alg:seq-audit} on \(\mathrm{NonDPGaussian1}\) (left) and \(\mathrm{NonDPLaplace1}\) (right). Blue curves show empirical rejection rates; red curves and boxplots show stopping times, capped at \(10{,}000\) samples.}
\label{fig:large-epsilon-ours}
\end{figure}

\begin{figure}[htp]
\centering
\setlength{\tabcolsep}{2pt}
\begin{tabular}{cc}
\includegraphics[width=0.49\columnwidth]{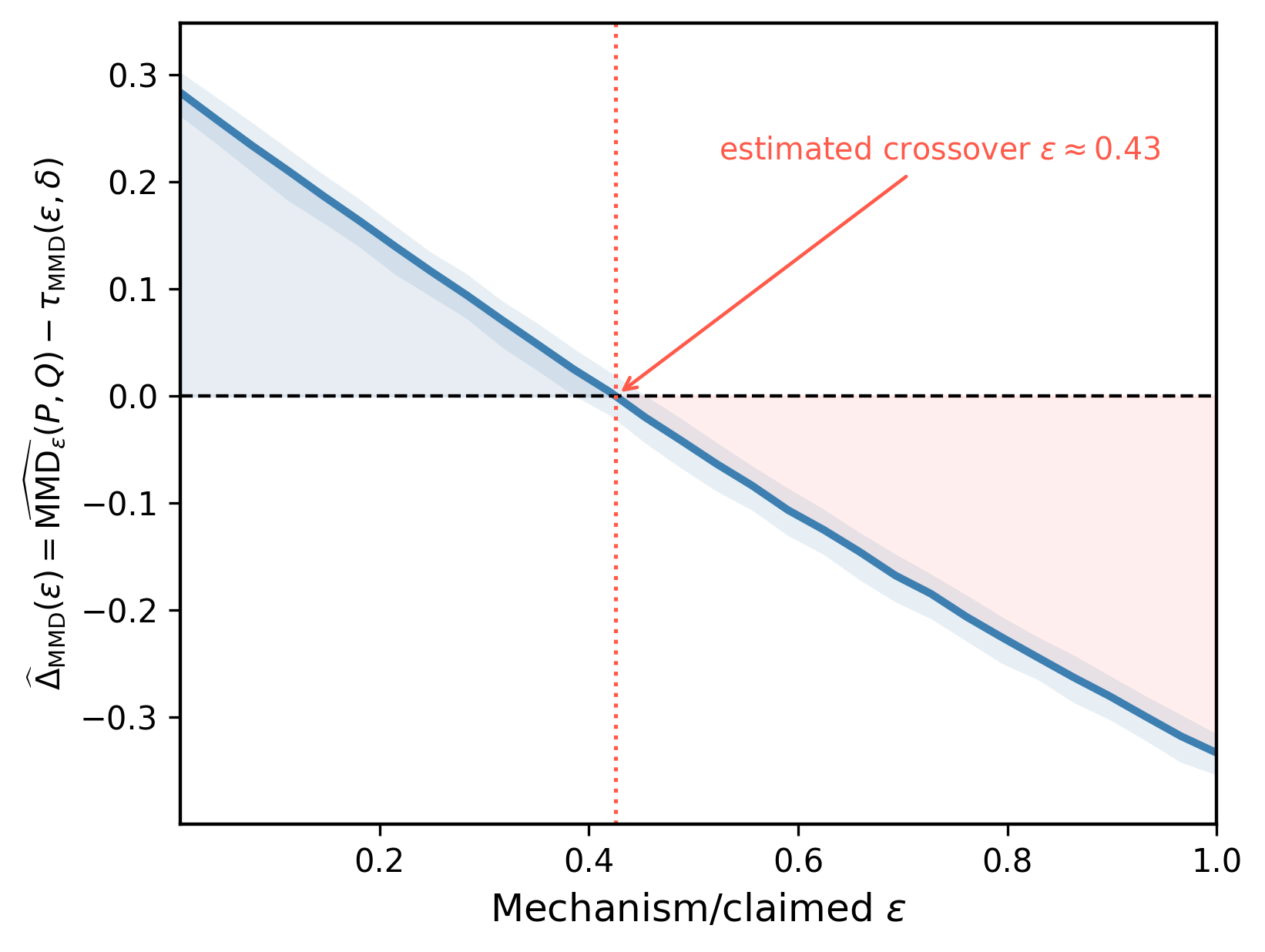} &
\includegraphics[width=0.49\columnwidth]{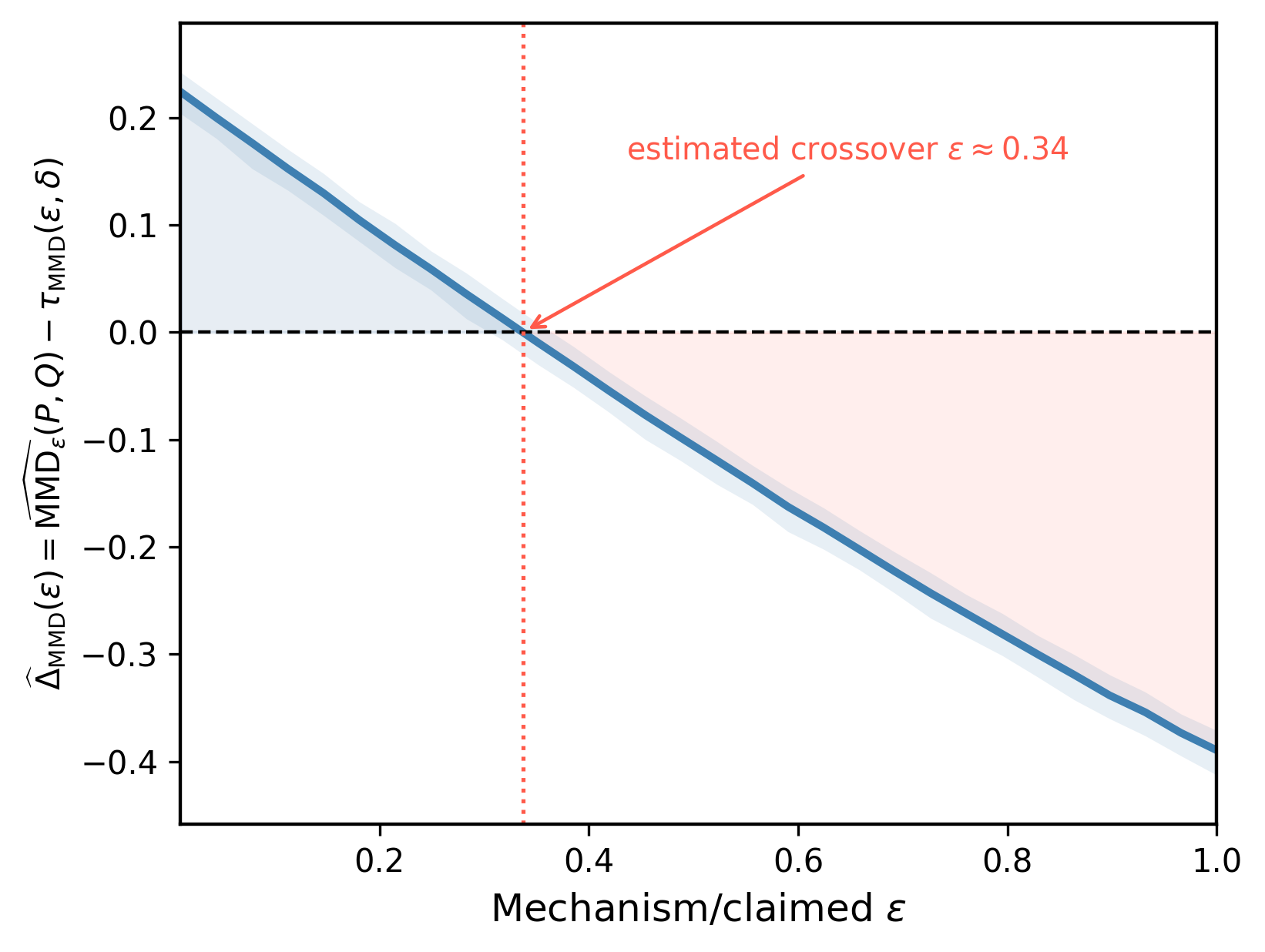} \\
\end{tabular}
\caption{Estimated MMD margins for \(\mathrm{NonDPGaussian1}\) (left) and \(\mathrm{NonDPLaplace1}\) (right); shaded ribbons denote 95\% confidence intervals. Once the margin crosses zero, around $\varepsilon \approx 0.43$ and $\varepsilon \approx 0.34$, respectively, the MMD-based approach can no longer reliably detect the violations.}
\label{fig:mmd-margin-epsilon}
\end{figure}

\paragraph{Larger-$\varepsilon$ regimes and the MMD saturation effect.}
We also sweep over larger claimed privacy levels for the two non-private mechanisms,
$\mathrm{NonDPGaussian1}$ and $\mathrm{NonDPLaplace1}$, to explain the difference between the two auditors. Figure~\ref{fig:large-epsilon-ours} shows that our auditor continues to detect violations over a wider range of $\varepsilon$ than the range in which the MMD surrogate has a positive margin. The blue curve reports the empirical rejection rate, while the red curve and boxplots summarize the stopping time of the auditors, with monitoring capped at $10{,}000$ samples. As $\varepsilon$ increases, the effective violation margin becomes smaller. Consequently, although these mechanisms remain buggy, their violations of the audited lower-bound tradeoff become harder to detect under our auditor instantiation.

Figure~\ref{fig:mmd-margin-epsilon} reports the corresponding estimated MMD margins
$\widehat{\Delta}_{\mathrm{MMD}}(\varepsilon) =\widehat{\mathrm{MMD}}(P,Q)-\tau_{\mathrm{MMD}}(\varepsilon,\delta)$ for the same benchmark, with $\widehat{\mathrm{MMD}}(P,Q)$ computed using the estimation procedure of~\cite{gonzalez2025sequentially}. $\widehat{\Delta}_{\mathrm{MMD}}(\varepsilon)$ crosses zero at approximately $\varepsilon\approx 0.43$ for $\mathrm{NonDPGaussian1}$ and $\varepsilon\approx 0.34$ for $\mathrm{NonDPLaplace1}$.  Once the MMD margins are non-positive, collecting more samples cannot make the MMD surrogate reject reliably even if the mechanism still violates $(\varepsilon,\delta)$-DP.

\subsection{Real-World DP-SGD / Auditing in one run}
\label{sec:dp-sgd}
In this section, we demonstrate that our sequential pipeline fits together with the recent emerging one-run DP-SGD auditing paradigm~\cite{steinke2023privacy},  in which a single DP-SGD training~\cite{AbadiCGMMT016, DeBerradaHayesSmithBalle22} run simultaneously evaluates many (Dirac) canaries and provides multiple sequences of per-training-step observations for auditing. Concretely, we instantiate our method in a realistic DP-SGD setting using the now-standard whitebox canary methodology~\cite{nasr2023tight} from recent DP-SGD auditing work: the auditor injects Dirac canaries and records step-wise intermediate statistics from the noised and clipped gradient during the private training.

Following~\cite{nasr2023tight}, at each DP-SGD step $t$, we extract a scalar observation for a fixed canary by taking the dot product between the (clipped and noised) gradient and a fixed canary direction. We then aggregate the per-step observations of a canary into a single scalar by averaging over training steps. We denote these averaged scores by $\{X_i\}_{i=1}^n$ for the canary-included world and by $\{Y_i\}_{i=1}^n$ for the no-canary world, where $n$ is the number of (independent) canaries used per side. The Dirac canary direction is chosen to create a detectable gradient-inner-product signal for the resulting neighboring pair, which does not necessarily yield a worst-case neighboring pair.

To instantiate our sequential pipeline in Algorithm~\ref{alg:seq-audit}, it suffices to define a classifier using \textproc{BuildClassifier} from samples $\{X_i\}$ and $\{Y_i\}$. We adopt the simple threshold rules presented in Section \ref{sec:4.2}, where $\{\phi_\eta\}_{\eta\in\mathbb{R}}$ is given by
\begin{align*}
    \phi_\eta(z) = \mathbf{1}\{z \ge \eta\}.
\end{align*}

We have trained a DP model on the CIFAR-10 dataset following the architecture and hyperparameters of~\cite{DeBerradaHayesSmithBalle22}: WideResNet-16-4 with GroupNorm/Weight-Standardization, batch size $4096$, clipping norm $C=1.0$, noise multiplier $\sigma=3.0$. We additionally insert $10{,}000$ canaries during training, following the canary setup of~\cite{nasr2023tight}~\footnote{Prior DP-SGD auditing work often uses $\approx 5{,}000$ canaries. We observe comparable behavior at $5{,}000$; we use up to $10{,}000$ here to more clearly illustrate the gains from sequential auditing over a wider range of claimed $\mu$ and number of canaries.}. As in the one-run auditing methodology, the Dirac canaries are placed on distinct coordinates, which reduces cross-canary interference and makes the resulting scores closer to the independent-sample setting considered in our theory. We nevertheless do not claim robustness to arbitrary dependence among canary scores.


\paragraph{Experiment settings.}
To illustrate the impact of sequential auditing, we run the following experiment. Denote by $\widehat P$ and $\widehat Q$ the empirical distributions of the canary-included and no-canary scores produced by the single training run. We then perform $250$ independent resampling repetitions, and run our sequential auditor on each resampled subsets, respectively. We use these repetitions to study the auditor's variability without retraining DP-SGD, enabling the boxplots in Figure~\ref{fig:real_world}.
 
For the parameters, we choose the default settings as throughout the whole section. In Figure \ref{fig:real_world} we illustrate the empirical rejection rates (blue) and average runtime until rejection, expressed by the sample size $n$ (red). The x-axis represents the claimed $\mu$ (smaller $\mu$ means a larger DP violation). Additionally, we include boxplots of the runtimes over the 250 runs, to illustrate variation in runtimes. 
In this setup, the precise DP level is unknown. Yet a theoretical upper bound, in terms of $(\varepsilon, \delta)$-DP, can be given with $\ve^*=6.56, \delta^*=10^{-5}$.\footnote{This $\ve^*=6.56$ corresponding to $\delta^*=10^{-5}$ is computed using the privacy accountant provided by \texttt{Opacus}~\cite{opacus} for DP-SGD, with the noise multiplier, sampling rate, and number of training steps derived from the prescribed training parameters.}  Using the relation between $\mu-$GDP and $(\ve,\delta)-$DP (see \cite{dong:roth:su:2022}), this bound translates into a parameter value of exactly $\mu=1.4$; this means we expect that for $\mu=1.4$, no DP violation exists. By the same calculation, we notice that $\mu=1.2$ corresponds to $\ve=5.413$ and $\delta=10^{-5}$, and $\mu=1.1$ corresponds to $\ve=4.88$ and $\delta=10^{-5}$.

\paragraph{Interpretation of the results} First, consider the blue curve of rejection probabilities in Figure \ref{fig:real_world}. At $\mu=1.4$ (where we believe no DP violation exists), we do not reject in any of our 250 runs, and in particular hold the significance level. Afterwards, the number of rejections rises rapidly to about $35\%$ for $\mu=1.2$, reaching $>99.5\%$ for $\mu=1.1$. For smaller values of $\mu$, i.e. larger DP violations, the rejection rate is always $100\%$, but larger violations are associated with shorter runtimes. To see this, consider the red curve of average runtimes until rejection. For $\mu=1.4$, runtimes are maximal, because no rejection takes place (this is good because any rejection would be a false rejection in this case). Afterwards, average runtimes drop fast, to less than $3,000$ for $\mu=1.1$ and going all the way to a few hundred for $\mu=0.8$. \emph{This result is crucial: it demonstrates that our adaptive DP auditor drastically reduces the sample size $n$, if small samples suffice.}

An even more nuanced picture emerges when considering the runtime boxplots, which express the variability in runtimes. Variability decreases as the DP violation grows larger, with all runs very short for $\mu \le 1$. At $\mu=1.1$ variability is higher, but we can see that about $50\%$ of runs finish for $n \le 2,000$, and $25\%$ even for $n \le 1,000$. Thus, even for smaller DP violations, there exists a large proportion of reasonably short audits.  Boxplots for the cases $\mu \ge 1.2$ are less meaningful, because the majority of runs end without rejection.

\begin{figure}
    \centering
    \includegraphics[width=0.75\linewidth]{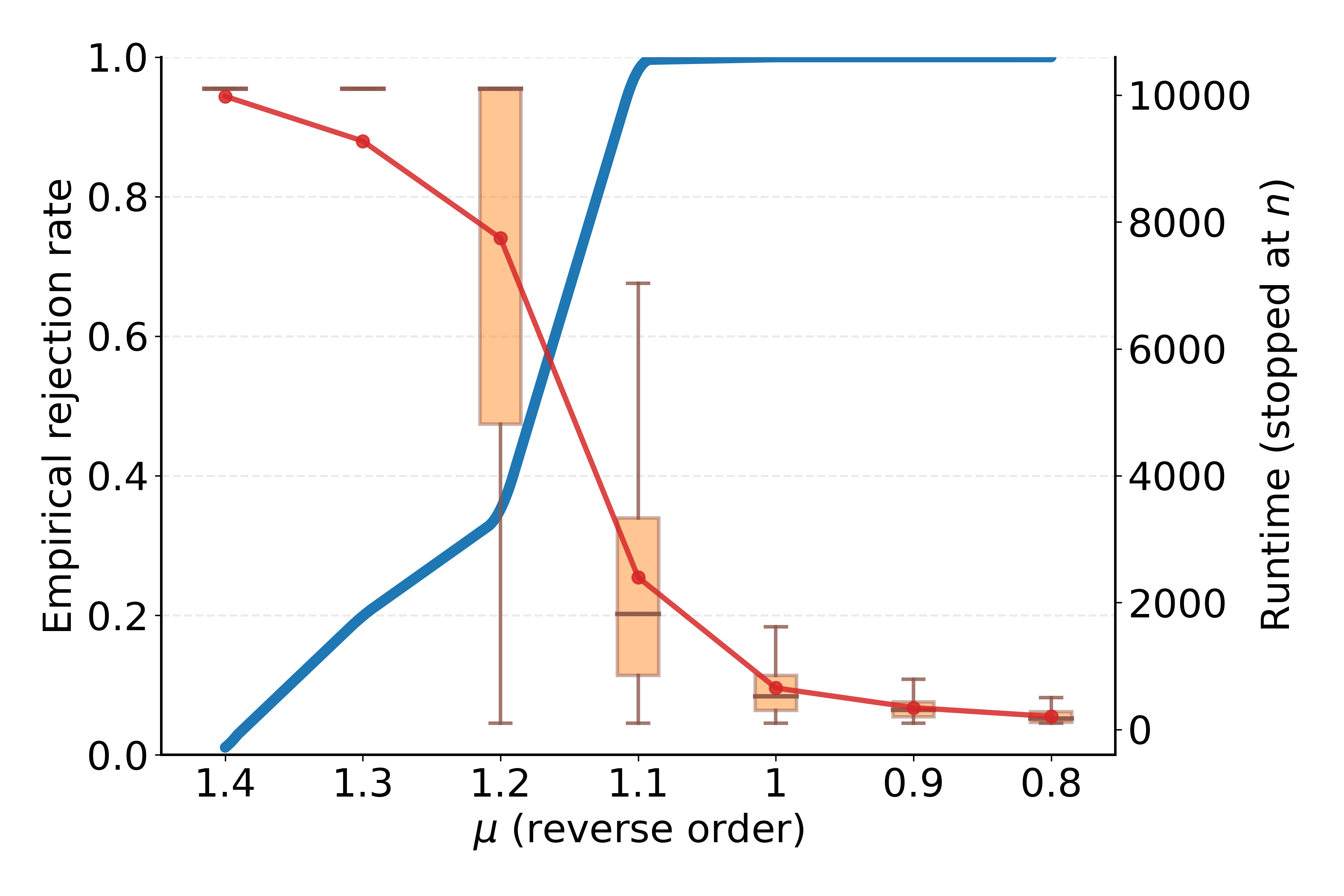}
    \caption{Empirical rejection rate (blue curve) and average runtime (red curve) for different claimed $\mu$. Variation in runtimes is represented by orange boxplots. Notice that smaller values for $\mu$ correspond to larger privacy violation.}
    \label{fig:real_world}
\end{figure}


%% file: 05-ethical-considerations.tex
\section*{Ethical Considerations}
\paragraph{Stakeholder analysis.}
The primary stakeholders are data subjects, whose privacy the audited mechanisms are designed to protect, model trainers seeking to validate their privacy guarantees. The intended benefit of this work is to reduce the cost of detecting implementation bugs, or invalid audit practices that could otherwise lead to excess privacy leakage. 
\vspace{-0.4cm}

\noindent \paragraph{Risks and mitigations.}
Likeur other privacy auditing tools, o auditor could in principle be misused to search for weak or misconfigured deployments. This risk is mitigated by the fact that the auditor requires repeated blackbox or whitebox access to the mechanism and reports only statistical evidence of a privacy violation, rather than reconstructing private records or revealing the underlying dataset. 

\vspace{-0.4cm}
\noindent \paragraph{Decision to conduct and publish the research.}
We conducted this work because sample-efficient and statistically valid privacy auditing can help practitioners identify and fix privacy failures before deployment. We publish the method and artifacts to support reproducibility and help the community avoid invalid early-stopping practices in DP auditing. Our intended use case is authorized auditing by mechanism designers, system owners, or compliance teams.


%% file: 06-open-science.tex
\section*{Open Science}

To comply with the Open Science Policy, we make our research artifacts available to the public~\footnote{\url{https://github.com/martindunsche/sequential_fdp_auditing}; permanent archive: \url{https://doi.org/10.5281/zenodo.20398258}}. The artifact includes the code implementing our auditing pipeline, together with scripts and instructions to reproduce the experimental results reported in the paper. In our repository, we ensure the transparency and reproducibility by describing the dataset used, the specification of our machines, and citing all algorithms tested.


%% file: 100-proofs-and-technical-details.tex
\section{Proofs and Technical Details}

The main probabilistic result, used in our theory are the KMT \cite{kmt:1975, kmt:1976} approximations. A modern formulation and extension of these results is given e.g. in \cite{berkes:liu:wu:2014}. The last reference matters in connection with our result to settings with dependent data, which we discuss in~Appendix \ref{sec:adaptive_monitoring}.
For the next result, we restrict our attention to i.i.d. data. As a first step we cite the KMT approximations, stated in the above papers.

\begin{theo} \label{thm:kmt}(KMT approximation) Suppose $(X_i)_i$ are i.i.d. random variables with $|X_i|\le C<\infty$ a.s., $\mathbb{E}X_i=0$ and $\mathbb{E}X_i^2=: \sigma^2$. Then, possibly after enlarging the probability space, there exist i.i.d. Gaussians $(Z_i)_i$ with $Z_i \sim \mathcal{N}(0, \sigma^2)$ such that
{\small
\begin{align*}
    \bigg|\sum_{i=1}^k \{X_i -Z_i\} \bigg|=\mathcal{O}(\log(k)), \qquad a.s.
\end{align*}
}
\end{theo}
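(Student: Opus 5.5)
Since the statement is the classical Koml\'os--Major--Tusn\'ady theorem, I would not rebuild it from scratch. The plan is to reduce it to the original KMT result \cite{kmt:1975, kmt:1976}, in the form recorded in \cite{berkes:liu:wu:2014}, and then only verify the hypotheses and translate the conclusion.

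\textbf{Reduction.} The KMT theorem assumes i.i.d.\ centered variables with $\mathbb{E}\exp(t|X_1|)<\infty$ for some $t>0$. Under that assumption one can construct, on a suitably enlarged probability space, a standard Brownian motion $W$ with
\[
\Big|\sum_{i=1}^k X_i - \sigma W(k)\Big| = \mathcal{O}(\log k) \quad \text{a.s.}
\]
Since $|X_i|\le C$, every exponential moment is finite, so the hypothesis holds trivially. Setting $Z_i := \sigma\,(W(i)-W(i-1))$ gives i.i.d.\ $\mathcal{N}(0,\sigma^2)$ variables with $\sum_{i\le k} Z_i=\sigma W(k)$, which is exactly the claim. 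The degenerate case $\sigma=0$ forces $X_i=0$ a.s., so $Z_i:=0$ works.

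\textbf{Why the KMT bound holds (dyadic construction).} Fix $N=2^n$ and proceed in three steps.
\begin{enumerate}
\item Couple the block sum $S_N$ with $\sigma W(N)$ via the quantile transform.
\item Recursively, given the sum over a dyadic block, couple the sum over its left half with the corresponding Gaussian conditional quantity, again by conditional quantile transforms. This builds the whole dyadic tree of partial sums.
\item Use the key one-step estimate, a Tusn\'ady-type lemma: the quantile coupling of a normalized sum of $m$ variables with a Gaussian has error at most $K(1+x^2/m)$ on $\{|x|\le \varepsilon m\}$. This follows from Cram\'er-type moderate/large deviation expansions, which is where the exponential moments enter.
\end{enumerate}
Summing these errors along the $\log_2 N$ levels of the tree, together with exponential Chebyshev bounds, gives a tail inequality
\[
\mathbb{P}\Big(\max_{k\le N}\Big|\sum_{i\le k}(X_i-Z_i)\Big|> A\log N + x\Big)\le K e^{-\lambda x}.
\]
Taking $x=c\log N$ along $N=2^n$ and applying Borel--Cantelli, then interpolating between dyadic times, yields the almost sure $\mathcal{O}(\log k)$ bound.

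\textbf{Main obstacle.} The hard step is the conditional quantile coupling for general, non-binomial distributions. One needs uniform Cram\'er-type expansions for the conditional law of a half-block sum given the full-block sum. For lattice or non-smooth laws this requires a smoothing or separate local-limit argument, which is precisely the technical core of \cite{kmt:1975}. I would cite this step rather than reprove it, since the statement here is a direct specialization of that theorem.
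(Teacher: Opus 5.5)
Your proposal takes essentially the paper's approach: the paper does not prove this statement either, but cites the Koml\'os--Major--Tusn\'ady theorem in the form given by Berkes--Liu--Wu. Your three checks supply exactly the translation that is needed: boundedness gives all exponential moments, $Z_i:=\sigma(W(i)-W(i-1))$ yields the required i.i.d.\ $\mathcal{N}(0,\sigma^2)$ sequence, and $\sigma=0$ is trivial. One slip in your optional sketch: Borel--Cantelli along $N=2^n$ needs a single coupling of the whole infinite sequence for which the tail bound holds at every $N$ at once, not a separate dyadic construction for each fixed $N$ (KMT handle the infinite case with an additional blocking argument). Since you cite that step rather than reprove it, this does not affect your argument.
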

We use the KMT approximation to conduct, roughly speaking, an infinite number of hypothesis tests in $k$.

\subsection{Proof of Theorem \ref{theo:main}} \label{sec:proof:main}

The main step in the proof will be the Gaussian approximation of the difference of empirical errors of $\phi$ minus the true error. We demonstrate the main step for the error $\alpha_\phi$, noting that the proof for $\beta_\phi$ works in exact analogy. For any $k \ge M$, the estimator of  $\alpha_\phi$ is the empirical version $\hat \alpha_\phi(k):=\frac{1}{k} \sum_{i=1}^{k} \phi(X_i)~.$
We now consider the rescaled difference
{\small
\begin{align*}
D(k):=&\dfrac{\sqrt{k}}{\sqrt{\log\!\left(20 + k/M\right)}} \big[\hat \alpha_\phi(k) - \alpha_\phi\big] \\
=& \dfrac{1}{\sqrt{k\log\!\left(20 + k/M\right)}}\sum_{i=1}^{k} \big[\phi(X_i)-\mathbb{E}\phi(X_i)\big]. 
\end{align*}
}

Since $\phi(x) \in [0,1]$, the random variables $\phi(X_i)-\mathbb{E}\phi(X_i)$ are independent, identically distributed and bounded. In particular, using Theorem \ref{thm:kmt}, we can find normally distributed random variables $Z_i \sim \mathcal{N}(0,\mathbb{V}ar(\phi(X_1)))$ for $1 \le i <\infty$ such that almost surely 
\[
\bigg|\sum_{i=1}^{k}\Big\{ \big[\phi(X_i)-\mathbb{E}\phi(X_i)\big]-Z_i\Big\}\bigg| = \mathcal{O}(\log(k))
\]
Defining $D_Z(k):= \dfrac{1}{\sqrt{k\log\!\left(20 + k/M\right)}}\sum_{i=1}^{k} Z_i,$
it follows that, for large $M$
\begin{align*}
& \sup_{k \ge M}|D_Z(k)-D(k)| = \mathcal{O}\bigg(\sup_{k \ge M}\frac{\log(k)}{\sqrt{k \log(20+k/M)} }\bigg) \\
=& \mathcal{O}\bigg(\sup_{k \ge M}\frac{\log(k)/k^{1/4}}{M^{1/4}} \bigg)=\mathcal{O}(M^{-1/4}) =o(1)
\end{align*}
almost surely. On an intuitive level, this approximation completes the proof. We first explain why and then provide some remaining mathematical details. We have shown that $D(k)$ may be replaced for all $k \ge M$ by $D_Z(k)$. Now, taking the upper $(1-\gamma/2)$ quantile of $\sup_{k \ge M} D_Z(k)$, say $q_{1-\gamma/2}$, implies that 
\[
D(k) \le q_{1-\gamma/2}
\]
for all $k \ge M$ simultaneously with probability converging to $1-\gamma/2$ as $M \to \infty$. The quantile is calculated by Algorithm \ref{alg:conf-adj}. In practice there exists a minor gap, since $\mathbb{V}ar(\phi(X_1))$ is not known, but it can be estimated by the empirical variance over the burn-in period and Slutsky's theorem ensures consistency. Using an analogous approximation for $\beta_\phi$, we obtain a sequence of one-sided confidence rectangles $(\alpha_\phi, \beta_\phi)$ described by Algorithm \ref{alg:conf-adj}. Coverage is, by the Bonferroni bound, $1-\gamma$ for $M \to \infty$, and the confidence bounds converge to $(\alpha_\phi, \beta_\phi)$ as $k \to \infty$. Accordingly, the guarantees from Theorem \ref{theo:main} follow directly. \\
One mathematical subtlety remains: For our above arguments, we have used the real-valued quantiles of the object $\sup_{k \ge M}D_Z(k)$. Yet we have, strictly speaking, not established that these are finite (or uniformly bounded in $M$). To ensure that, we show weak convergence of $\sup_{k \ge M}D_Z(k)$, which in particular implies finite quantiles. A bit of mathematical background is therefore needed on the so-called Brownian motion, a key process in probability theory. The Brownian motion is here denoted by $W(t), 0 \le t < \infty$. It is a random function and we list some of its properties, useful for our below derivation:
\begin{enumerate}
  \item[B1)] $W(0)=0$ a.s. and $t\mapsto W(t)$ is continuous a.s. 
  \item[B2)] $W$ has independent, centered and Gaussian distributed increments.
  \item[B3)] $\operatorname{Cov}(W(s),W(t))=\min\{s,t\} \cdot \mathbb{V}ar(W(1))$.
  \item[B4)]  $W(t), t \ge 0$ has the same distribution as  $W(ct)/\sqrt{c}, t \ge 0$ for any  $c>0$.
  \item[B5)]
  \[
    \limsup_{t\to\infty} \frac{|W(t)|}{\sqrt{2t\log\log t}} = \sqrt{\mathbb{V}ar(W(1))}, \quad a.s.
  \]
\end{enumerate} 
Properties B1)-B3) are commonly used to characterize the Brownian motion. B4) and B5) are important consequences. B4) is often referred to as the scaling property, and intuitively states that the Brownian motion looks the same on large or small scales. B5) is called "the law of the iterated logarithm" and ensures that $W(t)$ is (eventually) contained in a domain only growing slightly faster than $\sqrt{t}$.\\
Returning to our proof,
we notice that the following equality holds in distribution
\[
\sup_{k \ge M} D_Z(k) \overset{d}{=}\sup_{k \ge M} \dfrac{W(k)}{\sqrt{k\log\!\left(20 + k/M\right)}}.
\]
Here, $W$ is a centered Brownian motion, with variance $\sigma^2:=\mathbb{V}ar(\phi(X_1))$. The reason is that (using B1)-B3)) one can easily show that $\sum_{i=1}^{k} Z_i, k \ge 1$ have jointly the same distribution as $W(k), k \ge 1$. 
Next, 
using the rescaling property B4) with $c=1/M$ it follows that in distribution
\[
\sup_{k \ge M} \dfrac{W(k)}{\sqrt{k\log\!\left(20 + k/M\right)}} \overset{d}{=}\sup_{x \ge 1, x=k/M} \dfrac{W(x)}{\sqrt{x\log\!\left(20 + x\right)}}.
\]
The random variable on the right is almost surely finite, due to the law of iterated logarithm for the Brownian motion in B5).
Now, let us fix the random seed and consider $W(t), t \ge 0$ just as a fixed function. As  $M\to \infty$, the right side converges (monotonically from below) to $L:= \sup_{x \ge 1} \dfrac{W(x) }{\sqrt{x\log\!\left(20 + x\right)}}.$
To obtain this convergence, we have used that the function $W(x)/(\sqrt{x\log(20 + x})$ is continuous according to B1) and goes to $0$ as $x \to \infty$ which is again following from an application of the law of the iterated logarithm B5). 
These considerations imply the weak convergence $\sup_{k \ge M} D_Z(k) \overset{d}{\to} L,$
finishing our proof.

%% file: 101-additional-algorithms.tex
\section{Additional Algorithms}

Our APT procedure makes use of a variety of subprocedures, e.g. to calculate confidence boundaries or to build classifiers in various scenarios. We give a list of blueprints below.

\begin{algorithm}[htp]
\small
\begin{algorithmic}[1]
\Require Samples $\{X_i\}_{i=1}^{m}$ and $\{Y_i\}_{i=1}^{m}$; candidate threshold set $\mathcal T$; claimed curve $f_{\claim}$; learning rule $\mathcal L$.
\Ensure A classifier $\phi(\cdot)$ and its selected threshold $\eta^*$.

\Function{BuildClassifier}{$
\{X_i\}_{i=1}^{m},\allowbreak
\{Y_i\}_{i=1}^{m},\allowbreak
\mathcal T,\allowbreak
f_{\claim},\allowbreak
\mathcal L
$}

  \State Train a scoring function $s(\cdot) \gets \mathcal L(\{X_i\}_{i=1}^{m}, \{Y_i\}_{i=1}^{m})$.
  \For{each $\eta \in \mathcal T$}
    \State Define $\phi_\eta(y) \gets \mathbf{1}\{ s(y) \ge \eta\}$.
    \State Compute $\hat\alpha(\eta) \gets \frac{1}{m}\sum_{i=1}^{m}\phi_\eta(X_i)~.$
    \State Compute $\hat\beta(\eta) \gets 1-\frac{1}{m}\sum_{i=1}^{m}\phi_\eta(Y_i)~.$
  \EndFor
  \State Select $\eta^*$ using equation \eqref{eq:new_eta_max} applied to the curve $\{(\hat\alpha(\eta),\beta(\eta)) : \eta \in \mathcal T\}$.
  \State Define $\phi(y) \gets \mathbf{1}\{ s(y) \ge \eta^* \}$.
  \State \Return $\phi(\cdot)$.
\EndFunction
\end{algorithmic}
\caption{\textproc{BuildClassifier}: Generic classifier construction for APT via a learned score function.}
\label{alg:build_classifier_general}
\end{algorithm}

\begin{algorithm}[htp]
\small
\begin{algorithmic}[1]
\Require \parbox[t]{\dimexpr0.9\linewidth-\algorithmicindent\relax}{%
  Blackbox access to $\mathcal M$; threshold $\eta>0$; sample size $n$; databases $D, D'$.}
\Ensure \parbox[t]{\dimexpr0.9\linewidth-\algorithmicindent\relax}{%
  Estimate $(\hat{\alpha}(\eta),\hat{\beta}(\eta))$ of $(\alpha(\eta),\beta(\eta))$ for $(P,Q)$,
  where $\mathcal M(D)\sim P$ and $\mathcal M(D')\sim Q$.}

\State Choose perturbation parameter $h$.
\State Set density estimator $\mathcal A$ (default: KDE).

\Function{PLRT}{$\mathcal A,\allowbreak \mathcal M,\allowbreak D,\allowbreak D',\allowbreak \eta_{\mathrm{vec}},\allowbreak n$}
  \State Run $\mathcal M$ on $D$ and $D'$ to obtain $n$ samples from each.
  \State Fit densities $\hat p$ and $\hat q$ using $\mathcal A$.
  \For{$\eta \in \eta_{\mathrm{vec}}$}
    \State $\hat{\alpha}(\eta)\gets$
    \Statex \hspace{\algorithmicindent}$\displaystyle
      \frac{1}{h}\int_{-h/2}^{h/2}\left(\int
      \mathbf{1}\!\left\{\log\!\frac{\hat q(z)}{\hat p(z)}>\log\eta+x\right\}\hat p(z)\,dz\right)\,dx$
    \State $\hat{\beta}(\eta)\gets 1-$
    \Statex \hspace{\algorithmicindent}$\displaystyle
      \frac{1}{h}\int_{-h/2}^{h/2}\left(\int
      \mathbf{1}\!\left\{\log\!\frac{\hat q(z)}{\hat p(z)}>\log\eta+x\right\}\hat q(z)\,dz\right)\,dx$
  \EndFor
  \State \Return $(\hat p,\hat q,\hat T)$ where $\hat T=\{(\hat{\alpha}(\eta),\hat{\beta}(\eta)):\eta\in\eta_{\mathrm{vec}}\}$.
\EndFunction

\end{algorithmic}
\caption{\textproc{PLRT}: Non-parametric estimator for an optimal $f$-DP classifier}
\label{alg:PLRT}
\end{algorithm}

\begin{algorithm}[htp]
\small
\begin{algorithmic}[1]
\Require \; \parbox[t]{\dimexpr\linewidth-\algorithmicindent-3em}{
Empirical means $\hat T_1, \hat T_2$;
critical value $q_{1-\gamma/2}$;\\ burn-in size $M$; current sample size $k$.
}
\Ensure \, Confidence-adjusted $(\Teins, \Tzwei)$.
\Function{ConfAdj}{$\hat T_1, \hat T_2, q_{1-\gamma/2}, M, k$}
    \State $b \gets \dfrac{\sqrt{\log\!\left(20 + k/M\right)}}{\sqrt{k}}$, standard deviations $s_1 \gets \sqrt{\hat T_1(1-\hat T_1)} $, $s_2 \gets \sqrt{\hat T_2(1-\hat T_2)}$.
    \State $\Teins \gets T_1 + q_{1-\gamma/2}\, s_1\, b$; $\Tzwei \gets T_2 + q_{1-\gamma/2}\, s_2\, b$
    \State $\Teins \gets \min\{1,\max\{0,\, \Teins\}\}$
    \State $\Tzwei \gets \min\{1,\max\{0,\, \Tzwei\}\}$
    \State \Return $(\Teins, \Tzwei)$
\EndFunction
\end{algorithmic}
\caption{Confidence Adjustment for Sequential Audit}
\label{alg:conf-adj}
\end{algorithm}

\begin{algorithm}[htp]
  \caption{\textsc{Quant}$(M,\alpha)$, where standard values $\alpha=0.01,0.05,0.1$ are saved.}
  \label{alg:quantile-threshold}
\small
  \begin{algorithmic}[1]
    \Require Burn‑in size $M\in\mathbb{N}$, significance level $\alpha\in(0,1)$
    \Ensure  $(1-\alpha)$‑quantile $q_{1-\alpha}$ of
      $W=\displaystyle\sup_{k\ge M}\tfrac1{\sqrt{(k)\log(20+k/M)}}\sum_{i=1}^{k}Z_i$ 

    \State $R \gets 10^{5}$ \Comment{number of Monte‑Carlo replications}
    \State $T \gets 10^{4}$ \Comment{truncation for the supremum}
    \For{$r = 1,\dots,R$}
        \State Draw $Z^{(r)}_1,\dots,Z^{(r)}_{T}$ i.i.d.\ $\mathcal N(0,1)$
        \State $W_r \gets \displaystyle\max_{k\ge M}
                 \tfrac1{\sqrt{k\log(20+k/M)}}
                 \sum_{i=1}^{k} Z^{(r)}_i$
    \EndFor
    \State $q_{1-\alpha} \gets$ empirical $(1-\alpha)$‑quantile of $\{W_1,\dots,W_R\}$
    \State \Return $q_{1-\alpha}$
  \end{algorithmic}
\end{algorithm}
\begin{algorithm}[h]
\begin{algorithmic}[1]
\Require \; \parbox[t]{\dimexpr0.9\linewidth-\algorithmicindent}{Estimated tradeoff points $(\hat\alpha(\eta), \hat\beta(\eta))$ for all $\eta \in \eta_{\mathrm{vec}}$, and the claimed curve $f_{\claim}$.}\\[0.1cm]
\Ensure \; Critical $\eta^*$ maximizing diagonal (45°) distance between empirical and claimed curves.

  \Function{Compute-$\eta^*$-Diagonal}{$\hat\alpha(\eta), \hat\beta(\eta), f_{\claim}, \eta_{\mathrm{vec}}$}
    \State Initialize $\text{dists} \gets 0$.
    \For{$\eta_i$ in $\eta_{\mathrm{vec}}$}
      \State Set $\hat\alpha \gets \hat\alpha(\eta_i)$ and $\hat\beta \gets \hat\beta(\eta_i)$.
      \State Define $f_{45}(\alpha') := f_{\claim}(\alpha') - [\,\hat\beta + (\alpha' - \hat\alpha)\,]$.
      \State Solve for $\alpha'$ satisfying $f_{45}(\alpha') = 0$ on $[0,1]$ \Comment{(e.g., via \texttt{uniroot}).}
      \If{a valid root $\alpha'$ is found}
        \State Set $\beta' \gets f_{\claim}(\alpha')$.
        \State Compute $d(\eta_i) := \sqrt{(\alpha' - \hat\alpha)^2 + (\beta' - \hat\beta)^2}$.
      \Else
        \State Set $d(\eta_i) := \text{NA}$.
      \EndIf
    \EndFor
    \State Return $\eta^* := \argmax_{\eta_i \in \eta_{\mathrm{vec}}} d(\eta_i)$.
  \EndFunction

\end{algorithmic}
\caption{\textproc{Compute-$\eta^*$-Diagonal}: 45°-based selection of critical $\eta^*$}
\label{alg:eta_star_45}
\end{algorithm}

\begin{algorithm}[htp]
\small
\begin{algorithmic}[1]
\Require Last refit sample size $n_{\text{last}}$; refit threshold $\tau \in (0,1)$.
\Ensure $M_1$.

\Function{AuditShouldRefit}{$n_{\text{last}},\tau$}
  \State Compute shrinkage for $n>n_{\text{last}}$:
  \[
    \texttt{shrink}(n) \gets 1 - \Big(\frac{n_{\text{last}}}{n}\Big)^{1/5}.
  \]
  \State Set decision: $ M_1 \gets \argmin_{n}\mathbf{1}\{\texttt{shrink}(n) > \tau\}.$
  \State \Return $M_1$.
\EndFunction
\end{algorithmic}
\caption{\textproc{AuditShouldRefit}: Dynamic choice of refresh period $M_1$.}
\label{alg:audit_should_refit}
\end{algorithm}


%% file: 102-theory-adaptive-classifiers.tex
\section{Theory for Adaptive Classifiers} 
\label{sec:adaptive_monitoring}
In Theorem~\ref{theo:main}, we justified our sequential auditing
procedure under the simplifying assumption that the classifier
$\phi$ is fixed throughout the audit. This assumption is useful for
explaining the core idea, but it is restrictive in blackbox settings.
There, the analyst will typically refine the classifier as more
samples become available. To justify such scenarios mathematically, we derive in this section a Gaussian approximation for the estimated type-I and type-II errors under evolving classifiers. This approximation is (as we recall) the central tool for proving Theorem \ref{theo:main}.\\
The key point is the following. Reusing previous samples to update
the classifier makes the sequence of classification outcomes
dependent. However, sequential Gaussian approximations are not
limited to independent data. They remain valid for weakly dependent
and approximately stationary processes under suitable assumptions.
We therefore formulate the adaptive classifier sequence as a weakly
dependent process and invoke the Gaussian approximation principle
developed in~\cite{kutta:kokoszka:2025}. The resulting theorem is an
adaptive analogue of Theorem~\ref{theo:main}.\\
Let $(X_k,Y_k)_{k\ge 1}$ denote the pairs of outputs sampled from
the two neighboring databases, with $X_k\sim P$ and $Y_k\sim Q$.
At time $k$, the classifier may depend on all previous observations,
but not on the current pair $(X_k,Y_k)$. We write
\[
  \hat \phi_k(x)
  =
  \mathcal H_k\bigl(x,\varepsilon_k,(X_{k-1},Y_{k-1}),\ldots,(X_1,Y_1)\bigr),
\]
where $\mathcal H_k$ is measurable and where $(\varepsilon_k)_{k\ge1}$
is an iid sequence, independent of the data, representing possible
randomization used by the classifier construction. The classification
outcome at time $k$ is encoded by
\[
  R_k
  :=
  \begin{pmatrix}
    \hat \phi_k(X_k) \\
    1-\hat \phi_k(Y_k)
  \end{pmatrix}
  \in \{0,1\}^2 .
\]
Thus the empirical error pair used by the auditor is
$ \hat\theta_k :=
  \begin{pmatrix}
    \hat \alpha_k \\
    \hat \beta_k
  \end{pmatrix}
  := \frac1k\sum_{i=1}^k R_i .$
The first component estimates the type-I classification error and
the second component estimates the type-II classification error.
We next state the assumptions under which adaptive monitoring remains
valid. For this purpose, we first need a notion of dependence across data.
\begin{definition}
Let $(Z_k)_{k\ge1}$ be a time series with values in a separable
metric space. Its $\alpha$-mixing coefficients are defined by
\[
  \alpha_Z(h)
  :=
  \sup_{m\ge1}
  \sup_{\substack{
      A\in \sigma(Z_1,\ldots,Z_m)\\
      B\in \sigma(Z_{m+h},Z_{m+h+1},\ldots)
  }}
  \left|
    \mathbb P(A\cap B)-\mathbb P(A)\mathbb P(B)
  \right|.
\]
We say that $(Z_k)_{k\ge1}$ is $\alpha$-mixing if
$\alpha_Z(h)\to0$ as $h\to\infty$.
\end{definition}
The intuition of mixing coefficients is simple. In the case where the two
$\sigma$-algebras are independent, it should hold for any pair of events that
$\mathbb P(A\cap B)=\mathbb P(A)\mathbb P(B)$, or equivalently that
$\mathbb P(A\cap B)-\mathbb P(A)\mathbb P(B)=0$. Consequently, a natural
generalization to weak dependence, where $\sigma$-algebras contain little
information about each other, is that the difference be small. A detailed
discussion of mixing can be found in~\cite{bradley:2007}. Mixing has for a
long time been the most popular way to describe dependence. Nowadays there
are alternatives, such as physical dependence, and we conjecture that similar
results as derived below would also hold under this dependence concept. Next,
we formulate the main technical assumptions for this section.
\begin{ass}
\label{ass:adaptive}
The adaptive classification process $(R_k)_{k\ge1}$ satisfies the
following conditions.
\begin{itemize}
  \item[A1)] The process $(R_k)_{k\ge1}$ is $\alpha$-mixing with
  coefficients satisfying
  \[
    \alpha_R(h) \le C_\alpha h^{-p}
  \]
  for some constants $C_\alpha>0$ and $p>4$.
  \item[A2)] There exists a deterministic measurable classifier
  $\phi$ such that, with $R_k^\circ := \begin{pmatrix}
      \phi(X_k)\\
      1-\phi(Y_k)
    \end{pmatrix},$
  we have
  \[
    \mathbb P(R_k\neq R_k^\circ) \le C_\phi k^{-\nu}
  \]
  for some constants $C_\phi,\nu>0$.
\end{itemize}
\end{ass}
Assumption~A1 means that the dependence induced by reusing earlier
observations for classifier updates is weak enough. The assumption of
polynomial decay of mixing coefficients is relatively weak.
Assumption~A2 means that the adaptive classifier stabilizes asymptotically
toward a deterministic classifier $\phi$. We do not assume that $\phi$ is
optimal. Optimality is relevant for power, but not for validity of the
sequential confidence adjustment.\\
We next prove the main Gaussian approximation needed to justify the adaptive
auditor. The statement is scalar and applies to either coordinate of $R_k$.
This is sufficient, since the confidence adjustment in the auditor is applied
coordinatewise.\\
For $j\in\{1,2\}$, let $e_1=(1,0)^\top$ and $e_2=(0,1)^\top$, and
define
\[
  U_{j,k}:=e_j^\top R_k,
  \qquad
  U_{j,k}^\circ:=e_j^\top R_k^\circ .
\]
Thus
\[
  U_{1,k}=\hat\phi_k(X_k),
  \qquad
  U_{2,k}=1-\hat\phi_k(Y_k).
\]
Write $S_j(k):=\sum_{i=1}^k\bigl(U_{j,i}-\mathbb E U_{j,i}\bigr)$ and
\[
  D_{j,M}(k)
  :=
  \frac{S_j(k)}
       {\sqrt{k\log(20+k/M)}} ,
  \qquad k\ge M .
\]
\begin{lemma}[Proof in Appendix~\ref{app:adaptive-classifier-proofs}]
\label{lem:adaptive_gaussian_approx}
Suppose Assumption~\ref{ass:adaptive} holds. Fix
$j\in\{1,2\}$ and assume that
\[
  \sigma_j^2:=\operatorname{Var}(U_{j,1}^\circ)>0.
\]
Then, on a suitable probability space,
\[
  \sup_{k\ge M}D_{j,M}(k)
  =
  \sup_{x\ge1}
  \frac{\sigma_j W_j(x)}
       {\sqrt{x\log(20+x)}}
  +o_P(1),
  \qquad M\to\infty,
\]
where $W_j$ is a standard Brownian motion.
\end{lemma}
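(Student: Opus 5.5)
The plan is to reduce the adaptive partial sums to the partial sums of the fixed-classifier sequence $U^\circ_{j,k}$, which is i.i.d. and bounded. We then reuse the argument from the proof of Theorem~\ref{theo:main}. To do this, split
\[
S_j(k)=S_j^\circ(k)+\Delta_j(k),\qquad S_j^\circ(k):=\sum_{i=1}^k\bigl(U^\circ_{j,i}-\mathbb E U^\circ_{j,i}\bigr),
\]
where
\[
\Delta_j(k):=\sum_{i=1}^k\bigl[(U_{j,i}-U^\circ_{j,i})-\mathbb E(U_{j,i}-U^\circ_{j,i})\bigr].
\]
Since $\phi$ is deterministic and the pairs $(X_k,Y_k)$ are i.i.d., the variables $U^\circ_{j,i}$ are i.i.d., bounded in $[0,1]$, and have variance $\sigma_j^2>0$. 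Theorem~\ref{thm:kmt} therefore gives i.i.d. Gaussians with $|S_j^\circ(k)-\sigma_j W_j(k)|=\mathcal O(\log k)$ a.s. The computation in the proof of Theorem~\ref{theo:main} then yields
\[
\sup_{k\ge M}\frac{|S_j^\circ(k)-\sigma_jW_j(k)|}{\sqrt{k\log(20+k/M)}}=\mathcal O(M^{-1/4})\quad\text{a.s.}
\]
We follow this with the scaling property B4) with $c=1/M$ and the monotone-limit argument based on B1) and B5). This reproduces $\sup_{x\ge1}\sigma_jW_j(x)/\sqrt{x\log(20+x)}$ up to $o_P(1)$. The discretization $x\in\{k/M\}$ versus $x\ge1$ continuous is handled by continuity of $W$ and the LIL, exactly as before.

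What remains is to show that
\[
\sup_{k\ge M}\frac{|\Delta_j(k)|}{\sqrt{k\log(20+k/M)}}=o_P(1).
\]
I expect this to be the main obstacle, because the summands $U_{j,i}-U^\circ_{j,i}$ are neither independent nor identically distributed. My first attempt avoids mixing entirely and uses only A2). Write $d_i:=U_{j,i}-U^\circ_{j,i}\in\{-1,0,1\}$, which is nonzero only on $\{R_i\ne R_i^\circ\}$. Then
\[
\mathbb E\sum_{i\ge1}|d_i|\,i^{-1/2}\le C_\phi\sum_i i^{-1/2-\nu}.
\]
This is finite when $\nu>1/2$, and then a Kronecker-type argument gives $|\sum_{i\le k} d_i|=o(\sqrt k)$ uniformly for $k\ge M$, which suffices. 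For general small $\nu$ this crude $L^1$ bound fails, since $\sum_{i\le k}\mathbb E|d_i|\sim k^{1-\nu}$ may exceed $\sqrt k$. In that case I would bound the variance instead:
\[
\operatorname{Var}\Bigl(\sum_{i\in B}d_i\Bigr)\le \sum_{i,l\in B}|\operatorname{Cov}(d_i,d_l)|.
\]
Each covariance can be controlled by the Davydov inequality, which gives the minimum of $\alpha_R(|i-l|)^{1-2/r}$-type terms from A1) and $\mathbb E|d_i|\lesssim i^{-\nu}$ from A2). The process $(d_i)$ is a measurable function of $(R_i,R_i^\circ)$, so I would need its mixing to be inherited. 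I expect this can be arranged or assumed, since $R^\circ$ is i.i.d. and, as a function of the data, adaptive to the same filtration.

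This covariance bound should give $\operatorname{Var}(\Delta_j(n))=o(n)$ with a polynomial rate $n^{1-\kappa}$, $\kappa>0$, because $p>4$ makes the covariances summable. Applying the bound to dyadic blocks $[2^m,2^{m+1})$, a maximal inequality for mixing sums (M\'oricz-type) combined with Chebyshev gives
\[
\mathbb P\Bigl(\max_{2^m\le k<2^{m+1}}|\Delta_j(k)|>\epsilon\, 2^{m/2}\Bigr)\lesssim \epsilon^{-2}2^{-m\kappa}.
\]
Summing over $m\ge\log_2 M$ tends to $0$ as $M\to\infty$, which proves the claim. Alternatively, one could invoke directly the Gaussian approximation for nonstationary mixing sequences from \cite{kutta:kokoszka:2025}, with asymptotic variance $\sigma_j^2$ identified through A2). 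However, the decomposition above is the route I would try first. Combining the two parts then proves the lemma.
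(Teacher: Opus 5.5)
Your reduction of $S_j^\circ$ to Brownian motion via Theorem~\ref{thm:kmt} is sound, and it handles the whole infinite horizon at once. Moreover, for $\nu>1/2$ your Kronecker argument is complete and more elementary than the paper's proof, since it never uses A1.

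The gap is the regime $0<\nu\le 1/2$, which the lemma covers because A2 only requires $\nu>0$. There you bound $\operatorname{Cov}(d_i,d_l)$ through mixing coefficients of $(d_i)$. But $d_i=U_{j,i}-U_{j,i}^\circ$ involves $U^\circ_{j,i}=\phi(X_i)$ (resp.\ $1-\phi(Y_i)$), which is a function of the raw observation and not of $R_i$, while A1 only concerns the $\sigma$-fields generated by $(R_k)$. Concretely, for $i<l$ the independence of $(X_l,Y_l)$ from the past gives
\[
\operatorname{Cov}(d_i,d_l)=\operatorname{Cov}(U_{j,i},U_{j,l})-\operatorname{Cov}(U^\circ_{j,i},d_l).
\]
The first term is at most $4\alpha_R(l-i)$. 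The second measures how the disagreement of the adaptive classifier at time $l$ depends on the earlier data point $X_i$ (resp.\ $Y_i$), and nothing in A1--A2 makes it decay in $l-i$. A2 only bounds it by $C_\phi l^{-\nu}$, which summed over $i<l\le n$ yields $O(n^{2-\nu})$, not $o(n)$. Mixing is inherited by functions of the mixing process itself, not by functions of the joint process $(R_k,R_k^\circ)$. Adaptedness to the data filtration does not help either, since $\hat\phi_l$ may depend on the entire data past. So ``this can be arranged or assumed'' is an additional hypothesis, not a consequence of Assumption~\ref{ass:adaptive}; examples would be joint mixing of $(R_k,R^\circ_k)$, or a decaying influence of distant observations on $\hat\phi_k$. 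Note also that your target $\sup_{k\ge M}|\Delta_j(k)|/\sqrt{k\log(20+k/M)}=o_P(1)$ asks strictly more than the lemma does. It requires $S_j$ to be pathwise close to the particular i.i.d.\ sum $S_j^\circ$, whereas the lemma is only a distributional Gaussian approximation with variance $\sigma_j^2$.

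The paper sidesteps this by coupling to $U^\circ$ only on a vanishing initial segment. It replaces the first $\lfloor M^\zeta\rfloor$ summands by $U_i^\circ$ at deterministic cost $O(M^{\zeta-1/2})$. It then applies Theorem~4.1 of \cite{kutta:kokoszka:2025} directly to the adaptive sequence on the early window $M\le k\le M^{1+\zeta}$, with A1 supplying the weak dependence of the summed process and A2 the polynomial approximation by its stationary limit. The late window $k>M^{1+\zeta}$ is handled separately by Lemma~\ref{lem:lil_scale_bound}, which uses Prokhorov bounds on geometric blocks $n_\ell=\lceil e^\ell\rceil$ plus Fernique's theorem, together with $\log(20+k/M)\ge c_\zeta\log k\gg\log\log(k+e)$ in that range. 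Your fallback of invoking \cite{kutta:kokoszka:2025} directly is this route. Taking it, however, gives up the infinite-horizon uniformity that KMT supplied for free, so you would also need the early/late split and an LIL-type bound of this kind.
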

We now prove the bounded law of the iterated logarithm that was used
in the previous proof. The next result uses the definition of the
Prokhorov metric, taken from~\cite{kutta:kokoszka:2025}.
\begin{definition}
Let $(\mathcal{M}, d_\mathcal{M})$ be a generic metric space.
We define for a set $A \subset \mathcal{M}$ and $\chi>0$ the
$\chi$-environment
\[
A^\chi:=\{x \in \mathcal{M}|\,\, \exists a \in A:
d_\mathcal{M}(x,a) <\chi\}.
\]
We call a set $A$ measurable if it is a Borel set in $\mathcal{M}$
with respect to $d_\mathcal{M}$. Now we can define the Prokhorov
distance of two probability measures $\mathbb{P}_1, \mathbb{P}_2$ as
\begin{align*}
&\pi(\mathbb{P}_1,\mathbb{P}_2) := \\
&\inf\Big\{ \chi>0 :
\mathbb{P}_1(A)\le \mathbb{P}_2(A^\chi)+\chi
\ \text{and}\
\mathbb{P}_2(A)\le \mathbb{P}_1(A^\chi)+\chi
\ \forall A
\Big\}.
\end{align*}

\end{definition}
The Prokhorov metric is a somewhat technical concept: it measures the
distance between two probability measures, taking into account the
topology of the underlying metric space. This is different from
alternatives such as the total variation metric, which poses stricter
requirements that do not, in the same way, refer to topology. A
drawback of total variation is that total variation convergence often
fails to hold, even when weak convergence holds. In contrast,
Prokhorov convergence is closely related to weak convergence; on
separable spaces it is even equivalent to it. It therefore gives a
natural metric for proximity of distributions.\\
In the next lemma, we impose proximity of two stochastic processes
with respect to the Prokhorov metric. The underlying metric space is
that of bounded functions on the unit interval $[0,1]$, equipped with
the supremum norm. We also use the notation that, with only minor
abuse, we write $\pi(X,Y)$ for random variables $X,Y$ when we mean
$\pi(\mathbb{P}_X,\mathbb{P}_Y)$, identifying random variables with
their distributions. The proof is an adaptation of the proof strategy
of Corollary~2 in~\cite{kutta:kokoszka:2025}.\\
For the next lemma, we define for a general triangular array of
bounded random variables $(U_{i,M})_{i,M}$ the sum
\[
 S_M'(k)=\sum_{i=1}^k
  \bigl(U_{i,M}-\mathbb E U_{i,M}\bigr).
\]
\begin{lemma}[Proof in Appendix~\ref{app:adaptive-classifier-proofs}]
\label{lem:lil_scale_bound}
Define the partial sum process
\[
  P_{M,n}(t)
  :=
  \frac{1}{\sqrt n}S_M'(\lfloor nt\rfloor),
  \qquad 0\le t\le1.
\]
Suppose it satisfies, uniformly over $M$, the Prokhorov approximation
bound
\[
 \pi\bigl(P_{M,n},\sigma W\bigr)
  \le C n^{-\tau}
  \tag{PA}
  \label{eq:prokhorov_approx}
\]
for some constants $C,\tau>0$, where $W$ is standard Brownian motion
on $[0,1]$ and $\sigma^2>0$. Also assume that the individual variables
$U_{i,M}$ are bounded. Then
\[
  \sup_{k\ge1}
  \left|
    \frac{S_M'(k)}
         {\sqrt{k\log(\log(k+e))}}
  \right|
  =
  \mathcal O_P(1),
  \qquad M\to\infty .
\]
\end{lemma}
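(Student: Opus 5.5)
The proof is a dyadic blocking argument. On each block we use the Prokhorov bound \eqref{eq:prokhorov_approx} to transfer maximal inequalities from Brownian motion to the partial sums. The summable errors $n^{-\tau}$ then let us take a union bound over all blocks. Fix $x>0$ large; we need $\mathbb P\big(\sup_{k\ge1}|S_M'(k)|/\sqrt{k\log\log(k+e)}>x\big)\le\epsilon(x)$ uniformly in $M$, with $\epsilon(x)\to0$ as $x\to\infty$.

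First, the range $k\le k_0$ for a fixed $k_0$ is trivial. Since the $U_{i,M}$ are bounded, say $|U_{i,M}|\le B$, we have $|S_M'(k)|\le 2Bk$, and this is bounded by a constant depending only on $k_0$.

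For $k>k_0$, set $n_j:=2^j$ and consider the blocks $k\in(2^{j-1},2^j]$. On such a block,
\[
\frac{|S_M'(k)|}{\sqrt{k\log\log(k+e)}}\le \frac{\sqrt2\,\sup_{t\le1}|P_{M,n_j}(t)|}{\sqrt{\log\log(2^{j-1}+e)}}.
\]
Write $\ell_j:=\sqrt{\log\log(2^{j-1}+e)}\asymp\sqrt{\log j}$, and let $A_j$ be the closed set of functions $g$ with $\sup_t|g(t)|\ge x\ell_j/\sqrt2$. Since $P_{M,n}$ is a step function and $W$ is continuous, all such sup-norm sets are fine for the Prokhorov inequality, with $\sup$-norm measurability handled as in \cite{kutta:kokoszka:2025}. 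With $\chi_j:=Cn_j^{-\tau}$, \eqref{eq:prokhorov_approx} gives
\[
\mathbb P(P_{M,n_j}\in A_j)\le \mathbb P(\sigma W\in A_j^{\chi_j})+\chi_j\le \mathbb P\Big(\sup_{t\le1}|W(t)|\ge \tfrac{x\ell_j}{\sqrt2\sigma}-\tfrac{\chi_j}{\sigma}\Big)+Cn_j^{-\tau}.
\]
By the reflection principle, $\mathbb P(\sup_{t\le 1}|W(t)|\ge u)\le 4e^{-u^2/2}$. Since $\chi_j\le C$, for large $x$ the first term is at most $4\exp(-c\,x^2\log j)=4j^{-cx^2}$ for some $c>0$ depending on $\sigma$.

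Summing over $j\ge j_0$ gives
\[
\sum_j \big(4j^{-cx^2}+C2^{-j\tau}\big),
\]
which is finite once $cx^2>1$. It can be made arbitrarily small by taking $j_0$ (equivalently $k_0$) large and then $x$ large. All constants are uniform in $M$, so the bound holds uniformly and yields $\mathcal O_P(1)$.

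The main obstacle is the iterated-log normalization. The term $j^{-cx^2}$ is summable only because of the $\sqrt{\log\log}$ factor, and without it the Brownian term would not be summable. The Prokhorov error $\chi_j$ must also be summable over the blocks and must not interfere with the threshold; this is exactly why we use dyadic blocks, whose number up to $k$ is only logarithmic. The remaining technical points are that $P_{M,n_j}$ on $[0,1]$ covers all $k\le n_j$, and the measurability of sup-norm events in the nonseparable space of bounded functions. We would handle the latter by working with closed sup-balls, as in Corollary~2 of \cite{kutta:kokoszka:2025}.
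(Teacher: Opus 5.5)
Your proposal is correct and follows essentially the same route as the paper. Both arguments use geometric blocking (you use $2^j$, the paper $\lceil e^\ell\rceil$) and transfer the blockwise sup-norm event to $\sigma W$ via (PA) uniformly in $M$; both then need a Gaussian tail bound for $\|W\|_\infty$ so that the $\sqrt{\log\log}$ normalization yields a summable $j^{-cx^2}$ (you use the reflection principle, the paper uses Fernique plus Markov), and both finish with summable $n^{-\tau}$ errors and a deterministic bound on the finite initial range. One small fix: since $\pi$ is an infimum, apply the Prokhorov inequality with, e.g., $\chi_j=2Cn_j^{-\tau}$ rather than exactly $Cn_j^{-\tau}$; this changes nothing else.
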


For the next result, suppose that Algorithm~\ref{alg:seq-audit} is run
with the outputs of the adaptive classifiers $R_k$ described in this
section. Recall that $\phi$ denotes the limiting classifier appearing
in Assumption~\ref{ass:adaptive}.
\begin{theo}[Proof in Appendix~\ref{app:adaptive-classifier-proofs}]
\label{theo:adaptive_validity}
Suppose the assumptions of Lemma~\ref{lem:adaptive_gaussian_approx}
hold for both coordinates $j=1,2$.
\begin{itemize}
    \item[a)] Suppose that the claimed $f_{\claim}$-DP guarantee is
    satisfied. Then
    {\small
    \begin{align*}
        \limsup_{M\to\infty}
      \mathbb{P}\!\left(\textnormal{\textsc{APT} rejects for some } k\ge M\right)
      \le \gamma .
    \end{align*}
    }
    \item[b)] Suppose that the claimed $f_{\claim}$-DP guarantee is
    violated and that $f_{\claim}(\alpha_\phi)>\beta_\phi.$
    Then, for $k_{\max}=\infty$,
    {\small
        \begin{align*}
            \lim_{M\to\infty}
          \mathbb{P}\!\left(\textnormal{\textsc{APT} rejects for some } k\ge M\right)
          =1 .
        \end{align*}
    }
\end{itemize}
\end{theo}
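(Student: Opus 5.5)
We follow the template of the proof of Theorem~\ref{theo:main}, with Lemma~\ref{lem:adaptive_gaussian_approx} taking the place of the KMT step. Let $\alpha_\phi:=\mathbb E\phi(X_1)$ and $\beta_\phi:=1-\mathbb E\phi(Y_1)$ for the limiting classifier $\phi$ of A2).

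\textbf{Step 1: from the centred processes to the errors of $\phi$.} The auditor's statistics are $\hat\alpha_k=\alpha_\phi+k^{-1}S_1(k)+k^{-1}\sum_{i\le k}(\mathbb E U_{1,i}-\alpha_\phi)$, and similarly for $\hat\beta_k$. By A2), $|\mathbb E U_{j,i}-\mathbb E U^\circ_{j,i}|\le C_\phi i^{-\nu}$. Hence the bias sum is $\mathcal O(k^{1-\nu}+\log k)$. After division by $\sqrt{k\log(20+k/M)}$, its supremum over $k\ge M$ is $o(1)$ when $\nu>1/2$.

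If only $\nu>0$ is available, we absorb the bias into the Gaussian approximation differently. Under A1)--A2), Lemma~\ref{lem:adaptive_gaussian_approx} is really a statement about $D_{j,M}$ centred at $\mathbb E U_{j,i}$. We would then add the requirement that the rejection rule use the same centring, or assume $\nu>1/2$. I expect this bookkeeping of the bias to be the main obstacle. I would first try to show that A2) plus boundedness gives $\sup_{k\ge M}|\sum_{i\le k}(\mathbb E U_{j,i}-\mathbb E U^\circ_{j,i})|/\sqrt{k\log(20+k/M)}\to 0$, using the $\log(20+k/M)$ factor together with $k\ge M$.

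\textbf{Step 2: variance estimation.} The variance plug-in $s_j=\sqrt{\hat T_j(1-\hat T_j)}$ converges in probability to $\sigma_j$, uniformly in $k\ge M$. This follows because $\hat\alpha_k\to\alpha_\phi$ uniformly for $k\ge M$, which is in turn a consequence of Lemma~\ref{lem:lil_scale_bound}: the increments are $\mathcal O_P(\sqrt{\log\log k/k})$. We also use that $U^\circ_{j,1}$ is Bernoulli, so $\sigma_j^2=\alpha_\phi(1-\alpha_\phi)$.

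\textbf{Step 3: part a).} Since $f$-DP holds, $(\alpha_\phi,\beta_\phi)$ lies on or above $T\ge f_{\claim}$. Consider the event
\[
\hat\alpha_k-\alpha_\phi\ge -q\,s_1b_k \quad\text{and}\quad \hat\beta_k-\beta_\phi\ge -q\,s_2b_k \quad \text{for all } k\ge M.
\]
On this event the adjusted pair $(\Teins,\Tzwei)$ dominates $(\alpha_\phi,\beta_\phi)$ coordinatewise, after clipping to $[0,1]$. Because $f_{\claim}$ is non-increasing, this gives $\Tzwei\ge\beta_\phi\ge f_{\claim}(\alpha_\phi)\ge f_{\claim}(\Teins)$, so APT does not reject. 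The sign convention of Algorithm~\ref{alg:conf-adj} must be read accordingly, with the adjustment pointing in the conservative direction.

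By Lemma~\ref{lem:adaptive_gaussian_approx} applied to $-D_{j,M}$ (the Brownian motion is symmetric), each coordinate event fails with probability tending to $P(\sup_{x\ge1}W(x)/\sqrt{x\log(20+x)}>q_{1-\gamma/2})=\gamma/2$. Slutsky's theorem and Step 2 handle the estimated variance. A Bonferroni bound over the two coordinates then gives $\limsup_{M\to\infty}\le\gamma$.

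\textbf{Step 4: part b).} Lemma~\ref{lem:lil_scale_bound} gives $|\hat\alpha_k-\alpha_\phi|=\mathcal O_P(\sqrt{\log\log k/k})$, and the margin $b_k\to 0$. Therefore $(\Teins,\Tzwei)\to(\alpha_\phi,\beta_\phi)$ in probability along $k\to\infty$. Since $f_{\claim}$ is convex, it is continuous on $(0,1)$. The strict inequality $f_{\claim}(\alpha_\phi)>\beta_\phi$ then implies $\Tzwei<f_{\claim}(\Teins)$ for some large $k$, with probability tending to one. This yields rejection with probability tending to one as $M\to\infty$, with $k_{\max}=\infty$.
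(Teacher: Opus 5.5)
Your part a) does not go through under the stated hypotheses, and the failure is exactly the bias issue you flagged in Step 1. You centre at the limiting pair $(\alpha_\phi,\beta_\phi)$, so your confidence event requires $\alpha_\phi\le\hat\alpha_k+q\,s_1b_k$ for all $k\ge M$. Lemma~\ref{lem:adaptive_gaussian_approx}, however, only localizes $\hat\alpha_k$ around $\bar\alpha_k:=k^{-1}\sum_{i\le k}\mathbb E U_{1,i}$. Assumption A2) allows $\alpha_\phi-\bar\alpha_k$ to be of exact order $k^{-\nu}$ with a fixed sign. For $\nu\le 1/2$ this is not small compared with $b_k\asymp\sqrt{\log(20+k/M)/k}$. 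For $\nu<1/2$ the normalized bias $k^{1/2-\nu}/\sqrt{\log(20+k/M)}$ is even unbounded in $k$ for fixed $M$. So the bound $\sup_{k\ge M}|\sum_{i\le k}(\mathbb E U_{j,i}-\mathbb E U^\circ_{j,i})|/\sqrt{k\log(20+k/M)}\to0$ cannot be extracted from A2) plus boundedness, and your event can have probability tending to zero. Assuming $\nu>1/2$ proves a weaker theorem than the one stated. Changing the centring in the rejection rule is not an option either, since the rule only sees $(\hat\alpha_k,\hat\beta_k)$.

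The missing idea, which is how the paper argues, is to drop $(\alpha_\phi,\beta_\phi)$ from part a) entirely. Centre instead at the running mean $\bar\theta_k=(\bar\alpha_k,\bar\beta_k)=k^{-1}\sum_{i\le k}\mathbb E R_i$, which is exactly what the lemma controls. Because $\hat\phi_i$ is built from $\varepsilon_i$ and past data, both independent of $(X_i,Y_i)$, $\mathbb E R_i$ is the error pair of a randomized classifier (average $\hat\phi_i$ over the past). Under the null it therefore lies on or above $T\ge f_{\claim}$. Convexity of $f_{\claim}$ makes its epigraph convex, so $\bar\beta_k\ge f_{\claim}(\bar\alpha_k)$ for every $k$. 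The simultaneous event $\{\bar\alpha_k\le\hat T_\alpha,\ \bar\beta_k\le\hat T_\beta\ \forall k\ge M\}$ has asymptotic probability at least $1-\gamma$ by the lemma, ConfAdj and Bonferroni; your symmetry, Slutsky and variance-consistency steps carry over unchanged. On that event monotonicity gives $f_{\claim}(\hat T_\alpha)\le f_{\claim}(\bar\alpha_k)\le\bar\beta_k\le\hat T_\beta$, so APT cannot reject, with no rate restriction on $\nu$. The epigraph argument absorbs the transient bias of both coordinates jointly, which a coordinatewise comparison with $(\alpha_\phi,\beta_\phi)$ cannot do.

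Your part b) essentially matches the paper's argument. One correction: Lemma~\ref{lem:lil_scale_bound} gives $\hat\alpha_k-\bar\alpha_k=\mathcal O_P(\sqrt{\log\log k/k})$, not the same rate for $\hat\alpha_k-\alpha_\phi$. The additional deterministic term $\bar\alpha_k-\alpha_\phi\to0$ (Ces\`aro plus A2)) is all that is needed there.
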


%% file: 104-full-version-supplementary-material.tex
\section{Proofs for the Adaptive-Classifier Theory}
\label{app:adaptive-classifier-proofs}
\begin{proof}[Proof of Lemma~\ref{lem:adaptive_gaussian_approx}]
We prove the statement for a fixed coordinate $j$ and suppress $j$
from the notation. Thus write
\[
  U_k=e_j^\top R_k,
  \qquad
  U_k^\circ=e_j^\top R_k^\circ,
\]
\[
  S(k)=\sum_{i=1}^k(U_i-\mathbb E U_i),
  \qquad
  D_M(k)=\frac{S(k)}{\sqrt{k\log(20+k/M)}}.
\]
The proof follows an early--late decomposition of the monitoring
indices. Let $\zeta\in(0,1/2)$ be a sufficiently small parameter,
specified below, and define
\[
  \mathcal E_M:=\{M\le k\le M^{1+\zeta}\},
  \qquad
  \mathcal L_M:=\{k>M^{1+\zeta}\}.
\]
Then
\[
  \sup_{k\ge M}D_M(k)
  =
  \max\left\{
    \sup_{k\in\mathcal E_M}D_M(k),
    \sup_{k\in\mathcal L_M}D_M(k)
  \right\}.
\]
We first replace the first few observations by their limiting
counterparts. For this purpose, put
\[
  m_M:=\lfloor M^\zeta\rfloor
\]
and define
\[
  U_{i,M}
  :=
  \begin{cases}
    U_i^\circ, & 1\le i\le m_M,\\
    U_i,       & i>m_M.
  \end{cases}
\]
Let
\[
  S_M'(k):=\sum_{i=1}^k
  \bigl(U_{i,M}-\mathbb E U_{i,M}\bigr),
  \quad
  D_M'(k):=
  \frac{S_M'(k)}
       {\sqrt{k\log(20+k/M)}} .
\]
Since $U_i,U_i^\circ\in\{0,1\}$, the centered sums defining $D_M(k)$
and $D_M'(k)$ can differ only through the first $m_M$ terms. Hence
\[
  \sup_{k\ge M}
  |D_M(k)-D_M'(k)|
  \le
  \frac{2m_M}{\sqrt{M\log(21)}}
  =
  O(M^{\zeta-1/2})
  =
  o(1).
\]
It is therefore enough to prove the Gaussian approximation for
$D_M'$.\\
We next consider the early range $\mathcal E_M$. The variables
$U_{i,M}$ may be identified with random functions $f_{U_{i,M}}$ that
are constant on $[0,1]$, by defining
\[
  f_{U_{i,M}}(x)\equiv U_{i,M}.
\]
Under this identification, the functional assumptions in
Theorem~4.1 of~\cite{kutta:kokoszka:2025} reduce to ordinary moment,
weak-dependence, and approximation conditions. The moment conditions
are automatic because $U_{i,M}\in\{0,1\}$. The Hölder modulus of
continuity of $f_{U_{i,M}}$ is identically zero. The required weak
dependence follows from Assumption~A1; moreover, the limiting
variables $U_i^\circ$ are independent over time, so replacing the
first $m_M$ terms by $U_i^\circ$ does not increase the relevant serial
dependence. Finally, Assumption~A2 gives the required polynomial
approximation to the stationary limiting process $(U_i^\circ)_{i\ge1}$.
Choosing $\zeta>0$ sufficiently small, the rate restrictions in
Theorem~4.1 of~\cite{kutta:kokoszka:2025} are satisfied on the
expanding range $\mathcal E_M$.\\
Consequently, by Theorem~4.1 of~\cite{kutta:kokoszka:2025}, we may
construct a standard Brownian motion $W$ such that
\[
  \sup_{k\in\mathcal E_M}
  \left|
    \frac{S_M'(k)}
         {\sqrt{k\log(20+k/M)}}
    -
    \frac{\sigma W(k/M)}
         {\sqrt{(k/M)\log(20+k/M)}}
  \right|
  =
  o_P(1).
\]
Equivalently,
\[
  \sup_{k\in\mathcal E_M}D_M'(k)
  =
  \sup_{1\le x\le M^\zeta}
  \frac{\sigma W(x)}
       {\sqrt{x\log(20+x)}}
  +o_P(1),
\]
where we have used the Brownian scaling relation.\\
It remains to control the late range. First notice that the
corresponding Brownian tail is negligible, since
\[
  \sup_{x>M^\zeta}
  \frac{|W(x)|}{\sqrt{x\log(20+x)}}
  \to 0
  \qquad\text{a.s.}
\]
by the law of the iterated logarithm for Brownian motion. The more
difficult part is now to show that
\[
  \sup_{k\in\mathcal L_M}|D_M'(k)|=o_P(1).
\]
To establish it, it will suffice to show that
\begin{align} \label{e:LIL}
  \sup_{k \ge 1}
  \left|
    \frac{S_M'(k)}
         {\sqrt{k\log(\log(k+e))}}\right|
  =
  \mathcal{O}_P(1).
\end{align}
This result follows from Lemma~\ref{lem:lil_scale_bound}, whose
condition is verified by Theorem~4.1 of~\cite{kutta:kokoszka:2025}
as explained below. With \eqref{e:LIL} in hand, we conclude that
\begin{align*}
    & \sup_{k\in\mathcal L_M}|D_M'(k)| \\
    =&  \sup_{k >M^{1+\zeta}}
  \left|
    \frac{S_M'(k)}
         {\sqrt{k \log(\log(k+e))}}
    \sqrt{\frac{\log(\log(k+e))}{\log(20+k/M)} }
  \right| \\
  \le & \mathcal{O}_P(1)
  \sup_{k >M^{1+\zeta}}
  \sqrt{\frac{\log(\log(k+e))}{\log(20+k/M)} }.
\end{align*}
For $k>M^{1+\zeta}$ we have
\[
  M<k^{1/(1+\zeta)}
  \qquad\text{and hence}\qquad
  \frac{k}{M}>k^{\zeta/(1+\zeta)}.
\]
Therefore, for all sufficiently large $k$,
\[
  \log(20+k/M)
  \ge
  c_\zeta \log k
\]
with $c_\zeta>0$. Since
\[
  \frac{\log(\log(k+e))}{\log k}\to0,
\]
we obtain
\[
  \sup_{k >M^{1+\zeta}}
  \sqrt{\frac{\log(\log(k+e))}{\log(20+k/M)} }
  \to0.
\]
Consequently,
\[
  \sup_{k\in\mathcal L_M}|D_M'(k)|=o_P(1).
\]
Combining the early and late ranges yields first
\[
  \sup_{k\ge M}D_M(k)
  =
  \sup_{1\le x\le M^\zeta}
  \frac{\sigma W(x)}
       {\sqrt{x\log(20+x)}}
  +o_P(1),
\]
and then, using the negligible Brownian tail,
\[
  \sup_{1\le x\le M^\zeta}
  \frac{\sigma W(x)}
       {\sqrt{x\log(20+x)}}
  =
  \sup_{x\ge1}
  \frac{\sigma W(x)}
       {\sqrt{x\log(20+x)}}
  +o_P(1).
\]
This proves the claim.
\end{proof}

\begin{proof}[Proof of Lemma~\ref{lem:lil_scale_bound}]
Let
\[
  n_\ell:=\lceil e^\ell\rceil,
  \qquad \ell\ge1 .
\]
For $n_\ell\le k\le n_{\ell+1}$, we have
\[
  k\ge n_\ell,
  \qquad
  \log(\log(k+e))\ge c\log \ell
\]
for some numerical constant $c>0$ and all sufficiently large $\ell$.
Consequently,
\begin{align*}
  \max_{n_\ell\le k\le n_{\ell+1}}
  \frac{|S_M'(k)|}
       {\sqrt{k\log(\log(k+e))}}
  &\le
  C
  \max_{n_\ell\le k\le n_{\ell+1}}
  \frac{|S_M'(k)|}
       {\sqrt{n_\ell\log \ell}}        \\
  &\le
  C
  \frac{\sqrt{n_{\ell+1}}}{\sqrt{n_\ell}}
  \frac{\|P_{M,n_{\ell+1}}\|_\infty}
       {\sqrt{\log \ell}}              \\
  &\le
  C
  \frac{\|P_{M,n_{\ell+1}}\|_\infty}
       {\sqrt{\log \ell}},
\end{align*}
where $\|\cdot\|_\infty$ denotes the supremum norm on $[0,1]$ and
where we used $n_{\ell+1}/n_\ell\le C e$.\\
We now analyze the right-hand side. Let
\[
  \delta_\ell:= C n_{\ell+1}^{-\tau}.
\]
By the Prokhorov approximation \eqref{eq:prokhorov_approx}, we get
for any $A>0$
\[
  \mathbb P\left(
    \frac{\|P_{M,n_{\ell+1}}\|_\infty}{\sqrt{\log \ell}}
    > A\sigma
  \right)
  \le
  \mathbb P\left(
    \|\sigma W\|_\infty
    >
    A\sigma\sqrt{\log\ell}-\delta_\ell
  \right)
  +
  \delta_\ell .
\]
Since $n_{\ell+1}\ge e^\ell$, we have
\[
  \delta_\ell\le C e^{-\tau\ell}.
\]
For all large $\ell$, the threshold on the Gaussian term is bounded
below by $cA\sigma\sqrt{\log\ell}$ for a numerical constant $c>0$.
Thus it suffices to bound
\[
  \mathbb P\left(
    \|W\|_\infty> cA\sqrt{\log\ell}
  \right).
\]
By Fernique's theorem there exists $C_0>0$ such that
\[
  \mathbb E
  \exp\left(
    \frac{\|W\|_\infty^2}{C_0^2}
  \right)
  <\infty .
\]
Choose $A$ sufficiently large so that
\[
  \lambda_A:=\frac{2}{c^2A^2}
  \le \frac1{C_0^2}.
\]
Then, by Markov's inequality,
\[
  \mathbb P\left(
    \|W\|_\infty> cA\sqrt{\log\ell}
  \right)
  \le
  \mathbb E\exp\{\lambda_A\|W\|_\infty^2\}\ell^{-2}
  \le
  C_A\ell^{-2}.
\]
Therefore,
\[
  \sum_{\ell=1}^\infty
  \mathbb P\left(
    \frac{\|P_{M,n_{\ell+1}}\|_\infty}{\sqrt{\log \ell}}
    > A\sigma
  \right)
  \le
  \sum_{\ell=1}^\infty
  \{C_A\ell^{-2}+Ce^{-\tau\ell}\}
  <\infty,
\]
with the right-hand side independent of $M$.\\
It remains to translate this blockwise summability into the asserted
$\mathcal O_P(1)$ bound. For any $L\ge1$,
\begin{align*}
   &\mathbb P\left(
      \sup_{k\ge n_L}
      \frac{|S_M'(k)|}
           {\sqrt{k\log(\log(k+e))}}
      > A\sigma
    \right)\\
   &\le
   \sum_{\ell=L}^\infty
   \mathbb P\left(
     C
     \frac{\|P_{M,n_{\ell+1}}\|_\infty}
          {\sqrt{\log \ell}}
     > A\sigma
   \right).
\end{align*}
By the summability just proved, the right-hand side tends to zero as
$L\to\infty$, uniformly in $M$.\\
Now fix $\epsilon>0$. Choose $L$ so large that the last tail
probability is smaller than $\epsilon$. For the finite initial part,
boundedness of the variables $U_{i,M}$ implies that
\[
  \max_{1 \le k \le n_L}
  \frac{|S_M'(k)|}
       {\sqrt{k\log(\log(k+e))}}
\]
is bounded by a finite deterministic constant, depending only on $L$.
Increasing $A$ if necessary, we therefore get
\[
  \mathbb P\left(
    \sup_{1\le k<\infty}
    \frac{|S_M'(k)|}
         {\sqrt{k\log(\log(k+e))}}
    > A\sigma
  \right)
  \le \epsilon .
\]
This proves
\[
  \sup_{k\ge1}
  \left|
    \frac{S_M'(k)}
         {\sqrt{k\log(\log(k+e))}}
  \right|
  =
  \mathcal O_P(1).
\]
\end{proof}

\begin{proof}[Proof of Theorem~\ref{theo:adaptive_validity}]
Write
\[
  \bar\theta_k
  :=
  \mathbb E\hat\theta_k
  =
  \begin{pmatrix}
    \bar\alpha_k\\
    \bar\beta_k
  \end{pmatrix}
  :=
  \frac1k\sum_{i=1}^k \mathbb E R_i .
\]
Since the claimed $f_{\claim}$-DP guarantee holds, every
classifier has an error pair lying on or above the claimed tradeoff
curve.  Hence, for each $i$,
$  \mathbb E R_i$
lies on or above the curve $f_{\claim}$. Since the epigraph of
$f_{\claim}$ is convex, the running average $\bar\theta_k$ also lies
on or above the curve. Thus
\[
  \bar\beta_k\ge f_{\claim}(\bar\alpha_k),
  \qquad k\ge1 .
\]
By Lemma~\ref{lem:adaptive_gaussian_approx}, applied coordinatewise
and combined with the confidence adjustment
$\textproc{ConfAdj}$ used in Algorithm~\ref{alg:seq-audit}, we obtain
paired one-sided confidence intervals
\[
(I_{k,M}^{\alpha},I_{k,M}^{\beta})
  =
  \bigl([0,\hat T_{\alpha}^{k,M}],
        [0,\hat T_{\beta}^{k,M}]\bigr)
\]
such that
\[
  \liminf_{M\to\infty}
  \mathbb P\left(
    (\bar\alpha_k,\bar\beta_k)
    \in I_{k,M}^{\alpha}\times I_{k,M}^{\beta}
    \text{ for all } k\ge M
  \right)
  \ge 1-\gamma .
\]
On this event, for every $k\ge M$,
\[
  \bar\alpha_k\le \hat T_{\alpha}^{k,M},
  \qquad
  \bar\beta_k\le \hat T_{\beta}^{k,M}.
\]
Since $f_{\claim}$ is non-increasing,
\[
  f_{\claim}(\hat T_{\alpha}^{k,M})
  \le
  f_{\claim}(\bar\alpha_k)
  \le
  \bar\beta_k
  \le
  \hat T_{\beta}^{k,M}.
\]
Therefore the rejection condition
\[
  \hat T_{\beta}^{k,M}<f_{\claim}(\hat T_{\alpha}^{k,M})
\]
cannot occur on the simultaneous confidence event. Hence a false
rejection can only occur on the complement of that event, whose
asymptotic probability is at most $\gamma$. This proves part~a).\\
For part~b), Assumption~\ref{ass:adaptive} implies that
$\mathbb E R_i\to(\alpha_\phi,\beta_\phi)^\top$, and hence
\[
  \bar\theta_k\to(\alpha_\phi,\beta_\phi)^\top .
\]
If $f_{\claim}(\alpha_\phi)>\beta_\phi$, then the running
mean pair $\bar\theta_k$ eventually lies below the claimed curve by a
positive margin. The Gaussian approximation in
Lemma~\ref{lem:adaptive_gaussian_approx} implies that the empirical
pair and its confidence adjustment also get arbitrarily close to $\bar\theta_k$. Consequently, for large enough $k$, the inequality
\[
  \hat T_{\beta}^{k,M}<f_{\claim}(\hat T_{\alpha}^{k,M})
\]
holds with probability tending to one. This proves part~b).
\end{proof}

\section{Additional Simulations}\label{app:add_sim}
In this section, we provide a bigger picture and compare our sequential procedure to two non-sequential references: a baseline that uses the same techniques as our methodology, but with a fixed batch size, i.e. it performs only a one-shot $t$-test at fixed sample size $n$. Second, we compare ourselves to the $f$-DP auditor proposed in \cite{askingeneral} - a state-of-the-art method. \\
Sequential monitoring necessarily introduces additional sampling costs through time-uniform guarantees, relative to a test with a fixed sample size $n$ - at least if the test is efficient and $n$ is optimally calibrated. Recall that such optimal calibration is practically impossible but serves as an optimal theoretical benchmark. Accordingly we cannot expect, to outperform the non-sequential baseline in terms of power. \\
We report power curves of the three methods in Figures \ref{ref:fig_comp_1} and \ref{ref:fig_comp_2}. The blue line reports the rejection rate at any sample size $n$, the green line the rejection rate for a one-sided $t$-test (using the rest of our new subroutines to make an optimal classifier) with a fixed sample size and the orange line, the method by \cite{askingeneral}, also for a fixed sample size $n$. The comparison is somewhat favorable to the non-sequential methods, because the fixed sample tests can exhaust their entire significance level $\gamma$ (here $0.05$ for all tests), whereas the sequential procedure has a significance level $<0.05$ on any finite interval. Yet, it is difficult to come up with another comparison. \\
Our simulation results have two implications: First, comparing the green and blue line, we see that (as expected) the non-sequential method has higher power than the sequential version. However, the power difference is always moderate, and sequential testing requires the same sample size times a moderate factor as the non-sequential version for equal power. Again, recall that the gap is actually even smaller because we are comparing a sequential procedure for an infinite time horizon with a fixed-batch test. Using our sequential procedure would effectively prevent analysts from using unnecessarily large samples (several orders of magnitude too large) compared to the theoretical optimum.\\
Second, compared to the $f$-DP auditor \cite{askingeneral}, our method achieves consistently higher empirical rejection rates at all stopping times. This is a surprisingly positive result and reflects that relying on generic concentration bounds (Hoeffding in the case of \cite{askingeneral}) is oftentimes suboptimal for statistical inference. Here, even when adding the cost of sequential testing to our procedure, we still substantially outperform \cite{askingeneral}. 
We also believe that our new, geometric method to calibrate optimal classifiers (see Section \ref{sub:crit_eta}) contributes to this discrepancy.

\begin{figure}[H]
    \centering
    \includegraphics[width=.9\linewidth]{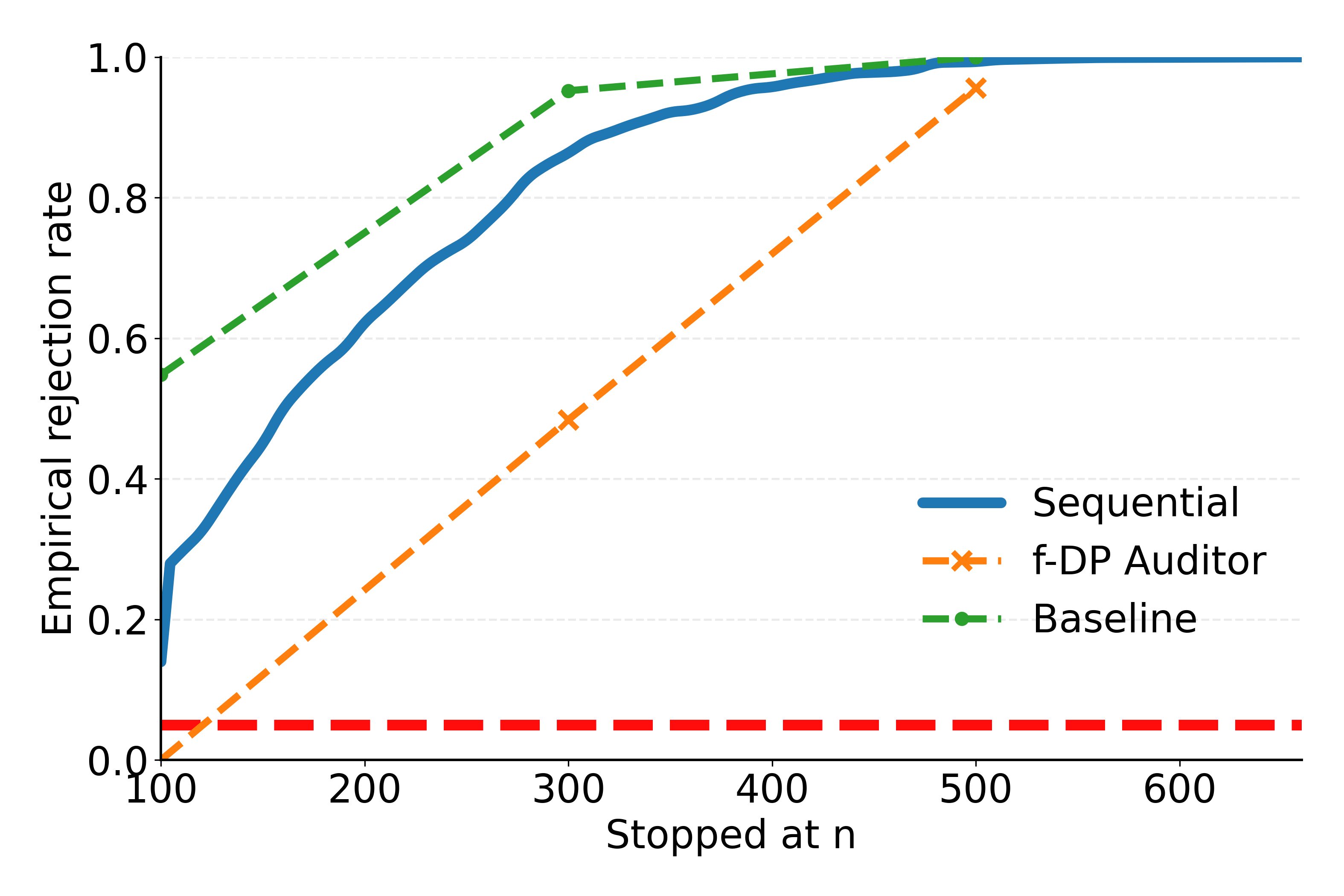}
    \caption{\label{ref:fig_comp_1} Comparison for $\mu_{\text{claim}}=0.5$ (in reality $\mu=1$) between one-shot t-test (green curve), Algorithm \ref{alg:seq-audit} (blue curve) and \cite{askingeneral} (orange curve).}
\end{figure}

\begin{figure}[H]
    \centering
    \includegraphics[width=0.9\linewidth]{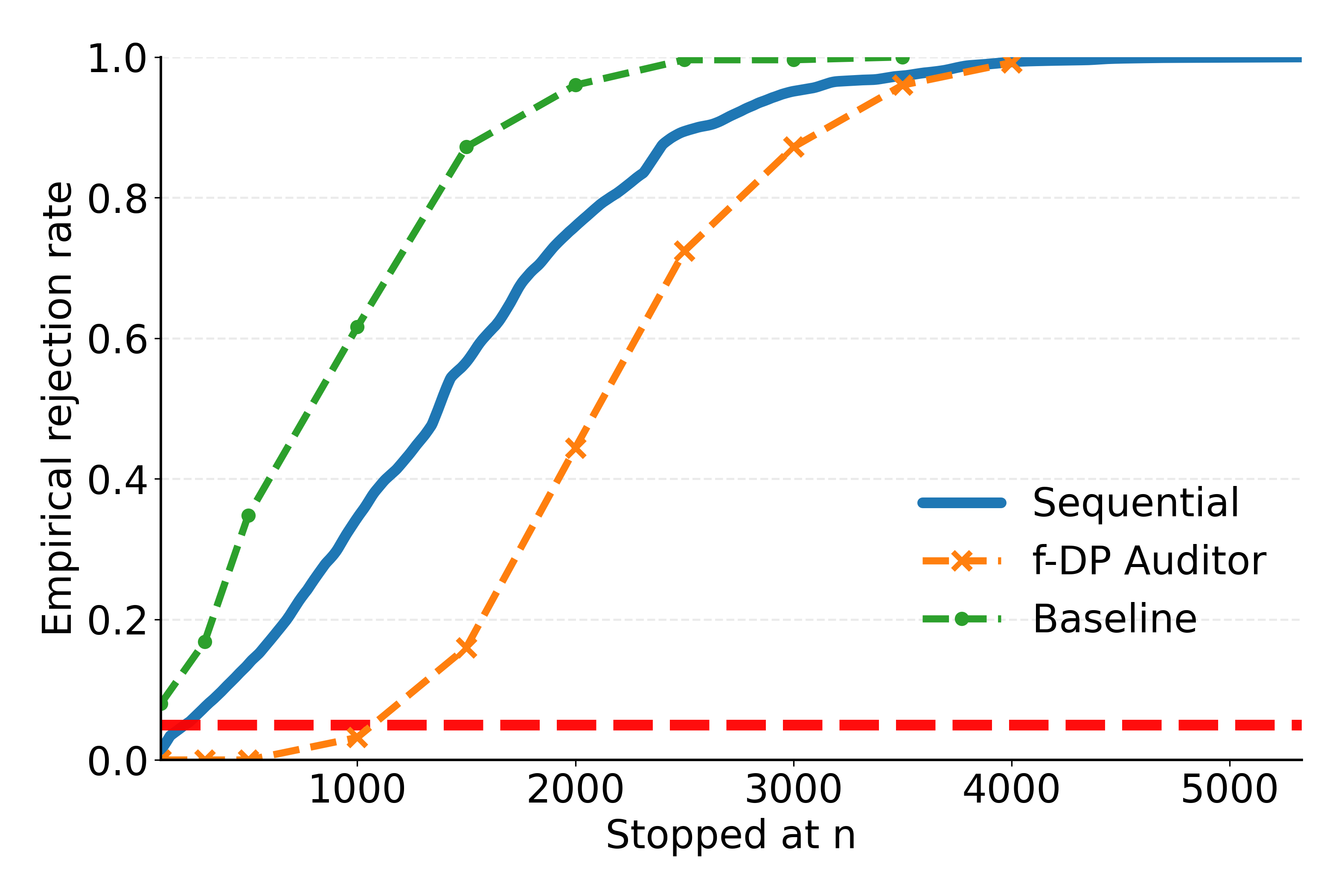}
    \caption{ \label{ref:fig_comp_2} Comparison for $\mu_{\text{claim}}=0.8$ (in reality $\mu=1$) between one-shot t-test (green curve), Algorithm \ref{alg:seq-audit} (blue curve) and \cite{askingeneral} (orange curve).}
\end{figure}